\newtheorem{lemma}{Lemma}
\newtheorem{assumption}{Assumption}
\newtheorem{definition}{Definition}
\newtheorem{theorem}{Theorem}
\newtheorem{corollary}{Corollary}
\newtheorem{proposition}{Proposition}
\title{Stationary Point Constrained Inference via Diffeomorphisms}
\author[1]{Michael Price}
\author[2]{Debdeep Pati}
\author[1]{Ning Ning}
\affil[1]{Department of Statistics, Texas A\&M University}
\affil[2]{Department of Statistics, University of Wisconsin-Madison}
\date{}
\begin{document}

\maketitle

\begin{abstract}
Stationary points or derivative zero crossings of a regression function correspond to points where a trend reverses, making their estimation scientifically important. Existing approaches to uncertainty quantification for stationary points cannot deliver valid joint inference when multiple extrema are present, an essential capability in applications where the relative locations of peaks and troughs carry scientific significance. We develop a principled framework for functions with multiple regions of monotonicity by constraining the number of stationary points. We represent each function in the diffeomorphic formulation as the composition of a simple template and a smooth bijective transformation, and show that this parameterization enables coherent joint inference on the extrema. This construction guarantees a prespecified number of stationary points and provides a direct, interpretable parameterization of their locations. We derive non-asymptotic confidence bounds and establish approximate normality for the maximum likelihood estimators, with parallel results in the Bayesian setting. Simulations and an application to brain signal estimation demonstrate the method’s accuracy and interpretability.
\end{abstract}

\section{Introduction}
\label{sec:intro}
Incorporating shape constraints into nonparametric regression models has proven to be a powerful tool in statistical analysis. From a practical standpoint, such constraints ensures that the fitted model adheres to essential scientific principles. Theoretically, restricting the function class to which the ground truth belongs typically reduces the generalization error relative to standard nonparametric approaches. Applications of shape-constrained nonparametric regression are widespread, including enforcing diminishing marginal returns in economic models \citep{johnson2018shape}, estimating monotonic dose-response relationships in biostatistics \citep{westling2020causal}, and modeling population dynamics in ecology \citep{citores2020modelling}. A rich literature exists on estimating regression functions under structural constraints such as monotonicity or convexity and on studying their asymptotic properties \citep{groeneboom2014nonparametric}. Given its broad applicability and the increased computational feasibility of nonparametric estimation, shape-constrained regression remains an active area of research \citep{samworth2018special}.

We begin with the standard univariate nonparametric regression model given by
\begin{equation}
Y_i = f(X_i) + \epsilon_i,
\end{equation}
where the covariate $X_i$ is independent and identically distributed (i.i.d.) samples drawn from a probability measure $\mathbb{P}_X$, the errors $\epsilon_i$ are i.i.d. as $N(0, \sigma^2)$, and the regression function $f(x)$ belongs to a specified function class. When the univariate ground-truth function $f(x)$ is not monotonic, a simple modeling strategy is to specify the number of \textit{stationary points}, i.e., points at which $f'(x)=0$. Joint estimation and inference of these locations are often of substantive interest. One motivating example is the analysis of electroencephalography (EEG) data to extract event-related potentials (ERPs), where peaks and troughs correspond to distinct stages of brain processing \citep{luck2012event}. However, the literature on inference for the number and location of local extrema remains relatively limited. \cite{davies2001local} developed two algorithms to find the best-fitting curve with the minimal number of local extrema. While their methods identify potential stationary points, they are prone to false positives. In addition, the lack of a likelihood-based formulation precludes principled uncertainty quantification.  

An alternative line of work first estimates the regression function using a flexible nonparametric smoother and then applies a hypothesis-testing procedure to identify potential extrema. \cite{song2006nonparametric} exploit the asymptotic distribution of the derivative of a local quadratic estimator to test for the existence of stationary points at selected locations. \cite{schwartzman2011multiple} and \cite{cheng2019multiple} adopt similar strategies for both one-dimensional and random-field settings, identifying local maxima of the smoothed function and applying multiple-testing corrections to control the false discovery rate or family-wise error rate.

\cite{yu2023bayesian} proposed a semiparametric Bayesian approach that models the unknown function as a Gaussian process, leveraging the joint Gaussianity of the process and its derivative to sample likely zero-derivative locations. A notable advantage of their approach is that the analyst needs only to specify a prior for a single stationary point $t$, while the posterior becomes multimodal with modes centered near true stationary points. Extending this framework, \cite{li2024semiparametric} derived a closed-form posterior for the scalar variable $t$ and demonstrated that it asymptotically converges to a mixture of normal distributions centered at the true stationary points. However, since $t$ is scalar-valued, this approach cannot be directly extended to perform joint inference over multiple stationary points.
Furthermore, it often places excessive posterior mass near the boundaries, necessitating additional post-processing to obtain stable point estimates and credible intervals.

\cite{wheeler2017bayesian} developed a Bayesian model that allows the analyst to constrain the number of stationary points and infer their locations jointly. In their framework, the true function $f(x)$ is modeled as
\begin{align*}
    f(x) &=  \sum_{k = 0}^{K + j - 1}\beta_{k} B^{*}_{j, k}(x), \\
    \text{where} \quad 
    B^{*}_{j, k}(x) &=  C \int_{0}^{x} \left[ \prod_{i = 1}^{M} (\xi - \alpha_{i}) \right] B_{j, k}(\xi) \, d\xi.
\end{align*}
Here, $B_{j, k}$ denotes the standard degree-$j$ B-spline basis constructed with knot set $\Xi$, $\beta_{k} \ge 0$, and $C$ is a fixed integer. The authors showed that $f(x)$ possesses exactly $M$ stationary points located at $\alpha_{1}, \ldots, \alpha_{M}$, treated as model parameters. 
%Finite-support priors on $\alpha_{1}, \ldots, \alpha_{M}$  constrain their range. 
Posterior sampling is performed via a reversible-jump Markov chain Monte Carlo (MCMC) algorithm that treats the knot set as a trans-dimensional parameter. However, the model suffers from non-identifiability: any permutation of $\alpha_{1}, \ldots, \alpha_{M}$ yields the same function. Consequently, the marginal posteriors for the stationary points are identical, and inference requires post-processing to relabel samples.

In a nonparametric density estimation context, \citet{dasgupta2020two} proposed a two-step procedure for estimating univariate densities supported on an interval $[a,b]$. They first obtain an initial estimate $f_{\theta}$ from a simple parametric family $\mathcal{F}_{\theta}$, termed the \textit{template density}. Next, they estimate a monotone transformation $\gamma$ by solving
\begin{align*}
   \hat{\gamma}= \arg\max_{\gamma \in \Gamma} \sum_{i=1}^{n} \log\!\big[f_{\theta}(\gamma(x_{i}))\,\gamma'(x_{i})\big],
\end{align*}
where $\Gamma$ is the set of \textit{diffeomorphisms} on $[a,b]$,
\begin{equation}
\label{eq:diffeomorphism}
    \Gamma = \big\{\gamma : \gamma(a)=a,\; \gamma(b)=b,\; \gamma' > 0\big\}.
\end{equation}
The final estimate is $\hat{f}(x) = f_{\theta}(\hat{\gamma}(x))\,\hat{\gamma}'(x)$. This approach parallels normalizing flows in machine learning \citep{papamakarios2021normalizing}, where simple base densities are transformed to match complex targets. Its theoretical foundation lies in the transitive group action of $\Gamma$ on $\mathcal{F}$, the space of all densities on $[a,b]$: for any $f \in \mathcal{F}$ and template $f_{\theta} \in \mathcal{F}_{\theta} \subset \mathcal{F}$, there exists a unique $\gamma \in \Gamma$ such that $(f_{\theta} \circ \gamma)\,\gamma' = f$. Thus, $\gamma$ corrects model misspecification arising from a simplistic template. Moreover, \citet{dasgupta2020two} established a bijection between $\Gamma$ and the coefficients of an infinite orthogonal basis with $L^{2}$ norm less than $\pi$, allowing practical estimation via truncated bases and constrained optimization.

Building on this framework, \cite{dasgupta2021modality} applied the diffeomorphic transformation approach to modality-constrained density estimation, where the template function is a spline with a prespecified number of local maxima. The template and diffeomorphism are jointly estimated by solving
\begin{align*}
    (\hat{f}, \hat{\gamma}) = \underset{f \in \mathcal{T}, \, \gamma \in \Gamma}{\arg\max} \sum_{i=1}^{n} \log\left[ \frac{f(\gamma(x_{i}))}{\int_{a}^{b}  f(\gamma(x))\,dx } \right],
\end{align*}
where $\mathcal{T}$ denotes a spline family with stationary points at prespecified locations.

We extend this diffeomorphic framework to regression by modeling the true function as $f = g \circ \gamma$, where $g$ belongs to a simple function class $\mathcal{F}_{M}$ and $\gamma$ is a diffeomorphism. When the number and location of stationary points are of primary interest, this approach offers two key advantages. First, similar to \cite{wheeler2017bayesian}, we ensure that the estimated regression function has exactly $M$ stationary points within a specified interval by requiring that $\mathcal{F}_{M}$ consist of functions with $M$ stationary points in a simple parametric form such as piecewise interpolations alternating between increasing and decreasing segments. Second, for the estimated diffeomorphism $\hat{\gamma}$, if the stationary points of the template function $g$ are $b_{1}, \ldots, b_{M}$, then the stationary points of $f$ are estimated as $\left(\hat{\gamma}^{-1}(b_{1}), \ldots, \hat{\gamma}^{-1}(b_{M})\right)$. Hence, unlike \cite{li2024semiparametric}, who treats the stationary point as a scalar parameter, our method facilitates valid multidimensional joint inference on their locations. Moreover, the model is identifiable and requires no post-processing, as $\hat{\gamma}^{-1}$ preserves ordering: if $b_{1} < \cdots < b_{M}$, then $\hat{\gamma}^{-1}(b_{1}) < \cdots < \hat{\gamma}^{-1}(b_{M})$.

As there exists a one-to-one correspondence between the set of diffeomorphisms and the coefficients of an infinite orthogonal basis, high-dimensional statistical theory can be employed to derive non-asymptotic error bounds of the estimator. In particular, assuming Gaussian errors, we invoke the theory of \cite{Spokoiny2012_ParametricEstimation} to obtain non-asymptotic large deviation and confidence bounds for the maximum likelihood estimator (MLE) of the coefficients parameterizing the diffeomorphism. A delta method type argument then yields the joint asymptotic distribution of the stationary point estimates. An analogous analysis yields error bounds for a Bernstein–von Mises theorem in the Bayesian setting.

In this article, we investigate this template-diffeomorphism framework for both shape-constrained regression and the joint inference of stationary point locations. Section~\ref{sec:spcr} introduces the model in detail, explains how to construct simple template classes $\mathcal{F}_{M}$, and provides theoretical justification for the formulation. Section~\ref{sec:theory} derives finite-sample confidence bounds for the MLEs of both the model parameters and stationary points; Bayesian extensions are presented in Appendix \ref{sec:bayesian_case}. Section~\ref{sec:sim} reports simulation studies comparing our method with that of \cite{li2024semiparametric}, and Section~\ref{sec:app_ds} illustrates the approach using EEG data. 
Section~\ref{sec:conclusion} closes the paper with a summary of the main findings and a discussion of their broader implications.
Most proofs are deferred to the Supplementary Materials. An \texttt{R} package \texttt{scrdiff}, which implements the proposed method, is publicly available at \url{https://github.com/mprice747/scrdiff}. All code and scripts used for the simulation studies and real-data analyses are provided at \url{https://github.com/mprice747/scrdiff_analysis}.

\section{Stationary Point Constrained Regression}
\label{sec:spcr}
This section develops the framework for stationary point constrained regression. We first formalize the model structure and assumptions in Section~\ref{sec:The_Model}. Next, in Section~\ref{sec:Parameterizing}, we introduce a flexible parameterization for the stationary point location parameter $\gamma$, which enables efficient and interpretable inference. Finally, Section~\ref{sec:Joint_Estimation} presents the joint estimation and inference procedure for both the function $g$ and the stationary point parameter $\gamma$.

\subsection{Model Setup}
\label{sec:The_Model}
 Consider the input data $X_{i} \stackrel{i.i.d.}{\sim} \mathbb{P}_{X}$, where $\mathbb{P}_{X}$ is a probability measure on the interval $[0, 1]$. We specify that \begin{equation}
    Y_{i} = f(X_{i}) + \epsilon_{i}, \quad\epsilon_{i} \stackrel{i.i.d.}{\sim} N(0, \sigma^2),
\end{equation}
where $f(\cdot)$ is assumed to belong to the function class $\mathcal{F}_M$ in Definition \ref{def:F_M}.
\begin{definition}
\label{def:F_M}
Let $\mathcal{F}_M$ denote the set of all functions $ f : [0, 1] \to \mathbb{R} $ satisfying the following properties:
\begin{itemize}
    \item $ f $ is differentiable on $[0, 1]$.
    \item $ f $ has exactly $ M\in \mathbb{N}^{+} $ stationary points at $ b^*_1, b^*_2, \ldots, b^*_M $, where $0=b^*_{0} < b^*_{1} < ... <b^*_{M + 1} = 1$, i.e.,  $ f'(b^*_k) = 0 $ for $ k = 1, \ldots, M $.
    \item Each stationary point $b_{k}^*$ is a strict local extremum; i.e., the function switches from increasing to decreasing or vice versa. 
\end{itemize}
\end{definition}

 Now, we model $f$ as the composition of two functions $f = g \circ \gamma$. We assume $g$, in which we call the \textit{template function}, belongs to a restricted subclass $\mathcal{G}_{M} \subset \mathcal{F}_{M}$. Next, we let $\gamma \in \Gamma_{c}$, defined as the set of all bijective, continuous functions from $[0, 1]$ to $[0, 1]$, i.e., \begin{equation}
     \Gamma_{c} = \Big\{ \gamma : [0, 1] \to [0, 1] \; : \;\gamma \text{ is bijective and continuous} \Big\}.
 \end{equation}
The theoretical justification for this compositional formulation follows from Theorem~\ref{thm:thm_1} below. We define the following two sets of vectors: \begin{equation}
      \Lambda_{M}^{+} = \Big\{ (\lambda_{0}, \lambda_{1}, \ldots, \lambda_{M + 1}) \ :  \lambda_{2j} < \lambda_{2j + 1}, \lambda_{2j + 1} > \lambda_{2j + 2} \  \text{for}\ 1 \leq 2j + 1 \leq M + 1\Big\},
 \end{equation}
     \begin{equation}
         \Lambda_{M}^{-} = \Big\{ (\lambda_{0}, \lambda_{1}, \ldots, \lambda_{M + 1}) \ :   \lambda_{2j} > \lambda_{2j + 1}, \lambda_{2j + 1} < \lambda_{2j + 2} \  \text{for}\ 1 \leq  2j + 1 \leq M + 1\Big\}.
     \end{equation}
Put simply, $\Lambda_{M}^{+}$  is the set of valid outputs for $f$ at the points $b^*_0, b^*_{1}, \ldots,  b^*_{M+1}$ if the function is increasing on  $[b^*_0, b^*_{1}]$, decreasing on $[b^*_{1}, b^*_{2}]$, etc. $\Lambda_{M}^{-}$ is the set of valid outputs for $f$ at the points $b^*_0, b^*_{1}, \ldots,  b^*_{M+1}$, if the function is decreasing from $[b^*_0, b^*_{1}]$, increasing from $[b^*_{1}, b^*_{2}]$, etc.
 \begin{theorem}
 \label{thm:thm_1} Let $\lambda = (\lambda_{0}, \lambda_{1}, \ldots, \lambda_{M+1}) \in \Lambda_{M}^{+} \cup \Lambda_{M}^{-}$. Define the subset $\mathcal{F}_{\lambda, M} \subset \mathcal{F}_{M}$ such that $f \in \mathcal{F}_{\lambda, M}$ if $f(b^*_{k}) = \lambda_{k}$ for $k \in \{0, 1, \ldots, M, M+1\}$, where $\{b^*_1,\ldots, b^*_{M}\}$ are the stationary points of $f$. For any $f, g \in \mathcal{F}_{\lambda, M}$, there exists a $\gamma \in \Gamma_{c}$ satisfying $f = g \circ \gamma$.
\end{theorem}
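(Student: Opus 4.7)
The plan is to construct $\gamma$ piecewise on the intervals between stationary points of $f$, using local inverses of $g$. Let $b^*_0=0 < b^*_1 < \cdots < b^*_M < b^*_{M+1}=1$ denote the stationary points of $f$ (together with the endpoints) and, likewise, let $c^*_0=0 < c^*_1 < \cdots < c^*_M < c^*_{M+1}=1$ denote those of $g$. By hypothesis, $f(b^*_k) = g(c^*_k) = \lambda_k$ for $k = 0, 1, \ldots, M+1$. Since $f, g \in \mathcal{F}_M$ have the common signature $\lambda \in \Lambda_M^+ \cup \Lambda_M^-$, on each subinterval both $f$ and $g$ are strictly monotone \emph{in the same direction} (both increasing or both decreasing), taking the same pair of endpoint values $\lambda_k$ and $\lambda_{k+1}$.

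Next, I would fix $k \in \{0, 1, \ldots, M\}$ and denote by $g_k$ the restriction of $g$ to $[c^*_k, c^*_{k+1}]$. Because $g_k$ is continuous and strictly monotone on a compact interval, it is a homeomorphism onto its image $[\min(\lambda_k, \lambda_{k+1}), \max(\lambda_k, \lambda_{k+1})]$, so $g_k^{-1}$ is well-defined and continuous. Since $f$ takes values in this same image on $[b^*_k, b^*_{k+1}]$, I can define
\begin{equation*}
    \gamma(x) \;=\; g_k^{-1}\bigl(f(x)\bigr), \qquad x \in [b^*_k, b^*_{k+1}].
\end{equation*}
By construction, $g(\gamma(x)) = g_k(g_k^{-1}(f(x))) = f(x)$ on every subinterval, so $f = g \circ \gamma$.

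I then need to verify that the pieces fit together into an element of $\Gamma_c$. At the boundary $b^*_k$, the definitions from the left and right intervals both send $b^*_k$ to $g_k^{-1}(\lambda_k) = c^*_k$ (and analogously for the right-endpoint; note the sign pattern in $\Lambda_M^\pm$ forces $g_k^{-1}$ on adjacent intervals to return the same value at the common endpoint), so $\gamma$ is continuous on $[0,1]$ with $\gamma(0)=0$ and $\gamma(1)=1$. On each subinterval, $f$ and $g_k$ are monotone in the same direction, so $g_k^{-1} \circ f$ is \emph{strictly increasing}; gluing these strictly increasing pieces with matching endpoint values yields a strictly increasing continuous map $\gamma : [0,1] \to [0,1]$, hence a bijection. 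Therefore $\gamma \in \Gamma_c$, completing the proof.

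The argument is almost entirely structural, and no single step is particularly hard; the only thing one must be careful about is confirming the boundary-matching in both the $\Lambda_M^+$ and $\Lambda_M^-$ cases, so that the locally defined $g_k^{-1}$'s genuinely agree at the junctions $b^*_k$ and the concatenated $\gamma$ is both continuous and monotone. Once this bookkeeping is handled, bijectivity and the identity $f = g \circ \gamma$ follow immediately from the piecewise definition.
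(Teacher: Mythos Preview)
Your proposal is correct and essentially matches the paper's own proof: both construct $\gamma$ piecewise as $g_k^{-1}\circ f$ on the intervals between stationary points, check that the pieces agree at the junctions, and verify continuity and bijectivity. The only cosmetic difference is that you deduce bijectivity from strict monotonicity of the glued map, whereas the paper argues it by observing that each piece is a bijection between half-open subintervals that partition $[0,1)$.
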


The proof is provided in the Supplementary Materials . The strength of this approach lies in the ability to restrict $g$ to a simple subclass, such as the set of cubic splines with stationary points $b_{k}^* = \frac{k}{M + 1}$, while jointly learning $g \circ \gamma$ to approximate any function in $\mathcal{F}_{M}$ accurately. In addition, if the stationary points of $g$ are known $(b_{k}^*)$ and $\gamma$ is a differentiable and strictly increasing function, one can infer the stationary points of $f$ by calculating $\gamma^{-1}(b_{k}^*)$, as  $$f'(\gamma^{-1}(b_{k}^*)) = g'(\gamma \circ \gamma^{-1}(b_{k}^*)) \gamma'(\gamma^{-1}(b_{k}^*))=g'(b_{k}^*) \gamma'(\gamma^{-1}(b_{k}^*)) = 0.$$

\begin{figure}[htbp]
  \centering
  \begin{subfigure}[b]{0.48\textwidth}
    \includegraphics[width=\textwidth]{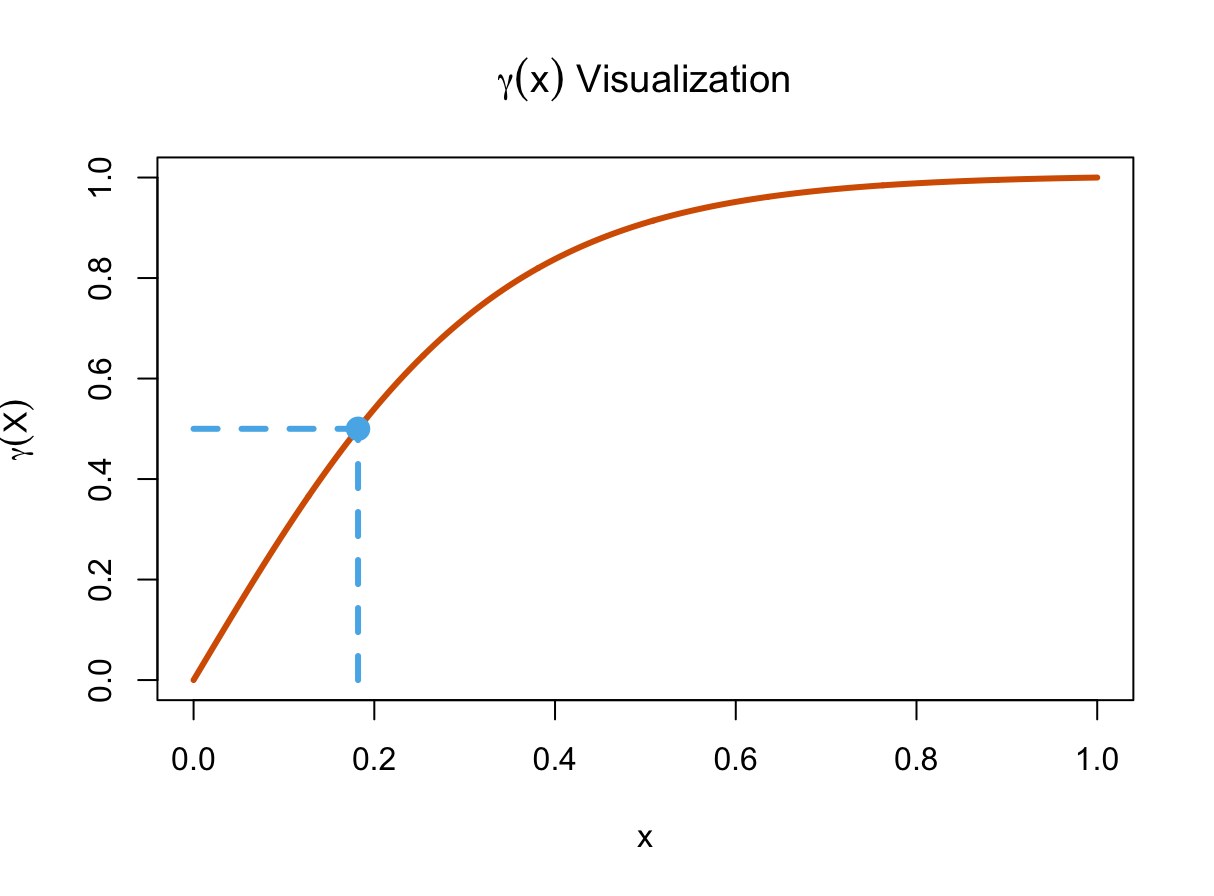}
  \end{subfigure}
  \begin{subfigure}[b]{0.48\textwidth}
    \includegraphics[width=\textwidth]{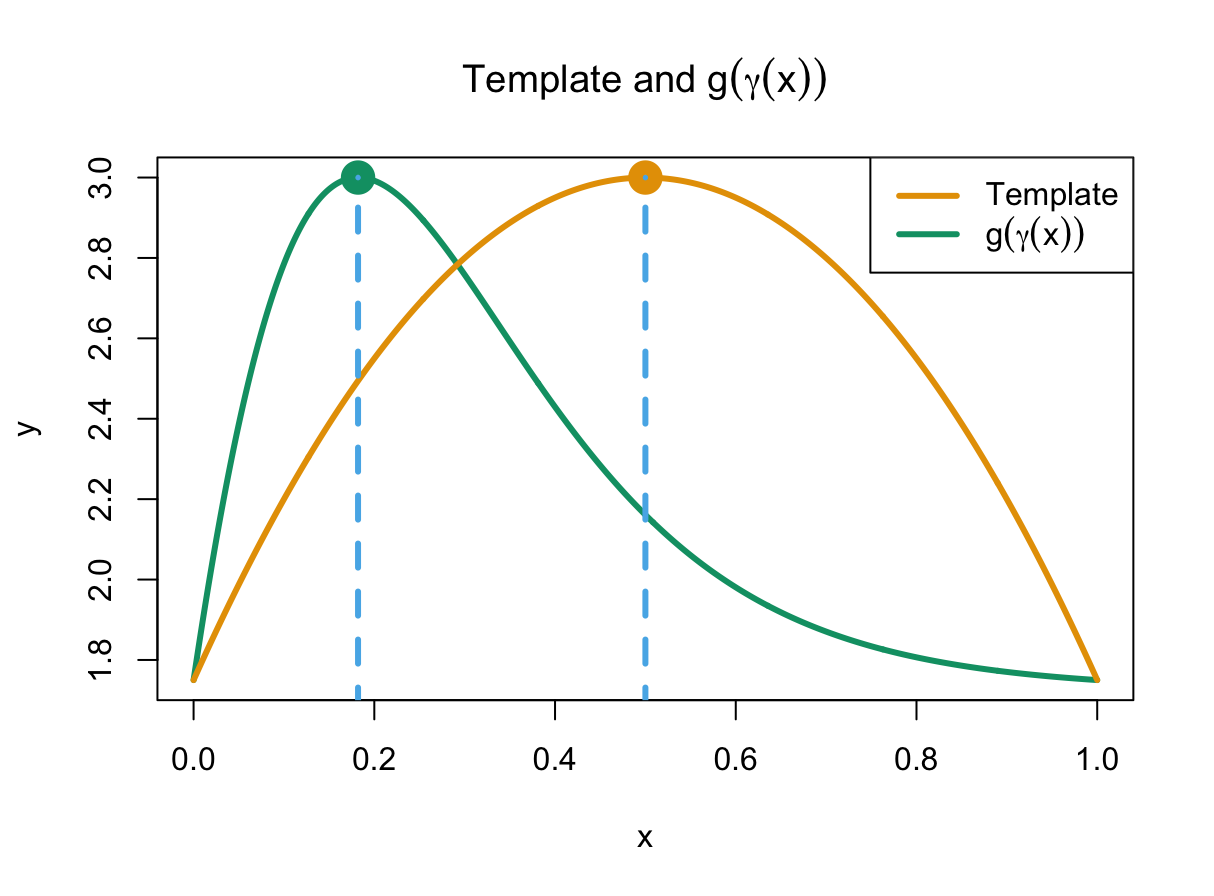}
  \end{subfigure}
  \caption{Visualization of the composition approach. Notice that the stationary points of the composition and template correspond to a specific point $(x, \gamma(x))$ on the graph of $\gamma$.}
\end{figure}

\subsection{Parameterizing $\gamma$}
\label{sec:Parameterizing}
Despite being mathematically pleasing, learning a best fitting $\gamma \in \Gamma_{c}$ proves to be challenging because $\Gamma_{c}$ is a nonlinear manifold  and there exists no low-dimensional parameterization of the set. However, this composition approach proves viable if we assume a suitable $\gamma$ belongs to the set of \textit{diffeomorphisms} $\Gamma$, the set of differentiable mappings on $[0, 1]$ whose derivatives are strictly positive, 
i.e., \begin{equation}
    \Gamma = \Big\{ \gamma : [0, 1] \to [0, 1] \; : \; \gamma(0) = 0, \gamma(1) = 1,  \gamma'(x) > 0 \text{ for all } x \in [0, 1] \Big\}.
\end{equation}  Although $\Gamma$ is not a vector space, \citet{dasgupta2020two} showed that there exists a bijective mapping from $\Gamma$ to a set $V$, where $V$ is defined as the collection of functions in $L^{2}([0, 1])$ that have $L^{2}$ norm less than $\pi$ and are orthogonal to all constant functions on $[0, 1]$. Put mathematically,  \begin{equation}
    V = \left\{ v \in L^{2}([0, 1]) : \left\lVert v \right\lVert < \pi,\, \int_{0}^{1}v(t)dt = 0 \right\}. 
\end{equation}
This bijection from $v\in V$ to $\gamma \in \Gamma$ first requires mapping $v$ to the unit Hilbert sphere, defined as \begin{equation}
    \mathbb{S}_{\infty} = \left\{ q : [0, 1] \rightarrow \mathbb{R} : \int_{0}^{1} q^{2}(t)dt = 1\right\}, 
\end{equation}via the bijective exponential mapping $\exp_{1}(v)$, defined as below:
\begin{equation}
    \exp_{1}(v) : V \rightarrow \mathbb{S}_{\infty},  \ \ \ \  \exp_{1}(v)= \cos(\left\lVert v \right\lVert) + \frac{\sin(\left\lVert v \right\lVert)}{\left\lVert v \right\lVert}v.
\end{equation}   Finally, we obtain $\gamma(t)$ by setting $\gamma(t) = \int_{0}^{t} q^{2}(s) ds$, where $q(s)$ is the output of the exponential map. As $\int_{0}^{0} q^{2}(t)dt = 0$, $\int_{0}^{1} q^{2}(t)dt = 1$, and the integral from $0$ to $t$ of a positive function is a strictly increasing function of $t$, $v$ has been transformed to a diffeomorphism. In fact, any diffeomorphism $\gamma$ can be mapped to a $q \in \mathbb{S}_{\infty}$ by setting $q(s) = \sqrt{\gamma'(s)}$. Subsequently,  $q(s)$ can be transformed to a $v \in V$ via the inverse of the exponential map \citep{dasgupta2020two}.   Conveniently, we can parameterize $V$, and subsequently $\Gamma$,  using any infinite orthogonal basis of $L^{2}([0, 1], \mathbb{R})$, $\sum\limits_{j = 1}^{\infty} \beta_{j} \psi_{j}(t)$, whose elements $\psi_{j}$ are orthogonal to $v(t) = 1$ and coefficients $\boldsymbol{\boldsymbol{\beta}} = (\beta_{1}, \beta_{2}, ..., )$ have Euclidean norm less than $\pi$. Choices for $\{\psi_{j}\}_{j = 1}^{\infty}$ include the Fourier and cosine basis without the constant term. Of course, modeling $V$ with an infinite orthogonal basis is impractical. However, one can truncate this basis to a finite dimension $p$ and can model all diffeomorphisms within this $p$ dimensional subset $\Gamma_{p}$. As $p\rightarrow \infty$, the subset $\Gamma_{p}$ converges to the full set $\Gamma$. 

Putting everything together, assume a suitable diffeomorphism lies in $\Gamma_{p}$. Thus, letting $\boldsymbol{\beta} \in \boldsymbol{\Theta} = \{ \boldsymbol\beta \in \mathbb{R}^{p} : \|\boldsymbol{\beta}\| < \pi \}$ and setting $v_{\boldsymbol{\beta}}(t) = \sum\limits_{j = 1}^{p} \beta_{j} \psi_{j}(t)$, the corresponding diffeomorphism $\gamma_{\boldsymbol{\beta}}$ has the following form:
\begin{align*}
\gamma_{\boldsymbol{\beta}}(t)=&\int_{0}^{t} \left(\cos(\left\lVert v_{\boldsymbol{\beta}} \right\rVert) + \frac{\sin(\left\lVert v_{\boldsymbol{\beta}} \right\rVert)}{\left\lVert v_{\boldsymbol{\beta}} \right\rVert}v_{\boldsymbol{\beta}}(s)\right)^{2} ds\\   \\
=&\int_{0}^{t} \left[\cos^2(\left\lVert \boldsymbol{\beta} \right\rVert) + 2\frac{\cos(\left\lVert \boldsymbol{\beta} \right\rVert)\sin(\left\lVert \boldsymbol{\beta} \right\rVert)}{\left\lVert \boldsymbol{\beta} \right\rVert}v_{\boldsymbol{\beta}}(s) + \frac{\sin^2(\left\lVert \boldsymbol{\beta} \right\rVert)}{\left\lVert \boldsymbol{\beta} \right\rVert^2}v^{2}_{\boldsymbol{\beta}}(s)\right]  ds \\ \\
      = & \ \cos^2(\left\lVert \boldsymbol{\beta} \right\rVert)t + 2\frac{\cos(\left\lVert \boldsymbol{\beta} \right\rVert)\sin(\left\lVert \boldsymbol{\beta} \right\rVert)}{\left\lVert \boldsymbol{\beta} \right\rVert}\int_{0}^{t}v_{\boldsymbol{\beta}}(s)ds + \frac{\sin^2(\left\lVert \boldsymbol{\beta} \right\rVert)}{\left\lVert \boldsymbol{\beta} \right\rVert^2}\int_{0}^{t}v^{2}_{\boldsymbol{\beta}}(s)ds. 
\end{align*} 
Here, 
 $\left\lVert v_{\boldsymbol{\beta}} \right\rVert = \left\lVert \boldsymbol{\beta} \right\rVert$, owing to the orthogonality of the $\psi_j$’s. For the rest of the article, we represent $V$ using the cosine basis with norm less than $\pi$, i.e. \begin{equation}
     V = \text{span} \left\{ \sqrt{2}\cos(k \pi t) : k \in \mathbb{N}^+  \right\} \cap \left\{ v \in L^{2}([0, 1]) : \|v\| < \pi \right\}.
 \end{equation}
 This choice is motivated by the fact that both $\int_{0}^{t} v_{\boldsymbol{\beta}}(s)ds$ and $\int_{0}^{t} v_{\boldsymbol{\beta}}^{2}(s)ds$, and consequently $\gamma_{\boldsymbol{\beta}}$, admit closed-form expressions for all $t \in [0,1]$. In addition, the cosine basis ensures that the gradient of the log-likelihood with respect to $\boldsymbol{\beta}$ remains bounded, which will be useful for our theoretical results.

 \subsection{Joint Estimation and Inference of $g$ and $\gamma$}
\label{sec:Joint_Estimation}

The next challenge is to choose a suitable subclass $\mathcal{G}_{M} \subset \mathcal{F}_{M}$ for the template function. Ideally, each $g \in \mathcal{G}_{M}$ should have the following three properties: \begin{enumerate}
    \item Evaluating $g$ at any $t \in [0, 1]$ should be computationally efficient. 
    \item $g$'s $M$ stationary points are assumed to be explicitly available, either given a priori or obtainable with negligible computational effort. 
    \item $g \circ \gamma_{\boldsymbol{\beta}} \in C^{j}([0, 1])$ for a desired $j$, where $C^j([0, 1])$ denotes the set of $j-$times continuously differentiable functions on $[0, 1]$. 
\end{enumerate}
 
However, estimating functions from  $\mathcal{G}_{M}$, where each element satisfies these properties, is challenging. One strategy is to parameterize each $g_{\lambda} \in \mathcal{G}_{M}$ with a height vector $\lambda \in \Lambda_{M}^{+} \ \text{or} \ \Lambda_{M}^{-}$. Here, $\lambda_{k}$ will be the value of $g$ at prespecified points $b_{0}, b_{1}, ..., b_{M + 1}$. We can then construct $g_{\lambda}$ as a Hermite cubic interpolant using the pairs $(b_{k}, \lambda_{k})$ as interpolation nodes, as specified below:
\begin{equation}
    g_{\lambda}(x) = (2t^{3} - 3t^{2} + 1)\lambda_{k} + (-2t^{3} + 3t^{2})\lambda_{k + 1},   
\end{equation}
with $t = \frac{x - b_{k}}{b_{k + 1} - b_{k}}$ and $x \in [b_{k}, b_{k + 1}]$. 
This formulation is justified by the following proposition.
\begin{proposition}
\label{thm:stationary_points}
    Let $\lambda \in \Lambda_{M}^{+} \ \text{or} \ \Lambda_{M}^{-}$. If $g_{\lambda}$ is a Hermite cubic interpolation with $(b_{k}, \lambda_{k})$ as the nodes, then $g_{\lambda}$ will have exactly M stationary points located at $b_{1}, \ldots, b_{M}$. 
\end{proposition}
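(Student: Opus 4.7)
My plan is to prove the proposition by an explicit computation of $g_\lambda'$ on each sub-interval $[b_k, b_{k+1}]$, and then to combine this with the alternating-sign structure that defines $\Lambda_M^+$ (or $\Lambda_M^-$). The key observation, which drives everything, is that the two Hermite basis functions $h_{00}(t) = 2t^3 - 3t^2 + 1$ and $h_{01}(t) = -2t^3 + 3t^2$ satisfy $h_{00}'(t) = -6t(1-t)$ and $h_{01}'(t) = 6t(1-t)$, so both vanish at $t = 0$ and $t = 1$ and are nonzero elsewhere in $(0,1)$.

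First, I would substitute $t = (x - b_k)/(b_{k+1} - b_k)$ and apply the chain rule to obtain, on $[b_k, b_{k+1}]$,
\[
g_\lambda'(x) \;=\; \frac{6\,t(1-t)}{b_{k+1} - b_k}\,(\lambda_{k+1} - \lambda_k).
\]
Because $\lambda \in \Lambda_M^+ \cup \Lambda_M^-$, each height difference $\lambda_{k+1} - \lambda_k$ is nonzero, so on the open interval $(b_k, b_{k+1})$ the derivative is strictly of one sign and in particular cannot vanish. Hence no stationary points arise strictly between consecutive nodes, and the only candidates are the nodes themselves.

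Next, for each interior node $b_k$ with $1 \le k \le M$, the formula above shows that the one-sided derivatives of the two flanking cubic pieces both vanish at $b_k$, so $g_\lambda$ is $C^1$ there and $g_\lambda'(b_k) = 0$. To verify that $b_k$ is a strict local extremum, I would compare $\operatorname{sgn}(\lambda_k - \lambda_{k-1})$, which is the sign of $g_\lambda'$ on $(b_{k-1}, b_k)$, with $\operatorname{sgn}(\lambda_{k+1} - \lambda_k)$, which is the sign of $g_\lambda'$ on $(b_k, b_{k+1})$. The defining inequalities of $\Lambda_M^+$ and $\Lambda_M^-$ force these two signs to be opposite, so $g_\lambda'$ changes sign at $b_k$ and the extremum is strict.

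The remaining subtlety, and the one place where care is needed, is the behaviour at the endpoints $b_0 = 0$ and $b_{M+1} = 1$. The derivative $g_\lambda'$ also vanishes at these points by the Hermite basis property above, so I would emphasise that they do not qualify as stationary points under Definition~\ref{def:F_M}: that definition places stationary points in the interior of $[0,1]$ and requires the function to switch monotonicity at each one, a condition which fails at a one-sided boundary. This confirms that $g_\lambda$ has exactly the $M$ stationary points $b_1, \ldots, b_M$ as claimed, and since the height vector $\lambda$ was chosen in $\Lambda_M^\pm$ the monotonicity alternates between consecutive segments, so $g_\lambda \in \mathcal{F}_M$.
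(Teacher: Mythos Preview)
Your proposal is correct and follows essentially the same route as the paper's proof: compute $g_\lambda'$ explicitly on each subinterval, observe that it factors as $6t(1-t)(\lambda_{k+1}-\lambda_k)/(b_{k+1}-b_k)$, and read off both the zero set and the sign from the alternating structure of $\Lambda_M^{\pm}$. Your treatment is in fact more careful than the paper's, since you explicitly verify the $C^1$ join at interior nodes and address the vanishing of $g_\lambda'$ at the boundary nodes $b_0$ and $b_{M+1}$, a point the paper's proof passes over in silence.
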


\begin{proof}
    The derivative of the interpolation will equal \begin{align*}
g_{\lambda}'(x) = \frac{6}{b_{k}- b_{k + 1}}(t^{2} - t)(\lambda_{k} - \lambda_{k + 1}),
\end{align*} 
with $t = \frac{x - b_{k}}{b_{k + 1} - b_{k}}$ and $x \in [b_{k}, b_{k + 1}]$. 
As $t^2 = t$ if $t = 0 \ \text{or} \ 1$, $g_{\lambda}'(b_{k}) = 0$ for $k = 1, ..., M$. In addition, as $t^{2} < t$ for $t \in (0, 1)$, $g'_{\lambda}(x)$ will be strictly decreasing if $\lambda_{k} > \lambda_{k + 1}$ and $g'_{\lambda}(x)$ will be strictly increasing if $\lambda_{k} < \lambda_{k + 1}$. Thus, $g_{\lambda}$ will have $M$ stationary points located at $b_k$ for $k = 1, ..., M$.
\end{proof}

Proposition \ref{thm:stationary_points} guarantees $f = g_{\lambda} \circ \gamma$ has exactly $M$ stationary points,  making $\gamma_{\boldsymbol{\beta}}^{-1}(b_{k})$ a valid estimate for the $k$-th stationary point. However, the estimated $g_{\lambda}$, and hence the estimated $f$ generally does not have continuous derivatives. If higher order smoothness is desired, this height vector approach can be used together with a smoother interpolation method like B-splines via the following process. First, define a sequence of knots $\Xi = (\xi_{1}, \xi_{2}, ..., \xi_{K})$ where $0 < \xi_{1} < \xi_{2} < ... < \xi_{K}<1$. Next, to enforce the monotonicity for the template on each $[b_{k}, b_{k + 1}]$ interval, generate additional data points $(b_{k, \ell}, \lambda_{k, \ell})$ similar to that of a linear interpolation:
\begin{equation}
    b_{k, \ell} = b_{k} + \delta_{\ell}(b_{k + 1} - b_{k}) \quad\text{and}\quad \lambda_{k, \ell} = \lambda_{k} + \delta_{\ell}(\lambda_{k + 1} - \lambda_{k}),
\end{equation}  
for $\delta_\ell \in [0, 1]$. Finally, let $\Phi(x)$ denote the $m$ degree B-spline basis evaluated at $x$, with knot sequence $\Xi$. 

We can now solve the following constrained least-squares problem, which admits a unique analytical solution provided that a sufficient number of $(b_{k, \ell}, \lambda_{k, \ell})$ pairs are available:
\begin{equation}
    \hat{\boldsymbol{\theta}}_{\lambda} = \underset{\boldsymbol{\theta} \in \mathbb{R}^{K + m + 1} }  {\arg\min} \sum\limits_{k = 0}^{M + 1}\sum\limits_{\ell} (\lambda_{k, \ell} - \boldsymbol{\theta}^T\Phi(b_{k, \ell}))^{2} + \alpha P(\boldsymbol{\theta}),
\end{equation} 
such that 
$$\begin{cases}
        \Phi(b_{k}) = \lambda_{k} \qquad  \text{for} \ k = 0, 1, ..., M + 1, \\
        \Phi'(b_{k}) = 0 \qquad \text{for} \ k = 1, ..., M .
    \end{cases}$$
Here, $P(\boldsymbol{\theta})$ denotes a curvature penalty, such as the second derivative penalty $(\int\limits_{0}^{1} (\boldsymbol{\theta}^T\Phi''(x))^{2}dx) $ or a discrete approximation  like the P-spline penalty term \citep{eilers1996flexible}. The function $g_{\lambda}(x) = (\hat{\boldsymbol{\theta}}_{\lambda})^T \Phi(x)$ serves as the template, ensuring at least $M$ stationary points at $b_k$, for $k = 1, \ldots, M$. A key limitation of this approach is that the estimate $g_{\lambda}(x)$ may have additional points where its derivative is zero, potentially making $\gamma_{\boldsymbol{\beta}}^{-1}(b_k)$ an invalid stationary point estimate. To address this, we propose using a large number of knots ($K \approx 100$), substantial generated data, and a high smoothing parameter ($\alpha > 10^5$) to ensure flexibility while penalizing overly oscillatory templates. However, despite guaranteed smoothness, this method is computationally intensive. Figure \ref{fig:Template} compares the Hermite cubic and B-spline templates for identical $(b_k, \lambda_k)$ pairs.

\begin{figure}[!h]
  \centering
  \begin{subfigure}[b]{0.48\textwidth}
    \includegraphics[width=\textwidth]{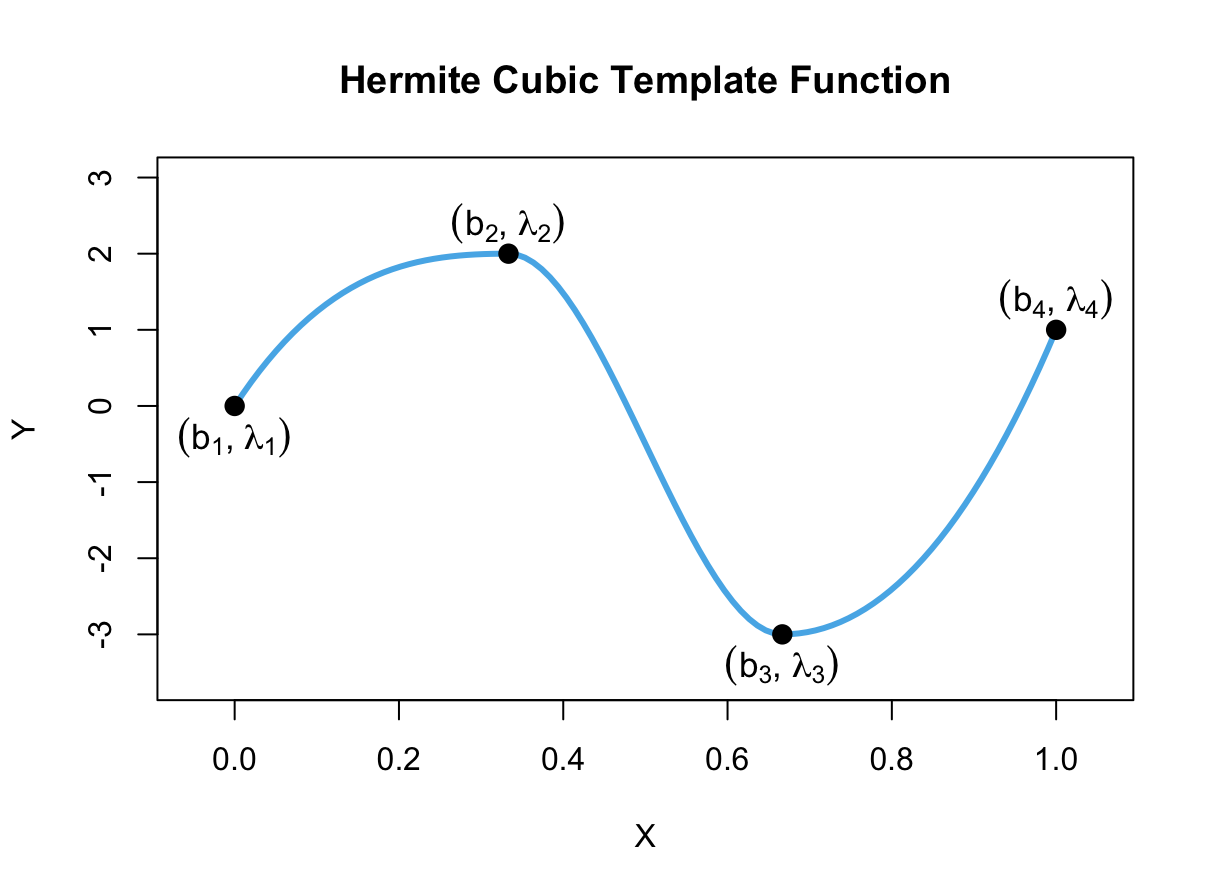}
  \end{subfigure}
  \begin{subfigure}[b]{0.48\textwidth}
    \includegraphics[width=\textwidth]{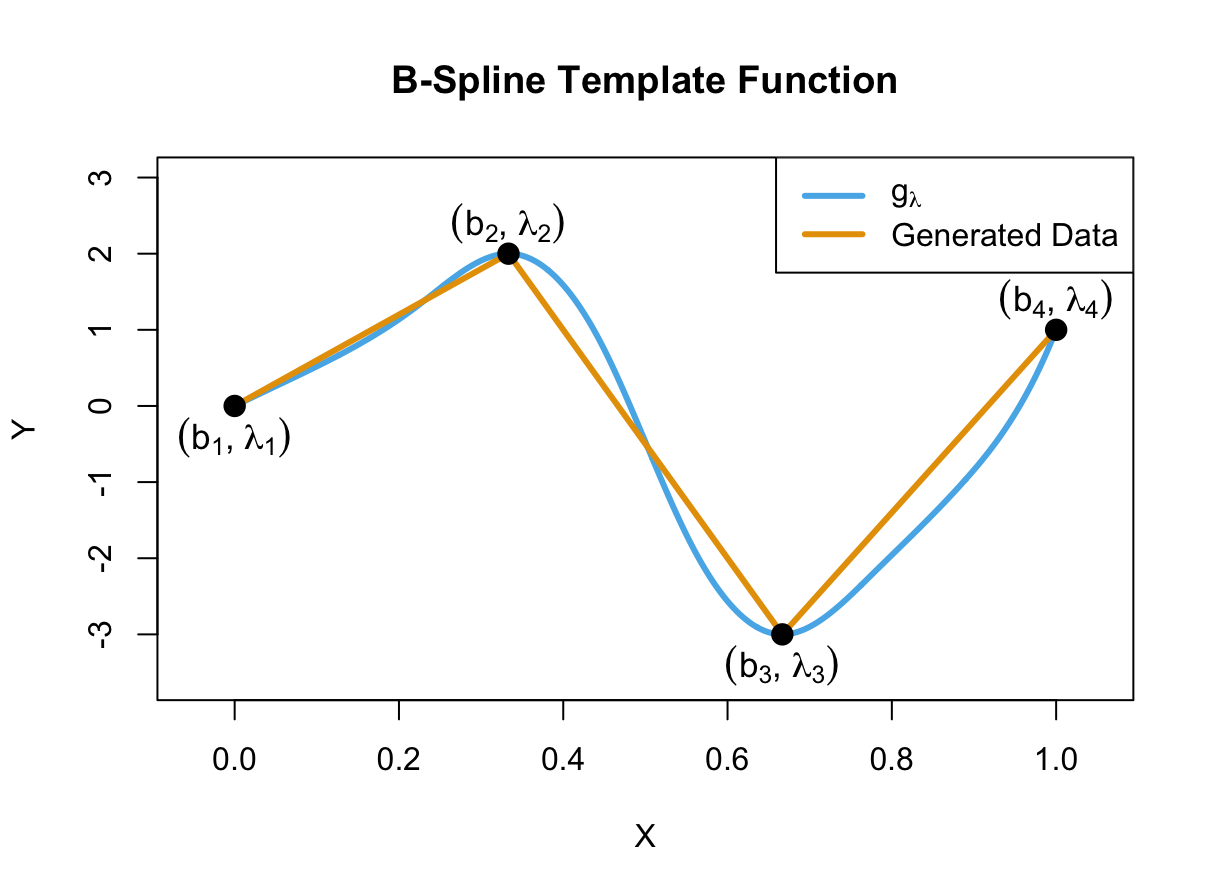}
  \end{subfigure}
  \caption{Template functions for the same set of $(b_{k}, \lambda_{k})$ parameters}
\label{fig:Template}
  \end{figure}

A less computationally intensive approach that ensures smoothness is to define the template function $g$ as a polynomial of degree $M + 1$. The stationary points of $g$, and thus of $f$, can be determined with minimal computational cost by computing the eigenvalues of the companion matrix of its derivative \citep{edelman1995polynomial}. However, this approach only guarantees that $g$ has at most $M$ stationary points in $[0, 1]$, not exactly $M$. To apply the height vector and interpolation strategy, one must first specify whether $\lambda \in \Lambda^+_M$ or $\lambda \in \Lambda^-_M$. Let $\Lambda$ denote the chosen height vector set. Assuming independent normal errors, maximum likelihood estimates $\boldsymbol{\hat{\beta}}$ and $\hat{\lambda}$ are obtained by solving
\begin{equation*}
    \underset{\lambda \in \Lambda,\  \boldsymbol{\beta} \in \boldsymbol{\Theta} } {\arg\min} \sum\limits_{i = 1}^{n}(y_{i} - g_{\lambda}\circ \gamma_{\boldsymbol{\beta}}(x_{i}))^{2}.
\end{equation*}
This is a nonstandard constrained optimization problem due to the irregular geometry of $\Lambda^{+}_{M}$ and $\Lambda^{-}_{M}$. However, the constraint becomes more manageable if we optimize over the following set: \begin{equation}
    \mathcal{L} = \Big\{(\ell_0, \ell_1, \dots, \ell_{M+1}) \in \mathbb{R} \times (0,\infty)^{M+1}
\; :\; \ell_{0} = \lambda_{0},\; \ell_{k} = |\lambda_{k} - \lambda_{k -1}|\Big \}.
\end{equation}
Here, $\lambda_{k}$ are easily reconstructed by performing a cumulative sum. We further remove the positive constraint on $(\ell_{1}, ..., \ell_{M + 1})$ by expressing $\ell_{k} = \exp(l_{k})$, where $l_{k} \in \mathbb{R}$. In addition, the norm constraint ($\|\boldsymbol{\beta}\| < \pi$) can be removed from $\boldsymbol{\beta}$ by setting $\boldsymbol{\beta} = \pi \frac{y}{\sqrt{1+ \left\lVert y\right\lVert ^2}}$, where $y \in \mathbb{R}^{p}$. Therefore, we can apply standard unconstrained optimization techniques like the Broyden–Fletcher–Goldfarb–Shanno (BFGS) quasi-Newton algorithm \citep{nocedal2006numerical} to obtain maximum likelihood estimates. Given the non-convex nature of the loss function, we recommend applying the BFGS algorithm from multiple initial points and selecting the solution with the lowest loss.

To estimate uncertainty, methods such as residual or block bootstrap \citep{kreiss2012bootstrap} can be used to construct approximate confidence intervals for parameters and stationary points. For Bayesian inference, a standard prior, such as a multivariate Gaussian, can be placed on the unconstrained parameters. However, the posterior is often highly multimodal, complicating efficient exploration of the space. To address this, we suggest running BFGS multiple times to identify posterior modes and compute the inverse Hessian at each mode. The inverse Hessians, appropriately scaled,  serve as proposal covariance matrices for multiple MCMC chains. Alternatively, approximate resampling methods like the weighted likelihood bootstrap \citep{newton1994approximate} have proven effective for multimodal posterior estimation. For more complex cases, specialized techniques for multimodal posteriors, such as MCMC tempering \citep{geyer1995annealing} or variational boosting \citep{miller2017variational}, can be employed.

\section{Non-Asymptotic Analysis}
\label{sec:theory}

A key advantage of our model is its explicit parametric form. Assuming the log-likelihood is twice continuously differentiable and that the dimension of the diffeomorphism weight parameter $\boldsymbol{\beta}$ is fixed and known a priori, standard asymptotic theory guarantees consistency and asymptotic normality of the MLE and the posterior for $\boldsymbol{\beta}$ \citep{van2000asymptotic}. Moreover, if the mapping $\Phi$ from $\boldsymbol{\beta}$ to the stationary points is first-order differentiable at the ground truth, the multivariate delta method ensures joint asymptotic normality and consistency of the stationary point estimates. However, this asymptotic analysis is limited for two reasons. First, representing the full set of diffeomorphisms $\Gamma$ requires $\boldsymbol{\beta}$ to be infinite-dimensional, so restricting $\boldsymbol{\beta}$ to a finite dimension $p$ is primarily a matter of analytical or computational convenience. Second, asymptotic theory does not quantify the accuracy of these approximations at finite sample sizes.

In this section, we derive finite-sample confidence bounds for the MLE of $\boldsymbol{\beta}$ and its sampling distribution. Let $\boldsymbol{\hat{\beta}} \in \mathbb{R}^{p}$ denote the MLE based on a sample of size $n$. We first bound the probability that $\boldsymbol{\hat{\beta}}$  lies outside a neighborhood of the true parameter $\boldsymbol{\beta^*} \in \mathbb{R}^{p}$. This quantifies confidence that the estimator is close to the truth for a given $n$. We then bound the distance between scaled versions of the asymptotic normal statistic $\sqrt{n\boldsymbol{v}^{2}_{0}}(\boldsymbol{\hat{\beta}} - \boldsymbol{\beta^*})$ and the normalized score $(n\boldsymbol{v}^{2}_{0})^{-1/2}\sum_{i=1}^{n} \nabla_{\boldsymbol{\beta}} \ell_{i}(\boldsymbol{\beta^*})$, where $\boldsymbol{v}^{2}_{0}$ is the Fisher information matrix for a single observation and $\ell_i(\boldsymbol{\beta^*})$ is the log-likelihood for the $i$-th data point. Since the normalized score is asymptotically standard normal under standard regularity conditions (e.g., twice continuous differentiability of the log-likelihood, positive definite Fisher information matrix, etc.), this bound quantifies the accuracy of the asymptotic MLE distribution at finite samples. These results naturally extend to the case where $p$ grows with $n$, thereby accommodating the infinite-dimensional nature of the diffeomorphism and providing confidence bounds for the MLE of stationary points.
For our first set of results, we assume the following: Let the pairs $(X_{1}, Y_{1})$. $\ldots,$ $ (X_{n}, Y_{n})$ be generated from the model \begin{align*}
    (X_{1}, ..., X_{n}) \stackrel{i.i.d.}{\sim} \mathbb{P}_{X}([0, 1]) \quad\text{and}\quad
    Y_{i}|X_{i} \stackrel{ind.}{\sim} N(f_{\boldsymbol{\beta^*}}(X_{i}), \sigma^{2}).
\end{align*} 
We let $f_{\boldsymbol{\beta^*}}(X_{i}) = g(\gamma_{\boldsymbol{\beta^*}}(X_{i}))$, where $g$ is the template function and $\gamma_{\boldsymbol{\beta^*}}$ is the diffeomorphism parameterized by $\boldsymbol{\beta^*}$, using the parameterization discussed in Section~\ref{sec:Parameterizing}. We make the following assumptions about $\boldsymbol{\beta^*}$, $g$, $\sigma^2$, and $\mathbb{P}_{X}$. 
\begin{assumption}[Correct Specification]
\label{ass:CS}
Let $g$ be a fixed, thrice continuously differentiable function in $\mathcal{F}_{M}$, and assume that $g \neq f_{\beta^*}$. In addition, suppose $\boldsymbol{\beta^*} \in \boldsymbol{\Theta}$, where $\boldsymbol{\Theta} = \{\boldsymbol{\beta} \in \mathbb{R}^{p} : 0 < \left\lVert \boldsymbol{\beta} \right\rVert < \pi\}$. Finally, let $\sigma^2$ be a fixed positive constant.
\end{assumption}
 \begin{assumption}[Positivity and Independence of $\mathbb{P}_{X}$]
 \label{ass:PI_Px}
For every compact subinterval $[a, b] \subseteq [0, 1]$ with $a < b$, the measure $\mathbb{P}_{X}([a, b]) > 0$. In addition, $\mathbb{P}_{X}$ is independent of both $g$ and $\boldsymbol{\beta^*}$.
\end{assumption}

\begin{assumption}[Identifiability]
\label{ass:Identifiability}
    The Fisher information matrix for a single observation at $\boldsymbol{\beta^*}$, denoted by $\boldsymbol{v}_{0}^{2}$, is given by $ \sigma^{-2}\mathbb{E}\left[\left( \nabla_{\boldsymbol{\beta}}  f_{\boldsymbol{\beta^*}}(x)\right) \left(\nabla_{\boldsymbol{\beta}}  f_{\boldsymbol{\beta^*}}(x) \right)^T \right]$ and is positive definite.
\end{assumption}

We assume the above to ensure that $\boldsymbol{\beta^*}$ is the unique maximizer of the expected log-likelihood and that the log-likelihood is twice continually differentiable. To derive the finite error bounds of the maximum likelihood estimate, we primarily rely on the theory of \cite{Spokoiny2012_ParametricEstimation}. They proved that if the gradient of the log-likelihood is a sub-exponential random vector and the log-likelihood admits a second-order Taylor expansion around the ground truth, then within a local neighborhood of $\boldsymbol{\beta^*}$, the log-likelihood is bounded between two Gaussian-like quadratic processes.
This result is used to derive finite-sample error bounds for the asymptotic normal approximation of MLE's within this local neighborhood. In addition, if the negative expected log-likelihood is lower bounded by a quadratic function of $\boldsymbol{\beta}$, it is possible obtain a large deviation bound for the MLE. Because these bounds involve functions of the sample size $n$ and parameter dimension $p$, the theory of \cite{Spokoiny2012_ParametricEstimation} can also easily be extended to the case where the parameter dimension grows with the sample size.  Since our goal is to derive finite-sample error bounds for the estimator while accommodating the model’s infinite-dimensional structure through a sequence of increasingly rich finite-dimensional approximations, adopting this framework is well justified.
\cite{PanovSpokoiny2015_FiniteSampleBvM} used a similar approach for the Bayesian case and derived an error bound for the asymptotic Bernstein-von Mises Gaussian approximation of the posterior. For brevity, we will focus strictly on the maximum likelihood estimate case in this section and expand to the Bayesian case in the appendix.

\subsection{Finite Sample Theory for $\boldsymbol{\beta}$ and Stationary Points}
\label{sec:FST_Beta}
  To derive our finite-sample bounds, we must state precisely what we mean by ``local neighborhoods of $\boldsymbol{\beta^*}$". Let $\mathbb{K}(\boldsymbol{\beta}, \boldsymbol{\beta^*})$  be the Kullback–Leibler (KL) divergence between the models parameterized by $\boldsymbol{\beta}$ and $\boldsymbol{\beta^*}$ and let \begin{align}
      H({\boldsymbol{\beta}}) = \frac{\partial}{\partial \boldsymbol{\beta} \partial \boldsymbol{\beta}^T}\mathbb{K}(\boldsymbol{\beta}, \boldsymbol{\beta^*}).
  \end{align} This is the Hessian of $\mathbb{K}(\boldsymbol{\beta}, \boldsymbol{\beta^*})$ at $\boldsymbol{\beta} \in \boldsymbol{\Theta}$. Now $H({\boldsymbol{\beta^*}}) = \boldsymbol{v}_{0}^{2}$,  the Fisher information matrix at $\boldsymbol{\beta^*}$, which we assumed to be positive definite (Assumption \ref{ass:Identifiability}). In addition, by Assumption \ref{ass:CS}, the second partial derivatives in the Hessian are continuous. Thus, there exists a closed ball $\bar{B}(\boldsymbol{\beta^*}, \tau_0)$ of positive radius $\tau_0 > 0$ centered at $\boldsymbol{\beta^*}$ such that, for all $\boldsymbol{\beta} \in \bar{B}(\boldsymbol{\beta^*}, \tau_0)$, the Hessian of $\mathbb{K}(\boldsymbol{\beta}, \boldsymbol{\beta^*})$ evaluated at $\boldsymbol{\beta}$ is positive definite.
  Consequently, $\mathbb{K}(\boldsymbol{\beta}, \boldsymbol{\beta^*})$ is strictly convex for $\boldsymbol{\beta} \in \bar{B}(\boldsymbol{\beta^*}, \tau_{0}) $ on the closed ball $\bar{B}(\boldsymbol{\beta^*}, \tau_{0})$.
  This leads to the following definition.
  \begin{definition}[Local Sets of $\boldsymbol{\beta^*}$]
  \label{def:local_sets}
  Let $\tau_{0}$ be the radius of a closed ball on which $\mathbb{K}(\boldsymbol{\beta}, \boldsymbol{\beta^*})$ is strictly convex. The local sets of $\boldsymbol{\beta^*}$ are defined as \begin{equation}
      \Theta(u) =  \{ \boldsymbol{\beta} \in \Theta : \left\lVert\boldsymbol{v}_{0}(\boldsymbol{\beta} - \boldsymbol{\beta^*})\right\lVert \leq u\},
  \end{equation}
with $u \leq u_{0} = \frac{\tau_{0} }{\sqrt{\lambda_{\max}}}$, where $\boldsymbol{v}_{0}$ is the square root matrix of $\boldsymbol{v}_{0}^{2}$ and $\lambda_{\max}$ is the maximum eigenvalue of $\boldsymbol{v}_{0}^{2}$.   
  \end{definition}

We define $u_{0}$ in this way because $\boldsymbol{\beta} \in \Theta(u_{0})$ if and only if $\boldsymbol{\beta} \in \bar{B}(\boldsymbol{\beta^*}, \tau_{0})$.  Intuitively, $\Theta(u_{0})$ defines the neighborhood where the expected log-likelihood will resemble that of an identifiable linear Gaussian model, whose negative expected log-likelihood is a strictly convex function. Within this neighborhood, we can establish our finite-sample error bounds for the MLE. We define the following two quantities: \begin{equation}
    \lambda_{\min}(u_{0}) = \inf_{\boldsymbol{\beta} \in \Theta(u_{0})} \lambda_{\min}(H(\boldsymbol{\beta})) \quad\text{and}\quad \kappa(u_{0}) =  \underset{\boldsymbol{\beta} \notin \Theta(u_{0}) } {\inf} \mathbb{K}(\boldsymbol{\beta}, \boldsymbol{\beta^*}).
\end{equation}
 A small $\lambda_{\min}(u_{0})$ means that the expected log-likelihood has low curvature in a certain direction at a certain point, so nearby parameter values along that direction are difficult to distinguish. A small $\kappa(u_{0})$ means there may exist a parameter value far outside the local set $\Theta(u_{0})$ that provides a reasonable fit.
 Due to $H(\boldsymbol{\beta})$ being positive definite for all $\boldsymbol{\beta} \in \Theta(u_0)$, the parameter space being bounded and each $\boldsymbol{\beta}$ defining a unique diffeomorphism, both of these quantities are positive. With $\lambda_{\max}$ being the maximum eigenvalue of $\boldsymbol{v}^{2}_{0}$, we  define \begin{equation}
    b = \min\left(\frac{1}{2}\frac{\lambda_{\min}(u_{0})}{\lambda_{\max}},\ \frac{1}{4\pi^2}  \frac{\kappa(u_{0})}{\lambda_{\max}} \right).
\end{equation} 
It can be shown that $\mathbb{K}(\boldsymbol{\beta}, \boldsymbol{\beta^*})$ is lower bounded by the positive function $b\left\lVert\boldsymbol{v}_{0}(\boldsymbol{\beta} - \boldsymbol{\beta^*})\right\lVert^{2}$.  With this quadratic lower bound, we derive our finite sample error bonds. We explore the derivation of these constants and how they prove the subsequent theorems in Section~\ref{sec:Aux_Res}.

With $\Theta(u)$ and $b$ defined, we state our first important result, which establishes an upper bound for the probability that the MLE is not within the local neighborhood $\Theta(u)$ for $u \leq u_{0}$, and is analogous to the fact that the MLE is a consistent estimator.  \begin{theorem}
\label{thm:Consistency}
  Suppose Assumptions~\ref{ass:CS}-\ref{ass:Identifiability} hold.  Let $u \leq u_{0}$, $p \geq 2$ be the parameter dimension, $v_{0}$ be a fixed value dependent on $p$ and $0 < \alpha < 1$. Then if 
      $  n^{1/2}ub \geq 6v_{0}\sqrt{-\ln(\alpha) + 2p}$, \begin{equation}
          \mathbb{P}\left(\boldsymbol{\hat{\beta}} \notin \Theta(u)\right) \leq \alpha.
      \end{equation}
\end{theorem}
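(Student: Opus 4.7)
The strategy is to follow the large-deviation argument of \cite{Spokoiny2012_ParametricEstimation}, specialized to our Gaussian regression setup. The starting point is the decomposition of the log-likelihood into a deterministic term (the expected log-likelihood) and a mean-zero stochastic process. Writing $L_{n}(\boldsymbol{\beta}) = \sum_{i=1}^{n}\ell_{i}(\boldsymbol{\beta})$ and $\zeta_{n}(\boldsymbol{\beta}) = L_{n}(\boldsymbol{\beta}) - \mathbb{E}L_{n}(\boldsymbol{\beta})$, one has
\[
L_{n}(\boldsymbol{\beta}) - L_{n}(\boldsymbol{\beta^*}) \;=\; \bigl[\zeta_{n}(\boldsymbol{\beta}) - \zeta_{n}(\boldsymbol{\beta^*})\bigr] - n\,\mathbb{K}(\boldsymbol{\beta},\boldsymbol{\beta^*}).
\]
Because $\boldsymbol{\hat{\beta}}$ maximizes $L_{n}$, the event $\{\boldsymbol{\hat{\beta}}\notin \Theta(u)\}$ forces the existence of some $\boldsymbol{\beta}\notin \Theta(u)$ satisfying $\zeta_{n}(\boldsymbol{\beta}) - \zeta_{n}(\boldsymbol{\beta^*}) \ge n\,\mathbb{K}(\boldsymbol{\beta},\boldsymbol{\beta^*})$. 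Using the quadratic lower bound $\mathbb{K}(\boldsymbol{\beta},\boldsymbol{\beta^*}) \ge b\,\|\boldsymbol{v}_{0}(\boldsymbol{\beta}-\boldsymbol{\beta^*})\|^{2}$ stated just before the theorem, it suffices to bound the probability that the stochastic fluctuation exceeds $nb\|\boldsymbol{v}_{0}(\boldsymbol{\beta}-\boldsymbol{\beta^*})\|^{2}$ somewhere outside $\Theta(u)$.

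\textbf{Stochastic control.} Under Assumption~\ref{ass:CS} with Gaussian errors, the score $\nabla_{\boldsymbol{\beta}}L_{n}(\boldsymbol{\beta^*})$ is a linear functional of $\boldsymbol{\epsilon}$ with coefficients $\sigma^{-1}\nabla_{\boldsymbol{\beta}}f_{\boldsymbol{\beta^*}}(X_{i})$, hence conditionally sub-Gaussian with covariance $n\boldsymbol{v}_{0}^{2}$. The smoothness of $g$ together with the closed-form cosine parameterization of $\gamma_{\boldsymbol{\beta}}$ from Section~\ref{sec:Parameterizing} imply that $\nabla_{\boldsymbol{\beta}}f_{\boldsymbol{\beta}}$ is uniformly bounded on $\boldsymbol{\Theta}$, giving uniform sub-Gaussian control on the increments of $\zeta_{n}$. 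Invoking the deviation inequalities of \cite{Spokoiny2012_ParametricEstimation} (or a direct chaining argument), one obtains, with probability at least $1-\alpha$, a uniform bound of the form
\[
\sup_{\boldsymbol{\beta}\in\Theta(u_{0})}\;\frac{\zeta_{n}(\boldsymbol{\beta}) - \zeta_{n}(\boldsymbol{\beta^*})}{\|\boldsymbol{v}_{0}(\boldsymbol{\beta}-\boldsymbol{\beta^*})\|} \;\le\; C\, v_{0} \sqrt{n}\,\bigl(\sqrt{p}+\sqrt{-\ln\alpha}\bigr),
\]
where the $\sqrt{p}$ term comes from covering the unit ball in $\mathbb{R}^{p}$ and the $\sqrt{-\ln\alpha}$ term from the sub-Gaussian tail.

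\textbf{Peeling, combination, and main obstacle.} I would then apply a peeling argument on shells $S_{j} = \{2^{j-1}u \le \|\boldsymbol{v}_{0}(\boldsymbol{\beta}-\boldsymbol{\beta^*})\| \le 2^{j}u\}\cap \Theta(u_{0})$. On $S_{j}$ the deterministic lower bound gives $n\mathbb{K} \ge nb(2^{j-1}u)^{2}$, whereas the stochastic term is at most $Cv_{0}\sqrt{n}\,(2^{j}u)(\sqrt{p}+\sqrt{-\ln\alpha})$; the former dominates precisely when $n^{1/2}ub \ge C'\, v_{0}(\sqrt{p}+\sqrt{-\ln\alpha})$, which is implied by the assumed condition $n^{1/2}ub \ge 6v_{0}\sqrt{-\ln\alpha + 2p}$. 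A geometric sum over shells preserves the probability budget. The ``far region'' $\boldsymbol{\beta}\notin\Theta(u_{0})$ is treated separately: the second piece in the definition of $b$, namely $\tfrac{1}{4\pi^{2}}\kappa(u_{0})/\lambda_{\max}$, encodes the separation gap $\kappa(u_{0})>0$ and yields the same quadratic KL lower bound globally once combined with $\|\boldsymbol{v}_{0}(\boldsymbol{\beta}-\boldsymbol{\beta^*})\|\le 2\pi\sqrt{\lambda_{\max}}$ on the bounded set $\boldsymbol{\Theta}$. The main obstacle is the uniform stochastic bound in the second paragraph: because $\boldsymbol{\beta}\mapsto f_{\boldsymbol{\beta}}(X)$ is nonlinear through $\cos\|\boldsymbol{\beta}\|$ and $\sin\|\boldsymbol{\beta}\|/\|\boldsymbol{\beta}\|$, a pointwise sub-Gaussian bound does not suffice, and verifying that the chaining constant remains controlled as $\|\boldsymbol{\beta}\|\to\pi$ is the technical heart of the proof; the cosine parameterization of Section~\ref{sec:Parameterizing} is what makes this feasible.
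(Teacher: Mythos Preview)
Your proposal is correct in spirit and follows the same underlying Spokoiny framework, but it is organized differently from the paper. The paper's proof of Theorem~\ref{thm:Consistency} is essentially a one-line citation: it first verifies, in a dedicated subsection of the Supplementary Materials, that the model satisfies all of Spokoiny's abstract i.i.d.\ conditions $(ed_{0})$, $(ed_{1})$, $(i)$, $(\ell_{0})$, $(eu)$, $(\ell_{u})$, and then simply invokes Theorem~5.2 of \cite{Spokoiny2012_ParametricEstimation} with an appropriate choice of the constant $g_{1}(u)$. The quadratic KL lower bound you use is exactly condition $(\ell_{u})$, and your sub-Gaussian control on the stochastic increments is the content of conditions $(ed_{0})$ and $(eu)$; so you are in effect re-deriving the inside of Spokoiny's Theorem~5.2 (peeling plus uniform deviation bound) rather than calling it as a black box. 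Both routes are valid: your expanded sketch makes the mechanics of the large-deviation argument visible in this specific Gaussian setting, while the paper's ``verify-and-cite'' approach keeps the proof modular and delegates the chaining/peeling machinery entirely to the reference. One small note: your uniform stochastic bound is stated over $\Theta(u_{0})$, but the peeling argument needs control over all of $\boldsymbol{\Theta}$; this is handled in the paper by verifying $(eu)$ globally via Lemma~\ref{lem:subexp}, which holds for every $\boldsymbol{\beta}\in\boldsymbol{\Theta}$, so your ``far region'' paragraph and your local bound should really be merged into a single uniform statement.
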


The positive value $v_{0}$ is dependent on the parameter dimension $p$ and we discuss it in greater detail in Section~\ref{sec:Aux_Res}. This theorem states that for the MLE to be within the local set $\Theta(u)$ with high confidence, $n^{1/2}$ needs to surpass a certain threshold dependent on $u$ and $b$. Because $u$ is chosen by the user, the constant $b$ is the primary factor determining the efficiency of the estimator. If either $\lambda_{\min}(u_{0})$ or $\kappa(u_{0})$ are small, then the analyst needs to collect a large amount of data in order to obtain estimates near the ground truth. In the case where the sample size $n$ is allowed to grow, then if $u \geq Cn^{-1/2}$ for a $C > 0$ derived from the theorem above, then this large deviation bound remains. Thus, the neighborhood can shrink at a rate of $n^{-1/2}$, which establishes root-$n$ consistency of the MLE. 

Now we aim to establish finite sample error bounds for the asymptotic normal approximation for the sampling distribution of the MLE. Let $u$ be small enough such that $\rho(u) + \delta(u) < 1$, where $\rho(u)$  and $\delta(u)$ are positive multiples of $u$ and are defined explicitly in the Supplementary Materials.  We define the following quantities: \begin{equation}
         \boldsymbol{f_{\epsilon}} = (1 - \rho(u) - \delta(u))\boldsymbol{v}_{0}^{2}\quad\text{and} \quad\boldsymbol{\xi}_{\boldsymbol{\epsilon}} = (n\boldsymbol{f_{\epsilon}})^{-1/2} \sum\limits_{i = 1}^{n} \nabla_{\boldsymbol{\beta}} \ell_{i}(\boldsymbol{\beta^*}).
\end{equation}
 As stated before, $\ell_{i}(\boldsymbol{\beta})$ is the log-likelihood for a single data point at $\boldsymbol{\beta}$. The quantities are scaled versions of the Fisher information matrix and normalized score, which is zero mean and asymptotically normal under the traditional i.i.d setup. They are used in the following theorem, which provides a bound for how much the quantity $\sqrt{n\boldsymbol{f_{\epsilon}}}(\boldsymbol{\hat{\beta}} - \boldsymbol{\beta^*})$ deviates from the random vector $\boldsymbol{\xi_{\epsilon}}$.

\begin{theorem} 
\label{thm:Normality}
Suppose Assumptions~\ref{ass:CS}-\ref{ass:Identifiability} are satisfied. If $u \leq u_{0}$ such that $\rho(u) + \delta(u) < 1$, then it holds on a random set $C_{\boldsymbol{\epsilon}}(\sqrt{n}u)$ and a random value $\Delta_{\boldsymbol{\epsilon}}(u)$ that \begin{equation}
\Bigl\|\sqrt{n\boldsymbol{f_{\epsilon}}}\left(\boldsymbol{\hat{\beta}} - \boldsymbol{\beta^*}\right) - \boldsymbol{\xi_{\epsilon}}\Bigr\|^{2} \leq 2\Delta_{\boldsymbol{\epsilon}}(u),
\end{equation}
where the probability of $C_{\boldsymbol{\epsilon}}(\sqrt{n}u)$ and the distribution of $\Delta_{\boldsymbol{\epsilon}}(u)$ are dependent on $u$.

\end{theorem}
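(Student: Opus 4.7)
The proof will follow the finite-sample framework of \cite{Spokoiny2012_ParametricEstimation}, combining a local Wilks-type quadratic expansion of the log-likelihood with the consistency bound of Theorem~\ref{thm:Consistency}. Writing $L(\boldsymbol{\beta}) = \sum_{i=1}^{n}\ell_{i}(\boldsymbol{\beta})$, the plan is to show that, on a high-probability event $C_{\boldsymbol{\epsilon}}(\sqrt{n}u)$ and for all $\boldsymbol{\beta}\in \Theta(u)$, the log-likelihood is sandwiched between the quadratic surrogate
\[
L_{Q}(\boldsymbol{\beta}) = L(\boldsymbol{\beta^*}) + \nabla L(\boldsymbol{\beta^*})^{T}(\boldsymbol{\beta}-\boldsymbol{\beta^*}) - \tfrac{n}{2}(\boldsymbol{\beta}-\boldsymbol{\beta^*})^{T}\boldsymbol{f_{\epsilon}}(\boldsymbol{\beta}-\boldsymbol{\beta^*})
\]
and itself, up to an error controlled by $\Delta_{\boldsymbol{\epsilon}}(u)$; the conclusion then drops out of a comparison of the maximizers of $L$ and $L_{Q}$.

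Concretely, I would begin with a third-order Taylor expansion of $\ell_{i}$ around $\boldsymbol{\beta^*}$. Under Assumption~\ref{ass:CS} and the cosine-basis parameterization of Section~\ref{sec:Parameterizing}, third derivatives of $f_{\boldsymbol{\beta}}$ are uniformly bounded on the admissible set, while the Gaussian error structure makes the score a sub-exponential random vector. The quantities $\rho(u)$ and $\delta(u)$ would then be identified as, respectively, the deterministic bias of the expected Hessian relative to $\boldsymbol{v}_{0}^{2}$ across $\Theta(u)$ and the stochastic deviation of the observed Fisher information from its expectation; the hypothesis $\rho(u)+\delta(u)<1$ guarantees that the effective information $\boldsymbol{f_{\epsilon}} = (1-\rho(u)-\delta(u))\boldsymbol{v}_{0}^{2}$ is positive definite and provides a valid lower quadratic envelope. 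Matrix Bernstein and sub-exponential concentration inequalities, exactly as in \cite{Spokoiny2012_ParametricEstimation}, then define $C_{\boldsymbol{\epsilon}}(\sqrt{n}u)$ and furnish an explicit form for $\Delta_{\boldsymbol{\epsilon}}(u)$.

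Having established the Wilks bound, passing to the rescaled variable $\boldsymbol{u} = \sqrt{n\boldsymbol{f_{\epsilon}}}(\boldsymbol{\beta}-\boldsymbol{\beta^*})$ rewrites the surrogate as $L_{Q}(\boldsymbol{\beta}) - L(\boldsymbol{\beta^*}) = \tfrac{1}{2}\|\boldsymbol{\xi_{\epsilon}}\|^{2} - \tfrac{1}{2}\|\boldsymbol{u}-\boldsymbol{\xi_{\epsilon}}\|^{2}$, whose unique maximizer $\boldsymbol{\tilde{\beta}}$ satisfies $\sqrt{n\boldsymbol{f_{\epsilon}}}(\boldsymbol{\tilde{\beta}}-\boldsymbol{\beta^*}) = \boldsymbol{\xi_{\epsilon}}$. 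Applying Theorem~\ref{thm:Consistency} places $\boldsymbol{\hat{\beta}} \in \Theta(u)$, and the concentration control on $\|\boldsymbol{\xi_{\epsilon}}\|$ (built into $C_{\boldsymbol{\epsilon}}(\sqrt{n}u)$) places $\boldsymbol{\tilde{\beta}}$ in $\Theta(u)$ as well, so that the Wilks bound applies at both points. Using $L(\boldsymbol{\hat{\beta}}) \geq L(\boldsymbol{\tilde{\beta}})$, the chain
\[
L_{Q}(\boldsymbol{\tilde{\beta}}) - L_{Q}(\boldsymbol{\hat{\beta}}) = [L_{Q}(\boldsymbol{\tilde{\beta}}) - L(\boldsymbol{\tilde{\beta}})] + [L(\boldsymbol{\tilde{\beta}}) - L(\boldsymbol{\hat{\beta}})] + [L(\boldsymbol{\hat{\beta}}) - L_{Q}(\boldsymbol{\hat{\beta}})]
\]
is bounded by a constant multiple of $\Delta_{\boldsymbol{\epsilon}}(u)$, and since the left-hand side equals $\tfrac{1}{2}\|\sqrt{n\boldsymbol{f_{\epsilon}}}(\boldsymbol{\hat{\beta}}-\boldsymbol{\beta^*}) - \boldsymbol{\xi_{\epsilon}}\|^{2}$ by direct computation with the quadratic form, rearranging delivers the stated $2\Delta_{\boldsymbol{\epsilon}}(u)$ bound after the normalization of $\Delta_{\boldsymbol{\epsilon}}(u)$ is fixed.

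The main obstacle is verifying the regularity conditions that license the Wilks expansion in our specific model. Uniform sub-exponential control of $\nabla L(\boldsymbol{\beta^*})$ and uniform boundedness of the higher-order derivatives of $f_{\boldsymbol{\beta}}$ over $\boldsymbol{\Theta}$ must be tracked carefully, because the closed-form expression for $\gamma_{\boldsymbol{\beta}}$ depends trigonometrically on $\|\boldsymbol{\beta}\|$ and its $\boldsymbol{\beta}$-gradients can blow up as $\|\boldsymbol{\beta}\|\to \pi$; this is where the compactness of the parameter space and Assumption~\ref{ass:PI_Px} are crucial, as they ensure integrability of the relevant quadratic forms against $\mathbb{P}_{X}$. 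A secondary delicate point is calibrating the random set $C_{\boldsymbol{\epsilon}}(\sqrt{n}u)$ so that it simultaneously controls the Hessian fluctuation (via $\delta(u)$), the score magnitude (so that $\boldsymbol{\tilde{\beta}}\in\Theta(u)$), and the Taylor remainder uniformly over $\Theta(u)$.
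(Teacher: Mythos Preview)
Your proposal is correct and follows essentially the same route as the paper: both rely on the finite-sample framework of \cite{Spokoiny2012_ParametricEstimation}, with the regularity conditions (sub-exponential score, bounded higher derivatives of $f_{\boldsymbol{\beta}}$, quadratic lower bound on the KL divergence) verified via the auxiliary lemmas of Section~\ref{sec:Aux_Res}. The paper's own proof is in fact a one-line citation of Theorem~5.3 of \cite{Spokoiny2012_ParametricEstimation} once those conditions are in hand, whereas you have additionally sketched the internal Wilks-expansion mechanics of that theorem; this is extra work but not a different approach.
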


If $u$ is chosen to be sufficiently small, then the value $\Delta_{\boldsymbol{\epsilon}}(u)$ will also be small with high probability. In addition, if $n$ is sufficiently large, then it is possible to show that $\mathbb{P}(C_{\boldsymbol{\epsilon}}(\sqrt{n}u))$ will be near 1. Thus, for sufficiently small $u$ and large enough $n$, the probability that the value $\sqrt{n\boldsymbol{f_{\epsilon}}}(\boldsymbol{\hat{\beta}} - \boldsymbol{\beta^*})$, which closely approximates the normalized MLE, significantly deviates from $\boldsymbol{\xi_{\epsilon}}$ is small. As for large $n$,  $\boldsymbol{\xi_{\epsilon}}$ is approximately standard normal, this theorem establishes that the sampling distribution of the MLE is approximately normal for a large enough sample size, which matches standard asymptotic theory. 

 Theorems~\ref{thm:Consistency} and~\ref{thm:Normality} can easily be extended to establish MLE normality in the growing parameter case, where $p$ is a function of $n$, all constants are now dependent on $p$, in which we indicate with a subscript, and $\boldsymbol{\beta^{*}_{\textit{p}}}$  is a sequence of ground truth parameters. However, for the error bounds to remain small as the dimension increases, we require additional assumptions on the parameter dimension function $p = p(n)$ and the radii of the local sets $u_{p}$.
 \begin{assumption}[Conditions on $u_p$]
 \label{ass:u_p}
We require the following:
\begin{enumerate}[label=(\roman*),itemjoin={{\quad}}, itemjoin*={{; \quad}}]

    \item $u_p \to 0$,
    \item $u_p < \dfrac{(\tau_0)_p}{\sqrt{(\lambda_{\max})_p}}$,
    \item $\rho_p(u_p) + \delta_p(u_p) \to 0$,
    \item $\sqrt{n}\,u_p \to \infty$.
\end{enumerate}
\end{assumption}
Essentially, we require $u_{p}$ to decrease in size while ensuring the local sets remain in the neighborhood where $\mathbb{K}(\boldsymbol{\beta}_{p}, \boldsymbol{\beta^{*}_{\textit{p}}})$ is strictly convex. In addition, we would like $u_{p}$ to decrease such that the scaling factor $\rho_p(u_p) + \delta_p(u_p)$ of Theorem \ref{thm:Normality} disappears. Next, we require the following on the function $p(n)$: 

\begin{assumption}[Growth Rate of $p(n)$] 
\label{ass:p_n}
 Let $\alpha_{p} \rightarrow 0$ and $p(n)$ be an increasing function such that $p(n) \rightarrow \infty $ and 
    \begin{enumerate}[label=(\roman*),itemjoin={{\quad}}, itemjoin*={{; \quad}}]
\item $ n^{1/2}(u_{p})(b_{p})  \geq   6(v_{0})_{p}\sqrt{-\ln(\alpha_{p}) + 2p(n)}$,
\item $\rho(u_{p})\left(1 +\sqrt{-\ln(\alpha_{p}) + 2p(n)}\right)  \rightarrow 0$,
\item If $Y$ is a $N(0, \sigma^2)$ random variable, then  $\mathbb{P}\left(\left\lVert Y \right\lVert(v_{0})_{p}  \leq \sqrt{n}u_{p}\right) \rightarrow 1 $. 
    \end{enumerate}
\end{assumption}
The first condition ensures that the MLE $\boldsymbol{\hat{\beta}}_{p}$ is located in $\Theta(u_{p})$ with high probability. The next two assumptions ensure that the $\mathbb{P}\left(C_{\boldsymbol{\epsilon}, p}(\sqrt{n}u_{p})\right)$ is near $1$ and $\Delta_p(u_{p})$ is near $0$ with high probability for large $n$ and $p$. With these assumptions established, we apply our finite sample bounds to the growing parameter case.  
\begin{theorem}
\label{thm:Growing_Parameter}
   Assume Assumptions~\ref{ass:CS}-\ref{ass:Identifiability} and Assumptions~\ref{ass:u_p}-\ref{ass:p_n} hold. Let $p \geq 2$  and $\boldsymbol{\beta^{*}_{\textit{p}}}$ converge to an infinite dimensional limit $\boldsymbol{\beta}$, where $\left\lVert \boldsymbol{\beta} \right\lVert < \pi $. Let $\boldsymbol{\hat{\beta}}_{p}$ be the MLE in dimension $p$ and let $u_{p}$, $\alpha_{p}$ and a function $p=p(n)$  satisfy all the conditions from Assumptions~\ref{ass:u_p} and~\ref{ass:p_n}. Then  
\begin{equation}
             \mathbb{P}\left(\boldsymbol{\hat{\beta}}_{p} \notin \Theta_p(u_{p})\right) \leq \alpha_{p} \rightarrow 0
    \end{equation}
and on the random set $C_{\boldsymbol{\epsilon}, p}(\sqrt{n}u_{p})$ \begin{equation}
\Bigl\|\sqrt{n(\boldsymbol{f}_{\boldsymbol{\epsilon}})_{p}}\left(\boldsymbol{\hat{\beta}}_{p} -\boldsymbol{\beta^{*}_{\textit{p}}}\right) - \left(\boldsymbol{\xi}_{\boldsymbol{\epsilon}}\right)_{p}\Bigr\|^{2}  \leq 2\Delta_{\boldsymbol{\epsilon}, p}\left(u_{p}\right), 
\end{equation}
where $\mathbb{P}\left(C_{\boldsymbol{\epsilon}, p}(\sqrt{n}u_{p})\right) \rightarrow 1$ and  $\Delta_{\boldsymbol{\epsilon}, p}\left(u_{p}\right)  \rightarrow 0$ in probability as $p \rightarrow \infty$. 
\end{theorem}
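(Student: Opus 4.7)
The theorem is a dimension-indexed corollary of Theorems \ref{thm:Consistency} and \ref{thm:Normality}, so the plan is to apply each of those results in dimension $p$ with radius $u_p$, and then pass to the limit using Assumptions \ref{ass:u_p} and \ref{ass:p_n}. Throughout I would keep careful track of the $p$-dependence of every constant inherited from the fixed-dimension analysis, which is the distinguishing feature of the growing-parameter regime.

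For the large-deviation statement I would invoke Theorem \ref{thm:Consistency} with $u = u_p$, $\alpha = \alpha_p$, and the dimension-indexed constants $b_p$ and $(v_0)_p$. The admissibility condition $u_p \leq (u_0)_p$ is provided by Assumption \ref{ass:u_p}(ii), and the sample-size condition $n^{1/2} u_p b_p \geq 6 (v_0)_p \sqrt{-\ln \alpha_p + 2 p(n)}$ is Assumption \ref{ass:p_n}(i) verbatim. Theorem \ref{thm:Consistency} then returns $\mathbb{P}(\boldsymbol{\hat{\beta}}_p \notin \Theta_p(u_p)) \leq \alpha_p$, and since $\alpha_p \to 0$ the first claim follows.

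For the approximate-normality statement I would invoke Theorem \ref{thm:Normality} with $u = u_p$. Assumption \ref{ass:u_p}(ii)--(iii) gives both $u_p \leq (u_0)_p$ and, for all sufficiently large $n$, $\rho_p(u_p) + \delta_p(u_p) < 1$, so the theorem delivers the inequality $\bigl\|\sqrt{n(\boldsymbol{f}_{\boldsymbol{\epsilon}})_p}(\boldsymbol{\hat{\beta}}_p - \boldsymbol{\beta^{*}_{\textit{p}}}) - (\boldsymbol{\xi}_{\boldsymbol{\epsilon}})_p\bigr\|^2 \leq 2 \Delta_{\boldsymbol{\epsilon},p}(u_p)$ on $C_{\boldsymbol{\epsilon},p}(\sqrt{n} u_p)$. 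It then remains to verify the two probabilistic limits. Following the construction in the Supplementary Materials, $C_{\boldsymbol{\epsilon},p}(\sqrt{n} u_p)$ is the event that the Gaussian normalized score $(\boldsymbol{\xi}_{\boldsymbol{\epsilon}})_p$ lies in a ball of radius of order $(v_0)_p^{-1} \sqrt{n} u_p$, so its probability is driven to one by Assumption \ref{ass:p_n}(iii). The remainder $\Delta_{\boldsymbol{\epsilon},p}(u_p)$ splits, as in the fixed-dimension proof, into a deterministic term of order $\rho_p(u_p) (1 + p(n))$ and a stochastic term of order $\rho_p(u_p) \|(\boldsymbol{\xi}_{\boldsymbol{\epsilon}})_p\|^2$; both are tamed by combining Assumption \ref{ass:p_n}(ii) with the Gaussian concentration $\|(\boldsymbol{\xi}_{\boldsymbol{\epsilon}})_p\|^2 = O_{\mathbb{P}}(p(n))$.

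The principal obstacle is dimensional bookkeeping. Every constant in the statement, namely $(u_0)_p$, $b_p$, $(v_0)_p$, $\rho_p$, and $\delta_p$, is built from $\lambda_{\min}(\boldsymbol{v}_0^2)$, $\lambda_{\max}(\boldsymbol{v}_0^2)$, the curvature lower bound $\lambda_{\min}((u_0)_p)$, the identifiability constant $\kappa((u_0)_p)$, and the Taylor remainder of the log-likelihood of $g \circ \gamma_{\boldsymbol{\beta}}$ at $\boldsymbol{\beta^{*}_{\textit{p}}}$. One must verify that none of these constants degrade faster than Assumption \ref{ass:p_n} can absorb as the cosine-basis truncation $p$ grows; in particular, the balance between $\rho_p(u_p) \to 0$ and the $\sqrt{p}$ inflation in both the dimensional tail bound and the score norm is precisely what Assumption \ref{ass:p_n}(ii) is designed to encode. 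The technical core of the argument is therefore the estimation of the remainder from Theorem \ref{thm:Normality} under this balanced condition.
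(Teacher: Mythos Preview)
Your high-level plan is the paper's plan: apply Theorems \ref{thm:Consistency} and \ref{thm:Normality} in dimension $p$ with radius $u_p$, then push the limits through via Assumptions \ref{ass:u_p}--\ref{ass:p_n}. The first two paragraphs are fine and match the paper's argument essentially line for line.

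The gap is in how you handle $\mathbb{P}\bigl(C_{\boldsymbol{\epsilon},p}(\sqrt{n}u_p)\bigr)\to 1$. Recall $C_{\boldsymbol{\epsilon}}(\sqrt{n}u)=\{\boldsymbol{\hat{\beta}}\in\Theta(u)\}\cap\{\|\boldsymbol{\xi}_{\underline{\boldsymbol{\epsilon}}}\|\le\sqrt{n}u\}$ (note $\underline{\boldsymbol{\epsilon}}$, not $\boldsymbol{\epsilon}$). You say ``its probability is driven to one by Assumption \ref{ass:p_n}(iii),'' but that assumption is about a \emph{scalar} $N(0,\sigma^2)$ variable, whereas $(\boldsymbol{\xi}_{\underline{\boldsymbol{\epsilon}}})_p$ is a $p$-dimensional random vector. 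The connection is not automatic, and treating the score as an approximately Gaussian $p$-vector with $\|(\boldsymbol{\xi}_{\boldsymbol{\epsilon}})_p\|^2=O_{\mathbb{P}}(p(n))$, as you propose, would require a condition like $p(n)=o(nu_p^2)$ rather than the scalar condition actually assumed. The paper closes this gap by an explicit computation exploiting the regression structure: since $\nabla_{\boldsymbol{\beta}}\ell_i(\boldsymbol{\beta}^*)=\sigma^{-2}(y_i-f_{\boldsymbol{\beta}^*}(x_i))\nabla_{\boldsymbol{\beta}}f_{\boldsymbol{\beta}^*}(x_i)$ with the residual $N(0,\sigma^2)$ and $\|\nabla_{\boldsymbol{\beta}}f_{\boldsymbol{\beta}^*}(x_i)\|\le Cp\ln p$, one obtains $\|(\boldsymbol{\xi}_{\underline{\boldsymbol{\epsilon}}})_p\|\le (v_0)_p\,|Y|$ for a single $Y\sim N(0,\sigma^2)$, after which Assumption \ref{ass:p_n}(iii) applies verbatim. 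This reduction to a one-dimensional Gaussian is the technical heart of the step and is missing from your plan.

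A smaller point: your final paragraph suggests one must ``verify that none of these constants degrade faster than Assumption \ref{ass:p_n} can absorb.'' This is not part of the proof; the theorem \emph{assumes} \ref{ass:u_p}--\ref{ass:p_n} hold, so the balance between the $p$-dependent constants and the sample size is hypothesized, not derived. For the spread, the paper's decomposition is $\Delta_{\boldsymbol{\epsilon}}=\Diamond_{\boldsymbol{\epsilon}}+\Diamond_{\underline{\boldsymbol{\epsilon}}}+(\|\boldsymbol{\xi}_{\boldsymbol{\epsilon}}\|^2-\|\boldsymbol{\xi}_{\underline{\boldsymbol{\epsilon}}}\|^2)/2$: the $\Diamond$ terms are controlled by Assumption \ref{ass:p_n}(ii) via the Spokoiny high-probability bound $\Diamond_{\boldsymbol{\epsilon}}(u)\le\rho(u)(1+\sqrt{-\ln\alpha+2p})^2$, and the score-norm difference vanishes because its prefactor is governed by $\rho_p(u_p)+\delta_p(u_p)\to 0$ from Assumption \ref{ass:u_p}(iii). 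Your ``deterministic/stochastic'' split does not line up cleanly with this.
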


This theorem can be viewed as an extension of the preceding results, allowing the relevant constants to vary with $p$. It characterizes the rate at which $p = p(n)$ may increase while still preserving the finite-sample consistency and asymptotic normality of the estimators. This rate depends critically on how the quantities $u_p$ and $b_p$ evolve with $p$. Owing to the complexity of the diffeomorphic parameterization, deriving these rates analytically is challenging and typically requires numerical investigation. In particular, if $u_p$ and $b_p$ decay rapidly to zero, then $p(n)$ must grow more slowly, implying that substantially larger sample sizes are needed to increase the dimension of $\boldsymbol{\beta}$ without incurring excessive error.

Under the conditions of Theorem~\ref{thm:Growing_Parameter}, the estimator $\boldsymbol{\hat{\beta}}_{p}$ converges in probability to an infinite-dimensional ground truth parameter $\boldsymbol{\beta}^{*}$, and the induced diffeomorphism $\boldsymbol{\gamma}_{\boldsymbol{\hat{\beta}}}$ converges to its infinite-dimensional counterpart. Moreover, as $\mathbb{P}\!\left(C_{\boldsymbol{\epsilon}, p}(\sqrt{n}u_{p})\right) \to 1$ and $\Delta_{\boldsymbol{\epsilon}, p}(u_p) \to 0$ in probability, the distribution of $\boldsymbol{\hat{\beta}}_{p}$ for large $p$ becomes asymptotically close to that of $(\boldsymbol{\xi}_{\boldsymbol{\epsilon}})_{p}$. If $(\boldsymbol{\xi}_{\boldsymbol{\epsilon}})_{p}$ is bounded in probability, then it is approximately normally distributed. Consequently, the asymptotic properties of the maximum likelihood estimator are preserved even in the presence of an infinite-dimensional diffeomorphic structure.

\subsection{Finite Sample Error Bounds for the Stationary Points}
\label{sec:FST_SPs}

If we again assume the ground truth function $f \in \mathcal{F}_{M}$ is of the form $g \circ \gamma_{\boldsymbol{\beta^*}}$ and $b_{1}, \dots, b_{M}$ are the stationary points of the template function $g$, the function \begin{equation}
\boldsymbol{\Phi} : \mathbb{R}^{p} \to \mathbb{R}^{M}, \qquad 
\boldsymbol{\Phi}(\boldsymbol{\beta}) := 
\left( \gamma_{\boldsymbol{\beta}}^{-1}(b_{1}), \, \dots, \, \gamma_{\boldsymbol{\beta}}^{-1}(b_{M}) \right)
\end{equation}
maps $\boldsymbol{\beta^*}$ to the stationary points of $f$. If we input $\boldsymbol{\hat{\beta}}$ into $\boldsymbol{\Phi}$,  we output the maximum likelihood estimate of the stationary points. Now, assume that $f= g \circ \gamma_{\boldsymbol{\beta^*}} $ where Assumptions~\ref{ass:CS},~\ref{ass:PI_Px} and~\ref{ass:Identifiability} from the previous subsection apply. In order to derive finite sample error bounds for $\boldsymbol{\boldsymbol{\Phi}}(\boldsymbol{\hat{\beta}})$, we need the following assumption: \begin{assumption}[Strict Positivity of the Derivative of $\gamma_{\boldsymbol{\beta^*}}$]
\label{ass:pos_deriv_gamma}
    The value $\inf\limits_{x \in [0, 1]} \gamma_{\boldsymbol{\beta^*}}'(x) = m$, a positive number.
\end{assumption}
With this assumption, we define the compact set \begin{equation}
    D_{m_{0}} = \{ \boldsymbol{\beta} \in \boldsymbol{\Theta} \mid \inf\limits_{x \in [0, 1]} \gamma_{\boldsymbol{\beta}}'(x) \geq  m_{0} \},
\end{equation}where $m > m_{0} > 0$. As shown in the Supplementary Materials, $\boldsymbol{\Phi}$ is a Lipschitz function on the set $D_{m_{0}}$.  Because $\boldsymbol{\beta^*} \in D_{m_{0}}(\boldsymbol{\beta})$, $\gamma'_{\boldsymbol{\beta}}(x)$ is a continuous function of $\boldsymbol{\beta}$ and $\gamma'_{\boldsymbol{\beta}}(x)$ is a uniformly continuous function of $x$, $\boldsymbol{\Phi}$ is Lipschitz in a neighborhood around $\boldsymbol{\beta^*}$. Thus, there exists a $\phi > 0$ such that if $\left\lVert\boldsymbol{\beta} - \boldsymbol{\beta^*}\right\lVert \leq \phi$, then  $\left\lVert\Phi(\boldsymbol{\beta}) - \Phi(\boldsymbol{\beta^*})\right\lVert \leq L_{\phi}\left\lVert\boldsymbol{\beta} - \boldsymbol{\beta^*} \right\lVert $ for some $L_{\phi} > 0$. This leads to the following result, which provides establishes conditions for $\Phi(\boldsymbol{\hat{\beta}})$ to be near the ground truth with high probability. 
\begin{corollary}
 Assume Assumptions \ref{ass:CS}-\ref{ass:Identifiability}, and \ref{ass:pos_deriv_gamma} hold. Let $\lambda_{\min}$ be the minimum eigenvalue of $\boldsymbol{v}_{0}^{2}$, $u \leq \min(u_{0}, \sqrt{\lambda_{\min}} \phi)$, and $0 < \alpha < 1$. If the conditions of Theorem~\ref{thm:Consistency} are satisfied, then \begin{align*}
        \mathbb{P}\left( \left\lVert\Phi(\boldsymbol{\hat{\beta}}) - \Phi(\boldsymbol{\beta^*})\right\lVert \leq \frac{L_{\phi}}{\sqrt{\lambda_{\min}}} u\right) > 1 - \alpha.
    \end{align*}
\end{corollary}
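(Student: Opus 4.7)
The plan is to obtain the stated probability bound by chaining three ingredients: the large-deviation bound from Theorem~\ref{thm:Consistency} for the MLE, a norm comparison between the $\boldsymbol{v}_{0}$-weighted and Euclidean distances on $\boldsymbol{\Theta}$, and the local Lipschitz property of $\boldsymbol{\Phi}$ around $\boldsymbol{\beta^*}$ described in the paragraph preceding the statement.

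First, because all the hypotheses of Theorem~\ref{thm:Consistency} are assumed to hold and $u \leq u_{0}$, the event $A := \{\boldsymbol{\hat{\beta}} \in \Theta(u)\}$ satisfies $\mathbb{P}(A) > 1 - \alpha$. By Definition~\ref{def:local_sets}, $A$ is the event $\{\|\boldsymbol{v}_{0}(\boldsymbol{\hat{\beta}} - \boldsymbol{\beta^*})\| \leq u\}$. Second, since $\boldsymbol{v}_{0}$ is the symmetric positive-definite square root of $\boldsymbol{v}_{0}^{2}$, its minimum singular value equals $\sqrt{\lambda_{\min}}$, so for every $\boldsymbol{\beta}$ we have the deterministic inequality
\begin{equation*}
\sqrt{\lambda_{\min}}\,\|\boldsymbol{\beta} - \boldsymbol{\beta^*}\| \;\leq\; \|\boldsymbol{v}_{0}(\boldsymbol{\beta} - \boldsymbol{\beta^*})\|.
\end{equation*}
On $A$ this yields $\|\boldsymbol{\hat{\beta}} - \boldsymbol{\beta^*}\| \leq u/\sqrt{\lambda_{\min}}$.

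Third, the constraint $u \leq \sqrt{\lambda_{\min}}\,\phi$ guarantees that $u/\sqrt{\lambda_{\min}} \leq \phi$, which places $\boldsymbol{\hat{\beta}}$ inside the Euclidean ball of radius $\phi$ centered at $\boldsymbol{\beta^*}$ on the event $A$. Inside that ball, the Lipschitz property established in the paragraph preceding the corollary (namely that $\boldsymbol{\Phi}$ is $L_{\phi}$-Lipschitz whenever $\|\boldsymbol{\beta} - \boldsymbol{\beta^*}\| \leq \phi$) gives
\begin{equation*}
\|\boldsymbol{\Phi}(\boldsymbol{\hat{\beta}}) - \boldsymbol{\Phi}(\boldsymbol{\beta^*})\| \;\leq\; L_{\phi}\,\|\boldsymbol{\hat{\beta}} - \boldsymbol{\beta^*}\| \;\leq\; \frac{L_{\phi}}{\sqrt{\lambda_{\min}}}\,u.
\end{equation*}
Combining this deterministic implication with $\mathbb{P}(A) > 1 - \alpha$ yields the claimed probability bound.

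The argument is essentially a delta-method style deduction, so there is no serious obstacle beyond bookkeeping. The only subtlety worth double-checking is that the constant $L_{\phi}$ from the preceding Lipschitz discussion is truly independent of the random event $A$, and that the inclusion $\Theta(u) \subseteq \{\boldsymbol{\beta} : \|\boldsymbol{\beta} - \boldsymbol{\beta^*}\| \leq \phi\}$ is ensured by the choice $u \leq \sqrt{\lambda_{\min}}\,\phi$; both are immediate from the construction of $D_{m_{0}}$ and the uniform continuity argument given just before the corollary statement.
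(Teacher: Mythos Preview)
Your proof is correct and follows essentially the same argument as the paper: both chain the large-deviation bound from Theorem~\ref{thm:Consistency}, the eigenvalue inequality $\sqrt{\lambda_{\min}}\,\|\boldsymbol{\beta} - \boldsymbol{\beta^*}\| \leq \|\boldsymbol{v}_{0}(\boldsymbol{\beta} - \boldsymbol{\beta^*})\|$, and the local $L_{\phi}$-Lipschitz property of $\boldsymbol{\Phi}$, with only trivial differences in the order of presentation.
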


\begin{proof}
    If $\boldsymbol{\hat{\beta}} \in \Theta(u)$, then \begin{align*}
     \left\lVert\boldsymbol{\hat{\beta}} - \boldsymbol{\beta^*} \right\lVert  \leq \frac{1}{\sqrt{\lambda_{\min}}} \left\lVert \boldsymbol{v}_{0}(\boldsymbol{\hat{\beta}} - \boldsymbol{\beta^*}) \right\lVert   \leq \frac{u}{\sqrt{\lambda_{\min}}} \leq \phi .
     \end{align*}
     Therefore, we conclude that $\left\lVert\Phi(\boldsymbol{\hat{\beta}}) - \Phi(\boldsymbol{\beta^*})\right\lVert \leq L_{\phi}\left\lVert\boldsymbol{\hat{\beta}} - \boldsymbol{\beta^*}\right\lVert  \leq \frac{L_{\phi}}{\sqrt{\lambda_{\min}}}u$. 
Thus, \\ $\mathbb{P}(\boldsymbol{\hat{\beta}} \in \Theta(u)) \leq  \mathbb{P}\left( \left\lVert\Phi(\boldsymbol{\hat{\beta}}) - \Phi(\boldsymbol{\beta^*})\right\lVert \leq \frac{L_{\phi}}{\sqrt{\lambda_{\min}}} u\right)$. As the conditions for Theorem \ref{thm:Consistency} are satisfied, the result follows. 
\end{proof}
Now, because $\Phi$ is Lipschitz on the set $D_{m_{0}}(\boldsymbol{\beta})$, $\Phi$ is first order differentiable almost everywhere in this set by Rademacher's theorem. Thus, assuming that $\boldsymbol{\beta^*}$ does not fall within the negligible set where $\boldsymbol{\Phi}$ fails to be differentiable, we apply the following theorem: 

\begin{theorem}
\label{thm:delta_method}
Assume Assumptions \ref{ass:CS}-\ref{ass:Identifiability}, and \ref{ass:pos_deriv_gamma} are satisfied. Let $\Phi : \mathbb{R}^{p} \to \mathbb{R}^{m}$ be a differentiable mapping at $\boldsymbol{\beta^*} \in D_{m_{0}}(\boldsymbol{\beta})$ with Jacobian $J_\Phi(\boldsymbol{\beta^*}) \neq 0$.  
Define $d_{\boldsymbol{\epsilon}} = \sqrt{\boldsymbol{f}_{\epsilon}}$ and suppose the conditions of Theorem \ref{thm:Normality} are satisfied. Then, on the random set $C_{\boldsymbol{\epsilon}}(\sqrt{n} u)$, we have
\begin{align*}
    \Bigl\| \sqrt{n} \bigl(\Phi(\boldsymbol{\hat{\beta}}) - \Phi(\boldsymbol{\beta^*}) \bigr) 
    - J_\Phi(\boldsymbol{\beta^*}) d_{\boldsymbol{\epsilon}}^{-1} \boldsymbol{\xi}_{\epsilon} \Bigr\|^2
    \;\leq\; 2\, \sigma_1\bigl(J_\Phi(\boldsymbol{\beta^*}) d^{-1}_{\boldsymbol{\epsilon}}\bigr)\, \Delta_{\boldsymbol{\epsilon}}(u) + o_{p}(1),
\end{align*}
    where $\sigma_1\left(J_\Phi(\boldsymbol{\beta^*})d^{-1}_{\boldsymbol{\epsilon}}\right)$ is the largest singular value of $J_\Phi(\boldsymbol{\beta^*}) d^{-1}_{\boldsymbol{\epsilon}}$.
\end{theorem}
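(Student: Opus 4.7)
The plan is to establish Theorem \ref{thm:delta_method} by a finite-sample delta method argument that transports the bound from Theorem \ref{thm:Normality} through a first-order Taylor expansion of $\boldsymbol{\Phi}$. Since $\boldsymbol{\Phi}$ is differentiable at $\boldsymbol{\beta^*}$, I would start from
\begin{equation*}
\boldsymbol{\Phi}(\boldsymbol{\hat{\beta}}) - \boldsymbol{\Phi}(\boldsymbol{\beta^*})
= J_{\boldsymbol{\Phi}}(\boldsymbol{\beta^*})(\boldsymbol{\hat{\beta}} - \boldsymbol{\beta^*}) + R(\boldsymbol{\hat{\beta}}),
\end{equation*}
with $\|R(\boldsymbol{\hat{\beta}})\| = o(\|\boldsymbol{\hat{\beta}} - \boldsymbol{\beta^*}\|)$ as $\boldsymbol{\hat{\beta}} \to \boldsymbol{\beta^*}$. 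Using $\boldsymbol{f_{\epsilon}} = d_{\boldsymbol{\epsilon}}^{2}$ and the identity $\sqrt{n}(\boldsymbol{\hat{\beta}} - \boldsymbol{\beta^*}) = d_{\boldsymbol{\epsilon}}^{-1}\sqrt{n\boldsymbol{f_{\epsilon}}}(\boldsymbol{\hat{\beta}} - \boldsymbol{\beta^*})$, I would multiply by $\sqrt{n}$ and rearrange to obtain the key decomposition
\begin{equation*}
\sqrt{n}\bigl(\boldsymbol{\Phi}(\boldsymbol{\hat{\beta}}) - \boldsymbol{\Phi}(\boldsymbol{\beta^*})\bigr) - J_{\boldsymbol{\Phi}}(\boldsymbol{\beta^*}) d_{\boldsymbol{\epsilon}}^{-1} \boldsymbol{\xi}_{\boldsymbol{\epsilon}}
= J_{\boldsymbol{\Phi}}(\boldsymbol{\beta^*}) d_{\boldsymbol{\epsilon}}^{-1}\bigl[\sqrt{n\boldsymbol{f_{\epsilon}}}(\boldsymbol{\hat{\beta}} - \boldsymbol{\beta^*}) - \boldsymbol{\xi}_{\boldsymbol{\epsilon}}\bigr] + \sqrt{n}\, R(\boldsymbol{\hat{\beta}}),
\end{equation*}
which isolates a ``main term'' that can be bounded by Theorem \ref{thm:Normality} and a ``Taylor remainder'' that must be absorbed into the $o_{p}(1)$ correction.

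On the random set $C_{\boldsymbol{\epsilon}}(\sqrt{n}u)$, Theorem \ref{thm:Normality} gives $\|\sqrt{n\boldsymbol{f_{\epsilon}}}(\boldsymbol{\hat{\beta}} - \boldsymbol{\beta^*}) - \boldsymbol{\xi}_{\boldsymbol{\epsilon}}\|^{2} \leq 2\Delta_{\boldsymbol{\epsilon}}(u)$. The operator-norm inequality $\|Ax\| \leq \sigma_{1}(A)\|x\|$ applied with $A = J_{\boldsymbol{\Phi}}(\boldsymbol{\beta^*}) d_{\boldsymbol{\epsilon}}^{-1}$ then bounds the main term. Squaring the entire difference via $\|u + w\|^{2} \leq \|u\|^{2} + 2\|u\|\|w\| + \|w\|^{2}$, I would identify the principal quadratic contribution with the stated expression involving $\sigma_{1}(J_{\boldsymbol{\Phi}}(\boldsymbol{\beta^*}) d_{\boldsymbol{\epsilon}}^{-1})\,\Delta_{\boldsymbol{\epsilon}}(u)$ and push the cross term $2\|u\|\|w\|$ together with $\|\sqrt{n}\, R(\boldsymbol{\hat{\beta}})\|^{2}$ into the $o_{p}(1)$ residual.

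The main obstacle is verifying that $\sqrt{n}\, R(\boldsymbol{\hat{\beta}}) = o_{p}(1)$ and that the cross term inherits the same order. Writing $\sqrt{n}\|R(\boldsymbol{\hat{\beta}})\| = \sqrt{n}\|\boldsymbol{\hat{\beta}} - \boldsymbol{\beta^*}\| \cdot \eta(\boldsymbol{\hat{\beta}})$ with $\eta(\boldsymbol{\hat{\beta}}) \to 0$ as $\boldsymbol{\hat{\beta}} \to \boldsymbol{\beta^*}$, I would invoke Theorem \ref{thm:Consistency} to guarantee $\eta(\boldsymbol{\hat{\beta}}) = o_{p}(1)$ once $u \to 0$ at the appropriate rate, and use the decomposition $\sqrt{n}(\boldsymbol{\hat{\beta}} - \boldsymbol{\beta^*}) = d_{\boldsymbol{\epsilon}}^{-1}\boldsymbol{\xi}_{\boldsymbol{\epsilon}} + d_{\boldsymbol{\epsilon}}^{-1}[\sqrt{n\boldsymbol{f_{\epsilon}}}(\boldsymbol{\hat{\beta}} - \boldsymbol{\beta^*}) - \boldsymbol{\xi}_{\boldsymbol{\epsilon}}]$ to conclude $\sqrt{n}(\boldsymbol{\hat{\beta}} - \boldsymbol{\beta^*}) = O_{p}(1)$ (the first summand has bounded second moment by the asymptotic normality of the normalized score, and the second is controlled by $\sqrt{2\Delta_{\boldsymbol{\epsilon}}(u)}$ on $C_{\boldsymbol{\epsilon}}(\sqrt{n}u)$). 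The product then vanishes in probability. The delicate bookkeeping lies in ensuring that the intersection of $C_{\boldsymbol{\epsilon}}(\sqrt{n}u)$ with the consistency event $\{\boldsymbol{\hat{\beta}} \in \Theta(u)\}$ carries the required probability so that the Taylor expansion, operator-norm bound, and little-$o$ decay hold simultaneously at the scale $\sqrt{n}$.
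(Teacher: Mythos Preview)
Your proposal is correct and follows essentially the same route as the paper: a first-order Taylor expansion of $\boldsymbol{\Phi}$ at $\boldsymbol{\beta^*}$, the insertion of $d_{\boldsymbol{\epsilon}}^{-1}d_{\boldsymbol{\epsilon}}$ to recognize the quantity controlled by Theorem~\ref{thm:Normality}, an operator-norm bound via $\sigma_{1}\bigl(J_{\boldsymbol{\Phi}}(\boldsymbol{\beta^*})d_{\boldsymbol{\epsilon}}^{-1}\bigr)$, and then an argument that the Taylor remainder scaled by $\sqrt{n}$ is $o_{p}(1)$.

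The only notable difference is in how the $o_{p}(1)$ remainder is justified. The paper appeals directly to root-$n$ consistency of the MLE (via a uniform law of large numbers on the compact set $D_{m_{0}}$), whereas you obtain $\sqrt{n}(\boldsymbol{\hat{\beta}}-\boldsymbol{\beta^*})=O_{p}(1)$ by decomposing it through $d_{\boldsymbol{\epsilon}}^{-1}\boldsymbol{\xi}_{\boldsymbol{\epsilon}}$ and the quantity already bounded by Theorem~\ref{thm:Normality}. Your route is arguably more self-contained, since it recycles Theorems~\ref{thm:Consistency} and~\ref{thm:Normality} rather than invoking a separate consistency result; it also handles the cross term $2\|u\|\|w\|$ explicitly, which the paper absorbs into the $o_{p}(1)$ without comment.
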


The proof of this theorem follows from an argument analogous to the multivariate delta method together with an application of Theorem~\ref{thm:Normality}. Since the singular value $\sigma_1(J_\Phi(\boldsymbol{\beta}^{*}))$ is fixed, the result implies that when $u$ is sufficiently small and $n$ is large, the distribution of $\sqrt{n}\,(\Phi(\boldsymbol{\hat{\beta}})-\Phi(\boldsymbol{\beta}^{*}))$ is well approximated by that of $J_\Phi(\boldsymbol{\beta}^{*})\, d_{\boldsymbol{\epsilon}}^{-1}\, \boldsymbol{\xi}_{\boldsymbol{\epsilon}}$. Because $\boldsymbol{\xi}_{\boldsymbol{\epsilon}}$ is approximately standard normal for small $u$ and large $n$, the vector $J_\Phi(\boldsymbol{\beta}^{*})\, d_{\boldsymbol{\epsilon}}^{-1}\boldsymbol{\xi}_{\boldsymbol{\epsilon}}$ is approximately normal with mean $0$ and covariance $J_\Phi(\boldsymbol{\beta}^{*})\, d_{\boldsymbol{\epsilon}}^{-2}\, J_\Phi(\boldsymbol{\beta}^{*})^{T}$. Since $d_{\boldsymbol{\epsilon}}^{-2}$ coincides with the inverse Fisher information for a single observation (up to a mild rescaling), this result mirrors the classical multivariate delta method. Thus, unlike existing approaches discussed in the introduction, we obtain a joint asymptotic normal distribution for the MLE of the stationary points, along with a finite-sample bound quantifying the accuracy of this approximation. As in Theorems~\ref{thm:Consistency} and~\ref{thm:Normality}, the quality of the approximation depends on both $u$ and $b$. Finally, these results extend to the growing-parameter setting provided that the strict positivity of $\gamma'_{\boldsymbol{\beta}^{*}}$ and the necessary differentiability conditions hold for all $\boldsymbol{\beta}^{*}_{p}$.

\subsection{Auxiliary Results}
\label{sec:Aux_Res}
The theorems in the preceding subsections are direct consequences of the i.i.d.\ theory developed by \cite{Spokoiny2012_ParametricEstimation}. In essence, once the log-likelihood satisfies a set of regularity conditions, finite-sample error bounds for the MLE follow immediately. We investigate these conditions thoroughly in the Supplementary Materials, but provide here a concise summary of the key results. 

The first requirement is to verify that the gradient of the stochastic component $\zeta_{i}(\boldsymbol{\beta}) = \ell_{i}(\boldsymbol{\beta}) - \mathbb{E}\ell_{i}(\boldsymbol{\beta})$, where $\ell_{i}(\boldsymbol{\beta})$ is the log-likelihood contribution of a single observation, is a sub-exponential random vector for every $i = 1, \ldots, n$. This is not obvious, since the log-likelihood of our model is non-concave. However, the crucial fact that $\left\lVert \boldsymbol{\beta} \right\rVert < \pi$ leads to the following lemma.

\begin{lemma}
\label{lem:gradient_lem}
 Suppose Assumptions~\ref{ass:CS}-\ref{ass:Identifiability} hold.
       Then $f_{\boldsymbol{\beta}}(x) = g(\gamma_{\boldsymbol{\beta}}(x))$ is a thrice continuously differentiable function with respect to $\boldsymbol{\beta}$ for all $x \in [0, 1]$. In addition, the norm of the $\nabla_{\boldsymbol{\beta}} f_{\boldsymbol{\beta}}(x)$ is bounded by the function  $C(p \ln p)$ where $C > 0$.
\end{lemma}

The lemma follows from standard results in multivariable calculus, and we provide a complete proof in the Supplementary Materials. Since the gradient of $f_{\boldsymbol{\beta}}$ is a bounded random vector, it follows that $\zeta_{i}(\boldsymbol{\beta})$ is sub-Gaussian, and hence sub-exponential. This is formalized in the next lemma.

\begin{lemma}
\label{lem:subexp}
Suppose Assumptions~\ref{ass:CS}--\ref{ass:Identifiability} hold. Consider $\zeta_{i}(\boldsymbol{\beta})$ with $\boldsymbol{\beta} \in \boldsymbol{\Theta}$ and $i \in \{1, \ldots, n\}$. Then, for any $\boldsymbol{\gamma} \in \mathbb{R}^{p}$,
\begin{align*}
\mathbb{E}\exp\!\left(\lambda \boldsymbol{\gamma}^{T} \nabla \zeta_{i}(\boldsymbol{\beta})\right)
\leq 
\exp\!\left(\frac{v_{0}^{2}\lambda^{2}}{2} \left\lVert \boldsymbol{v}_{0}\boldsymbol{\gamma} \right\rVert^{2}\right),
\end{align*}
where $v_{0}^{2} = \max\!\left(1, \frac{(Cp\ln p)^{2}}{\sigma^{2}\lambda_{\min}}\right)$, $Cp\ln(p)$ is the uniform bound on $\left\lVert\nabla_{\boldsymbol{\beta}} f_{\boldsymbol{\beta}}(x)\right\rVert$, and $\lambda_{\min}$ is the minimum eigenvalue of $\boldsymbol{v}_{0}^{2}$.
\end{lemma}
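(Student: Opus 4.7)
The plan is to write the gradient of $\zeta_i$ explicitly using the Gaussian likelihood, split it into a zero-mean Gaussian piece and a centered bounded piece, and then bound the moment generating function by first conditioning on $X_i$.

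Since $\ell_i(\boldsymbol{\beta}) = -\frac{1}{2\sigma^2}(Y_i - f_{\boldsymbol{\beta}}(X_i))^2 + \text{const}$ and $Y_i - f_{\boldsymbol{\beta}}(X_i) = \epsilon_i + \delta_{\boldsymbol{\beta}}(X_i)$ with $\delta_{\boldsymbol{\beta}}(X_i) := f_{\boldsymbol{\beta^*}}(X_i) - f_{\boldsymbol{\beta}}(X_i)$, differentiation gives $\sigma^{2}\nabla_{\boldsymbol{\beta}}\ell_i(\boldsymbol{\beta}) = (\epsilon_i + \delta_{\boldsymbol{\beta}}(X_i))\nabla_{\boldsymbol{\beta}} f_{\boldsymbol{\beta}}(X_i)$. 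Writing $V_i := \boldsymbol{\gamma}^{T}\nabla_{\boldsymbol{\beta}} f_{\boldsymbol{\beta}}(X_i)$ and subtracting the unconditional mean yields $\sigma^{2}\boldsymbol{\gamma}^{T}\nabla_{\boldsymbol{\beta}}\zeta_i(\boldsymbol{\beta}) = \epsilon_i V_i + W_i$, where $W_i := \delta_{\boldsymbol{\beta}}(X_i) V_i - \mathbb{E}[\delta_{\boldsymbol{\beta}}(X_i) V_i]$ is a centered function of $X_i$ alone. Conditioning on $X_i$ and using $\epsilon_i \sim N(0,\sigma^2)$, the Gaussian piece contributes the factor $\exp(\lambda^{2}V_i^{2}/(2\sigma^{2}))$. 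Lemma~\ref{lem:gradient_lem} bounds $\left\lVert \nabla_{\boldsymbol{\beta}} f_{\boldsymbol{\beta}}(x)\right\rVert$ by $Cp\ln p$, giving the deterministic bound $V_i^{2} \leq (Cp\ln p)^{2}\left\lVert \boldsymbol{\gamma}\right\rVert^{2}$. Since $g$ is continuous on $[0,1]$ its sup-norm $M_g$ is finite, and $|\delta_{\boldsymbol{\beta}}(X_i)|\leq 2M_g$, so $|W_i| \leq 4 M_g Cp\ln p\,\left\lVert \boldsymbol{\gamma}\right\rVert$. Hoeffding's lemma then delivers $\mathbb{E}\exp(\lambda W_i/\sigma^{2})\leq \exp(8\lambda^{2} M_g^{2}(Cp\ln p)^{2}\left\lVert \boldsymbol{\gamma}\right\rVert^{2}/\sigma^{4})$.

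Combining the two MGF factors and applying $\left\lVert\boldsymbol{\gamma}\right\rVert^{2}\leq \left\lVert \boldsymbol{v}_0\boldsymbol{\gamma}\right\rVert^{2}/\lambda_{\min}$ converts the bound into one expressed solely in terms of $\left\lVert\boldsymbol{v}_0\boldsymbol{\gamma}\right\rVert^{2}$: every surviving constant becomes a universal multiple of $(Cp\ln p)^{2}/(\sigma^{2}\lambda_{\min})$ applied to $\lambda^{2}\left\lVert\boldsymbol{v}_0\boldsymbol{\gamma}\right\rVert^{2}/2$. Absorbing $M_g$ and the numerical factors from the Gaussian and Hoeffding steps into the generic $C$ appearing in the definition $v_0^{2} = \max(1,(Cp\ln p)^{2}/(\sigma^{2}\lambda_{\min}))$ produces precisely $\exp(v_0^{2}\lambda^{2}\left\lVert\boldsymbol{v}_0\boldsymbol{\gamma}\right\rVert^{2}/2)$; the $\max$ with $1$ is what guarantees the inequality remains valid in the low-signal regime where $(Cp\ln p)^{2}/(\sigma^{2}\lambda_{\min})$ is small. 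I expect the main obstacle to be the bookkeeping of constants: the two distinct contributions (Gaussian noise from $\epsilon_i$ and Hoeffding applied to the bounded bias $W_i$) must be folded into one compact sub-Gaussian bound without inflating $v_0^{2}$, and one must verify that the $M_g$-dependence can be legitimately absorbed into the generic $C$ used in the statement. Once this constant arithmetic is in hand, the argument reduces to a single conditioning step on $X_i$ combined with the textbook Gaussian and Hoeffding MGF inequalities.
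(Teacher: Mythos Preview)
Your approach is correct and follows the same architecture as the paper's: condition on $X_i$, use the Gaussian MGF for the $\epsilon_i$-piece, then apply Cauchy--Schwarz together with the uniform bound $\|\nabla_{\boldsymbol{\beta}} f_{\boldsymbol{\beta}}(x)\|\le Cp\ln p$ and the eigenvalue inequality $\|\boldsymbol{\gamma}\|^{2}\le \lambda_{\min}^{-1}\|\boldsymbol{v}_0\boldsymbol{\gamma}\|^{2}$. The one substantive difference is that the paper asserts $\boldsymbol{\gamma}^{T}\nabla\zeta_i(\boldsymbol{\beta})\mid X_i$ is \emph{zero-mean} Gaussian and proceeds from there; that is only true at $\boldsymbol{\beta}=\boldsymbol{\beta}^{*}$, so the paper silently drops exactly the centered bias term $W_i=\delta_{\boldsymbol{\beta}}(X_i)V_i-\mathbb{E}[\delta_{\boldsymbol{\beta}}(X_i)V_i]$ that you isolate. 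Your decomposition $\epsilon_i V_i+W_i$ with a Hoeffding bound on $W_i$ is the rigorous way to cover general $\boldsymbol{\beta}\in\boldsymbol{\Theta}$, which matters because the lemma is later invoked for condition $(eu)$ away from $\boldsymbol{\beta}^{*}$. The constant bookkeeping you anticipate is indeed the only delicate point, and absorbing $M_g$ and the Hoeffding factor into the generic $C$ is legitimate since $C$ is never specified beyond being a positive constant depending on $g$ and the basis.
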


This bound is derived using the law of iterated expectations together with the Cauchy--Schwarz inequality. The next requirement is to verify that $\mathbb{K}(\boldsymbol{\beta}, \boldsymbol{\beta}^{*}) = \mathbb{E}\!\left[\ell_{i}(\boldsymbol{\beta}^{*}) - \ell_{i}(\boldsymbol{\beta})\right]$ is bounded below by a quadratic function of the form $b \left\lVert \boldsymbol{v}_{0}(\boldsymbol{\beta}^{*}-\boldsymbol{\beta}) \right\rVert^{2}$ for some positive constant $b$. If $\boldsymbol{\beta} \in \Theta(u_{0})$, the Hessian of $\mathbb{K}(\boldsymbol{\beta}, \boldsymbol{\beta}^{*})$ at $\boldsymbol{\beta}$ is positive definite. Since the second partial derivatives are continuous on $\Theta(u_{0})$ and $\Theta(u_{0})$ is closed, the value $\lambda_{\min}(u_{0}) = \inf_{\boldsymbol{\beta} \in \Theta(u_{0})} \lambda_{\min}(H(\boldsymbol{\beta}))$ is strictly positive. Applying a first-order Taylor expansion of $\mathbb{K}(\boldsymbol{\beta}, \boldsymbol{\beta}^{*})$ at $\boldsymbol{\beta}^{*}$ and using the integral remainder, we conclude that for $\boldsymbol{\beta} \in \Theta(u_{0})$,
\begin{align*}
\mathbb{K}(\boldsymbol{\beta}, \boldsymbol{\beta}^{*})
\geq \frac{\lambda_{\min}(u_{0})}{2} \left\lVert \boldsymbol{\beta} - \boldsymbol{\beta}^{*} \right\rVert^{2}
\geq \frac{\lambda_{\min}(u_{0})}{2\lambda_{\max}}
\left\lVert \boldsymbol{v}_{0}(\boldsymbol{\beta} - \boldsymbol{\beta}^{*}) \right\rVert^{2}.
\end{align*}
For $\boldsymbol{\beta} \notin \Theta(u_{0})$, we rely on the following lemma, proved in the Supplementary Materials.

\begin{lemma}
\label{lem:kappa}
Define $\kappa(u_{0}) = \inf_{\boldsymbol{\beta} \notin \Theta(u_{0})} \mathbb{K}(\boldsymbol{\beta}, \boldsymbol{\beta}^{*})$. If Assumptions~\ref{ass:CS}--\ref{ass:Identifiability} hold, then $\kappa(u_{0}) > 0$.
\end{lemma}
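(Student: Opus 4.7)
The plan is a compactness plus identifiability argument. First, under the Gaussian error model, the KL divergence reduces to
\[
\mathbb{K}(\boldsymbol{\beta},\boldsymbol{\beta^*}) \;=\; \frac{1}{2\sigma^{2}}\,\mathbb{E}_{X}\!\left[\bigl(f_{\boldsymbol{\beta^*}}(X)-f_{\boldsymbol{\beta}}(X)\bigr)^{2}\right],
\]
so it suffices to bound this squared $L^{2}(\mathbb{P}_{X})$ discrepancy away from zero on $\boldsymbol{\Theta}\setminus\Theta(u_{0})$. The main obstacle is that this set is not compact: $\boldsymbol{\beta}$ may escape to $\|\boldsymbol{\beta}\|\to 0$ or $\|\boldsymbol{\beta}\|\to\pi$, and one must rule out $\mathbb{K}\to 0$ along such sequences.

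To handle this, I would pass to the compact closure
\[
\bar{\mathcal{S}} \;=\; \bigl\{\boldsymbol{\beta}\in\mathbb{R}^{p}:\|\boldsymbol{\beta}\|\le\pi,\;\|\boldsymbol{v}_{0}(\boldsymbol{\beta}-\boldsymbol{\beta^*})\|\ge u_{0}\bigr\}.
\]
The closed-form expression for $\gamma_{\boldsymbol{\beta}}$ in Section~\ref{sec:Parameterizing} extends continuously in $\boldsymbol{\beta}$ to the entire closed ball $\{\|\boldsymbol{\beta}\|\le\pi\}$ (the apparent singularity at $\boldsymbol{\beta}=\mathbf{0}$ is removable because $\sin(\|\boldsymbol{\beta}\|)/\|\boldsymbol{\beta}\|\to 1$), and direct substitution shows $\gamma_{\boldsymbol{\beta}}\equiv\mathrm{id}$ whenever $\|\boldsymbol{\beta}\|\in\{0,\pi\}$, so $f_{\boldsymbol{\beta}}=g$ there. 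Consequently $\mathbb{K}(\cdot,\boldsymbol{\beta^*})$ extends continuously to $\bar{\mathcal{S}}$, and since $\boldsymbol{\Theta}\setminus\Theta(u_{0})$ is dense in $\bar{\mathcal{S}}$, the two infima coincide. Compactness then supplies a minimizer $\boldsymbol{\beta^{**}}\in\bar{\mathcal{S}}$, reducing the claim to $\mathbb{K}(\boldsymbol{\beta^{**}},\boldsymbol{\beta^*})>0$.

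I would finish by contradiction. Suppose $\mathbb{K}(\boldsymbol{\beta^{**}},\boldsymbol{\beta^*})=0$; then $f_{\boldsymbol{\beta^{**}}}(X)=f_{\boldsymbol{\beta^*}}(X)$ $\mathbb{P}_{X}$-a.s., and by continuity of both sides together with Assumption~\ref{ass:PI_Px} (whose support is all of $[0,1]$), the equality holds pointwise on $[0,1]$. If $\|\boldsymbol{\beta^{**}}\|\in\{0,\pi\}$, then $f_{\boldsymbol{\beta^{**}}}=g$, contradicting $g\neq f_{\boldsymbol{\beta^*}}$ of Assumption~\ref{ass:CS}. Otherwise $\boldsymbol{\beta^{**}}\in\boldsymbol{\Theta}$, and the coinciding functions share the same stationary-point set, giving $\gamma_{\boldsymbol{\beta^{**}}}^{-1}(b_{k})=\gamma_{\boldsymbol{\beta^*}}^{-1}(b_{k})$ for $k=1,\ldots,M$. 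Between consecutive such preimages both diffeomorphisms map onto the same subinterval $[b_{k},b_{k+1}]$, on which $g$ is strictly monotone by Definition~\ref{def:F_M}. Strict monotonicity of $g$ then promotes $g\circ\gamma_{\boldsymbol{\beta^{**}}}=g\circ\gamma_{\boldsymbol{\beta^*}}$ to $\gamma_{\boldsymbol{\beta^{**}}}=\gamma_{\boldsymbol{\beta^*}}$ pointwise on each subinterval, and gluing yields equality on $[0,1]$. The bijection between $\{0<\|\boldsymbol{\beta}\|<\pi\}$ and $\Gamma$ from \citet{dasgupta2020two} then forces $\boldsymbol{\beta^{**}}=\boldsymbol{\beta^*}$, contradicting $\|\boldsymbol{v}_{0}(\boldsymbol{\beta^{**}}-\boldsymbol{\beta^*})\|\ge u_{0}>0$.

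The hardest step is the interior identifiability: converting $g\circ\gamma_{\boldsymbol{\beta^{**}}}=g\circ\gamma_{\boldsymbol{\beta^*}}$ into $\gamma_{\boldsymbol{\beta^{**}}}=\gamma_{\boldsymbol{\beta^*}}$ requires the alternating strict-monotonicity structure of $\mathcal{F}_{M}$ and careful interval-wise bookkeeping around the shared stationary points. The boundary analysis at $\|\boldsymbol{\beta}\|=\pi$ is a secondary technical point, resolved by noting that the exponential map collapses to the identity there and so separates the boundary functions from $f_{\boldsymbol{\beta^*}}$ by Assumption~\ref{ass:CS}.
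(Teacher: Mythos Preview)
Your proof is correct and follows essentially the same strategy as the paper: pass to a compact set (you extend to the closure directly, the paper extracts a Bolzano--Weierstrass subsequence), handle the boundary $\|\boldsymbol{\beta}\|\in\{0,\pi\}$ by observing that $\gamma_{\boldsymbol{\beta}}=\mathrm{id}$ so $f_{\boldsymbol{\beta}}=g\neq f_{\boldsymbol{\beta^*}}$, and handle the interior by identifiability. The only difference is that the paper isolates the interior identifiability as a standalone lemma proved via Rolle's theorem on $\gamma_{\boldsymbol{\beta^*}}-\gamma_{\boldsymbol{\beta}}$, whereas you argue it inline through shared stationary points and piecewise strict monotonicity of $g$---a cosmetic variation rather than a genuinely different route.
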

Since $\left\lVert \boldsymbol{\beta} \right\rVert < \pi$, we have
\begin{align*}
\left\lVert \boldsymbol{v}_{0}(\boldsymbol{\beta} - \boldsymbol{\beta}^{*}) \right\rVert^{2}
\leq \lambda_{\max} \left\lVert \boldsymbol{\beta} - \boldsymbol{\beta}^{*} \right\rVert^{2}
\leq 4\pi^{2}\lambda_{\max}.
\end{align*}
Thus, for $\boldsymbol{\beta} \notin \Theta(u_{0})$,
\begin{align*}
\mathbb{K}(\boldsymbol{\beta}, \boldsymbol{\beta}^{*})
\geq \kappa(u_{0})
\geq \frac{1}{4\pi^{2}} \frac{\kappa(u_{0})}{\lambda_{\max}}
\left\lVert \boldsymbol{v}_{0}(\boldsymbol{\beta} - \boldsymbol{\beta}^{*}) \right\rVert^{2}.
\end{align*}
Therefore, with 
\begin{align*}
b = \min\!\left( \frac{1}{2}\frac{\lambda_{\min}(u_{0})}{\lambda_{\max}}, \; \frac{1}{4\pi^{2}}\frac{\kappa(u_{0})}{\lambda_{\max}} \right),
\end{align*}
we obtain the lower quadratic bound $\mathbb{K}(\boldsymbol{\beta}, \boldsymbol{\beta}^{*}) \geq b \left\lVert \boldsymbol{v}_{0}(\boldsymbol{\beta} - \boldsymbol{\beta}^{*}) \right\rVert^{2}$. Together with the second-order differentiability of the log-likelihood with respect to $\boldsymbol{\beta}$, these results establish all of the conditions required by the theory of \cite{Spokoiny2012_ParametricEstimation} for i.i.d.\ models. Consequently, all the theorems in the previous subsection follow.

\section{Simulation}
\label{sec:sim}
In this section, we evaluate the performance of our method in estimating a function’s true stationary points. We compare its accuracy with that of the derivative-constrained Gaussian process (DGP) model of \citet{li2024semiparametric}, which is the only existing approach that permits direct sampling of likely stationary points and provides publicly available  code. The DGP method first fits a Gaussian process to the observed data, substitutes the fitted parameters into an analytical one-dimensional posterior for the stationary points, and subsequently samples likely stationary points using inverse transform sampling. Although the model proposed by \citet{wheeler2017bayesian} also allows direct sampling, the provided implementation is restricted to functions with only two stationary points. 

Because the DGP framework is fully Bayesian, we adopt a comparable setting for our diffeomorphism-based model by assigning priors to all parameters and sampling from the posterior distribution. The two approaches (DGP and the diffeomorphism based model) are then compared based on (i) the accuracy of the posterior mean estimates of stationary points and (ii) the empirical coverage probabilities of the true stationary points under their respective posterior distributions. For all simulation studies, we employ the unconstrained parameterization described in Section~\ref{sec:Joint_Estimation}. This formulation enables the use of a multivariate normal prior on the model parameters and facilitates efficient posterior computation via standard Bayesian algorithms. The template function is constructed using Hermite cubic interpolation with stationary points located at $b_{k} = \frac{k}{M + 1}$. Although this approach does not guarantee global smoothness, it yields highly accurate and computationally efficient results in practice.

Because $\gamma_{\boldsymbol{\beta}}$ is a nonlinear transformation of $\boldsymbol{\beta}$, the resulting posterior distribution is often highly multimodal. Consequently, simple Metropolis–Hastings samplers with multivariate normal proposals tend to perform poorly. To ensure adequate posterior exploration, we employ a two-stage strategy. First, multiple runs of the BFGS quasi-Newton algorithm are used to locate local modes of the log-posterior and compute their associated inverse Hessians. Next, several Metropolis–Hastings chains are initialized, each using a multivariate normal proposal with covariance $c(2.4/p^2)H^{-1}(\boldsymbol{\theta}_m)$, where $p$ is the parameter dimension, $H(\boldsymbol{\theta}_m)$ is the Hessian matrix of the log-posterior at mode $\boldsymbol{\theta}_m$, and $c$ is a positive scaling constant selected to achieve an average acceptance rate of approximately 23\%. This proposal choice follows the theoretical results of \citet{gelman1997weak}. Since many modes occur in regions of negligible posterior mass, we further apply a reweighting step in which each proposal covariance matrix is selected with probability proportional to the posterior density at its corresponding mode.

\subsection{Computational Considerations}

Before presenting the simulation results, we highlight several practical considerations for applying our method to real data. A notable drawback of both the DGP model and the hypothesis-testing-based strategies discussed in the Introduction is their tendency to produce false positives. These methods do not restrict the number of points at which the derivative can vanish, often leading to the identification of spurious stationary points in flat regions or in response to random noise. Our diffeomorphism approach largely mitigates this issue by explicitly constraining the number of stationary points. Consequently, it prioritizes regions where the function exhibits clear transitions between increasing and decreasing behavior.

To illustrate this difference, consider a scenario in which the true function contains an extended flat region. Figure~\ref{fig:flat_func} compares the posterior distributions of stationary points under both methods. The DGP method places most of its posterior mass over the slowly decreasing segment of the curve in $[0.75, 1]$, incorrectly identifying it as a stationary point. In contrast, the diffeomorphism method concentrates nearly all its posterior mass near the single true stationary point, correctly reflecting the underlying structure of the function.

\begin{figure}[htbp]
  \centering
  \begin{subfigure}[b]{0.48\textwidth}
    \includegraphics[width=\textwidth]{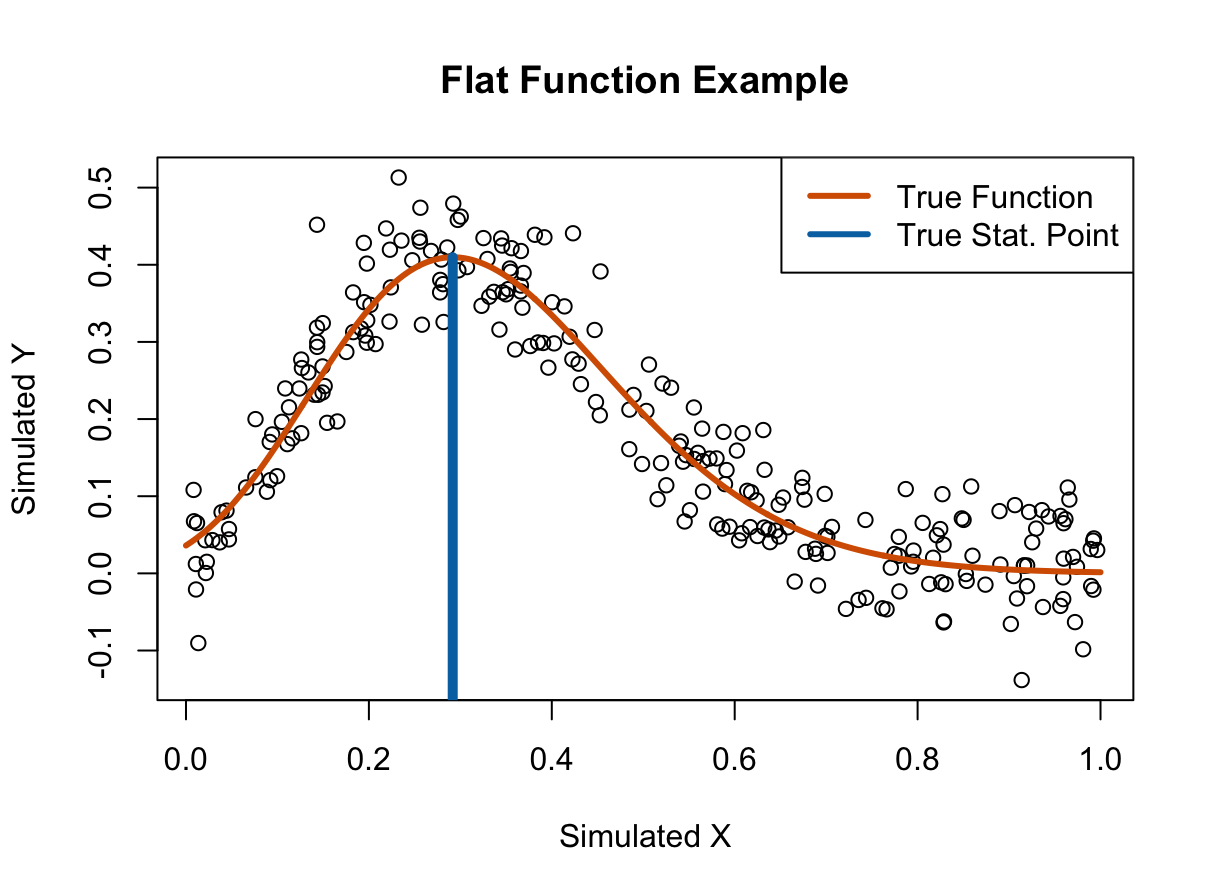}
  \end{subfigure}
  \begin{subfigure}[b]{0.48\textwidth}
    \includegraphics[width=\textwidth]{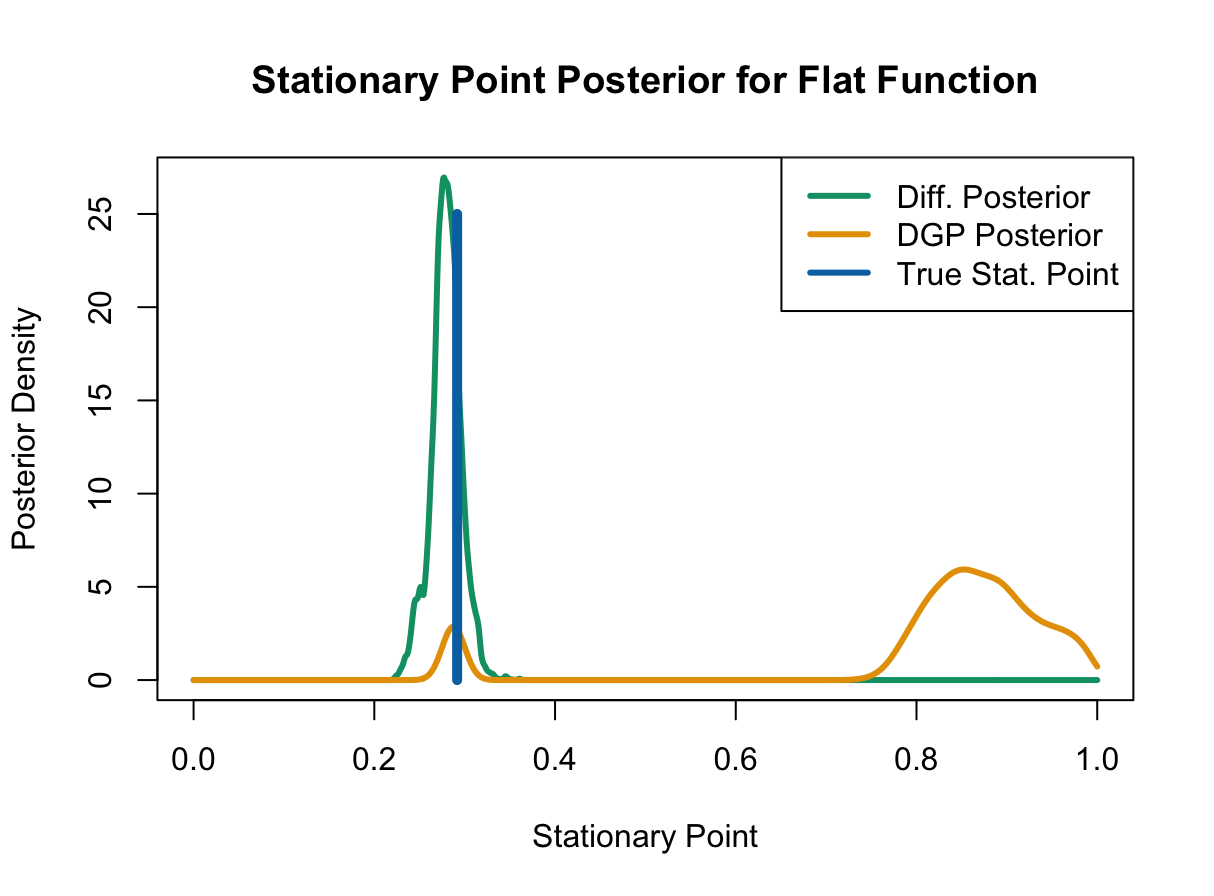}
  \end{subfigure}
  \caption{Comparison of the diffeomorphism and DGP methods for a function with a flat region. The DGP method places posterior mass over a broad, nearly flat region, whereas the diffeomorphism method concentrates around the true stationary point.}
  \label{fig:flat_func}
\end{figure}

For small sample sizes, both the sampling distribution and the posterior of the stationary points can exhibit skewness and multimodality. In such cases, asymptotic approximations should be avoided, and simulation-based inference methods such as bootstrap or MCMC should be employed to obtain reliable uncertainty quantification. As the sample size increases, both the sampling and posterior distributions tend to become approximately Gaussian, consistent with the asymptotic theory developed in Section~\ref{sec:theory}. 

Finally, since the full space of diffeomorphisms is infinite-dimensional, a finite-dimensional basis approximation is required in practice. In our experience, basis sizes between 5 and 15 are adequate for most applications. The optimal basis size can be selected using standard model selection tools such as AIC, cross-validated prediction error, or WAIC in the Bayesian setting \citep{mcelreath2018statistical}. If either the sampling distribution or the posterior concentrates around an incorrect stationary point, we recommend increasing the basis size or adjusting the stationary points of the template function accordingly.

\subsection{Simulation Study}

The first simulation evaluates the performance of both the diffeomorphism and DGP methods on data with a single true stationary point. This design produces a one-dimensional posterior for each method, allowing a direct comparison. The data-generating model is 
\begin{align*}
    X_{i} &\stackrel{i.i.d.}{\sim} \text{Unif}([0,1]),\\ 
    Y_{i} &= 1 + \sin(2X_{i}) + \cos(3X_{i}) + 3X_{i} - 2X_{i}^2 + \epsilon_{i}, \quad \epsilon_{i} \sim N(0, 0.25^2),
\end{align*}
which has a true stationary point at $b = 0.39973$. Sample sizes of $n = 50, 100, 200, 300$ were considered, with 100 simulations per sample size. The evaluation focuses on coverage probabilities of the stationary point, accuracy of the posterior mean estimates, and marginal posterior variances. 

Coverage probabilities were computed using 90\%, 95\%, and 99\% contiguous highest posterior density (HPD) intervals derived from the sampled stationary points. Posterior mean accuracy was assessed using the root mean squared error (RMSE): 
\begin{eqnarray*}
\text{RMSE} = \sqrt{\frac{1}{100}\sum_{j=1}^{100} (\bar{b}_j - b)^2},
\end{eqnarray*}
where $\bar{b}_j$ denotes the posterior  estimate from the $j$th simulation. Mean bias was calculated as
\begin{eqnarray*}
\text{Bias} = \frac{1}{100} \sum_{j=1}^{100} (\bar{b}_j - b),
\end{eqnarray*}
and posterior variance was summarized as the average posterior standard deviation across simulations.

For the diffeomorphism method, basis sizes of 4, 5, 6, and 7 were used for sample sizes $n = 50, 100, 200, 300$, respectively, in accordance with the increasing parameter dimension framework described in Section~3.2. Table~\ref{tab:sim1_coverage} presents coverage probabilities for both methods.

\begin{table}[ht]
\centering
\footnotesize
\begin{tabular}{lcccccc}
\toprule
 & \multicolumn{2}{c}{90\% HDP} & \multicolumn{2}{c}{95\% HDP} & \multicolumn{2}{c}{99\% HDP} \\
\cmidrule(lr){2-3} \cmidrule(lr){4-5} \cmidrule(lr){6-7}
 & Diffeomorphism & DGP & Diffeomorphism & DGP & Diffeomorphism & DGP \\
\midrule
$n = 50$ & \textbf{0.96} & 0.82 & \textbf{0.99} & 0.9 & \textbf{1.00} & 0.96 \\
$n = 100$ & \textbf{0.94} & 0.75 & \textbf{0.98} & 0.87 & \textbf{1.00} & 0.99 \\
$n = 200$ & \textbf{0.97} & 0.86 & \textbf{1.00} & 0.91 & \textbf{1.00} & 0.97 \\
$n = 300$ & \textbf{0.94} & 0.78 & \textbf{0.99} & 0.9 & \textbf{1.00} & 0.97 \\
\bottomrule
\end{tabular}
\caption{Coverage probabilities of 90\%, 95\%, and 99\% HPD intervals for Simulation 1. The diffeomorphism method consistently exceeds nominal coverage levels, whereas the DGP method tends to under-cover the true stationary point.}
\label{tab:sim1_coverage}
\end{table}

Across all simulations, the diffeomorphism method maintained coverage above nominal levels, while the DGP method often undercovered, particularly for smaller sample sizes. Figure~\ref{fig:sim1_hdp} illustrates the first fifty 90\% HPD intervals for $n = 100$. Intervals from the diffeomorphism method are generally wider and centered on the true stationary point, whereas the DGP method produces narrower intervals systematically shifted above 0.4. Occasionally, the DGP posterior exhibits excessive mass near the boundaries, generating extremely long HPD intervals, whereas the diffeomorphism method is more consistent.

\begin{figure}[htbp]
  \centering
  \begin{subfigure}[b]{0.48\textwidth}
    \includegraphics[width=\textwidth]{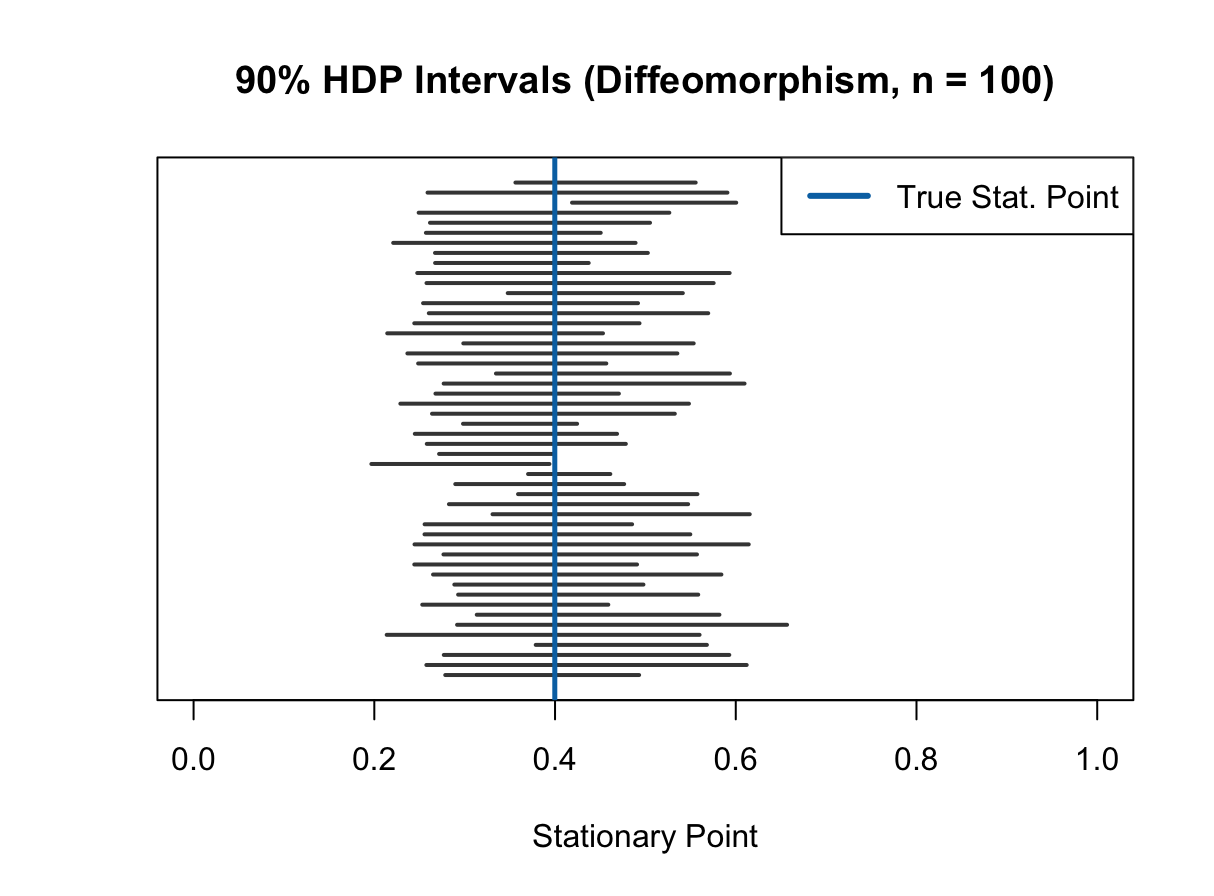}
    \caption{Diffeomorphism method}
  \end{subfigure}
  \begin{subfigure}[b]{0.48\textwidth}
    \includegraphics[width=\textwidth]{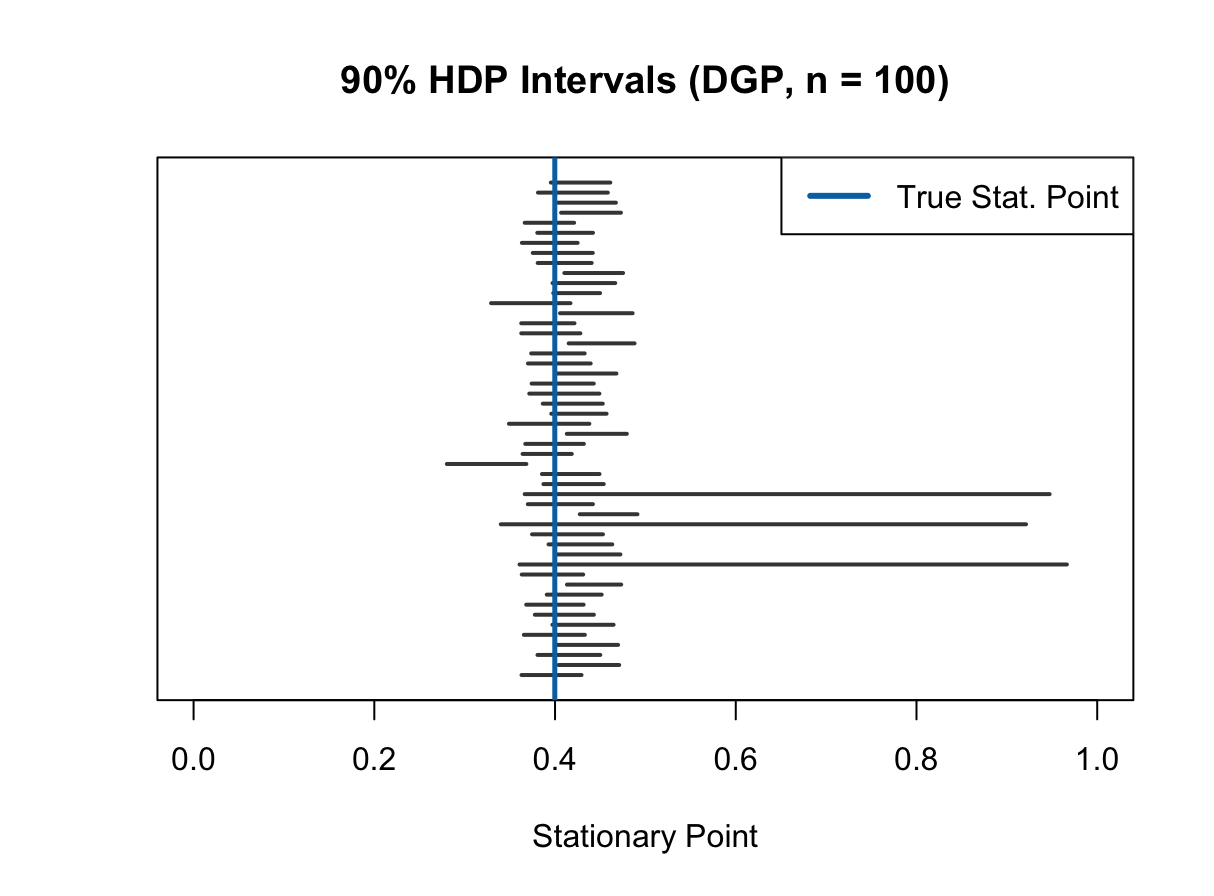}
    \caption{DGP method}
  \end{subfigure}
  \caption{First fifty 90\% HPD intervals of the stationary point for $n = 100$. The diffeomorphism method produces intervals consistently centered near the true stationary point, while the DGP intervals are narrower and slightly biased.}
  \label{fig:sim1_hdp}
\end{figure}

Table~\ref{tab:sim1_metrics} reports RMSE, mean bias, and average posterior standard deviation across sample sizes. While the DGP method exhibits lower RMSE and smaller posterior variance, there exists a noticeable positive bias for the smaller sample sizes. The diffeomorphism method displays higher variability, but its bias remains consistent across all sample sizes. Figure~\ref{fig:sim1_boxplots} shows boxplots of the posterior mean bias and posterior standard deviation for $n = 100$, highlighting the trade-off between variability and bias. For point estimation, the DGP method is generally preferred; however, for accurate uncertainty quantification, the diffeomorphism approach is superior.

\begin{table}[ht]
\centering
\footnotesize
\begin{tabular}{lcccccc}
\toprule
 & \multicolumn{2}{c}{RMSE} & \multicolumn{2}{c}{Mean Bias} & \multicolumn{2}{c}{Avg. Posterior SD} \\
\cmidrule(lr){2-3} \cmidrule(lr){4-5} \cmidrule(lr){6-7}
 & Diffeomorphism & DGP & Diffeomorphism & DGP & Diffeomorphism & DGP \\
\midrule
$n = 50$ & 0.0570 & \textbf{0.0540} & \textbf{0.0197} & 0.0386 & 0.0968 & \textbf{0.0656} \\
$n = 100$ & 0.0471 & \textbf{0.0311} & $\boldsymbol{9.52 \times 10^{-3}}$ &  0.0191 & 0.0813 & \textbf{0.0407} \\
$n = 200$ & 0.0413 & \textbf{0.0176} & 0.015 & $\boldsymbol{8.95 \times 10^{-3}}$ & 0.0673 & \textbf{0.0278} \\
$n = 300$ & 0.0378 & \textbf{0.0161} & 0.00899 & \textbf{0.00825} & 0.0577 & \textbf{ 0.0158} \\
\bottomrule
\end{tabular}
\caption{RMSE, mean bias, and average posterior standard deviation for Simulation 1. The DGP method produces tighter posteriors but exhibits systematic positive bias, while the diffeomorphism method shows higher variance with negligible bias.}
\label{tab:sim1_metrics}
\end{table}

\begin{figure}[htbp]
  \centering
  \begin{subfigure}[b]{0.48\textwidth}
    \includegraphics[width=\textwidth]{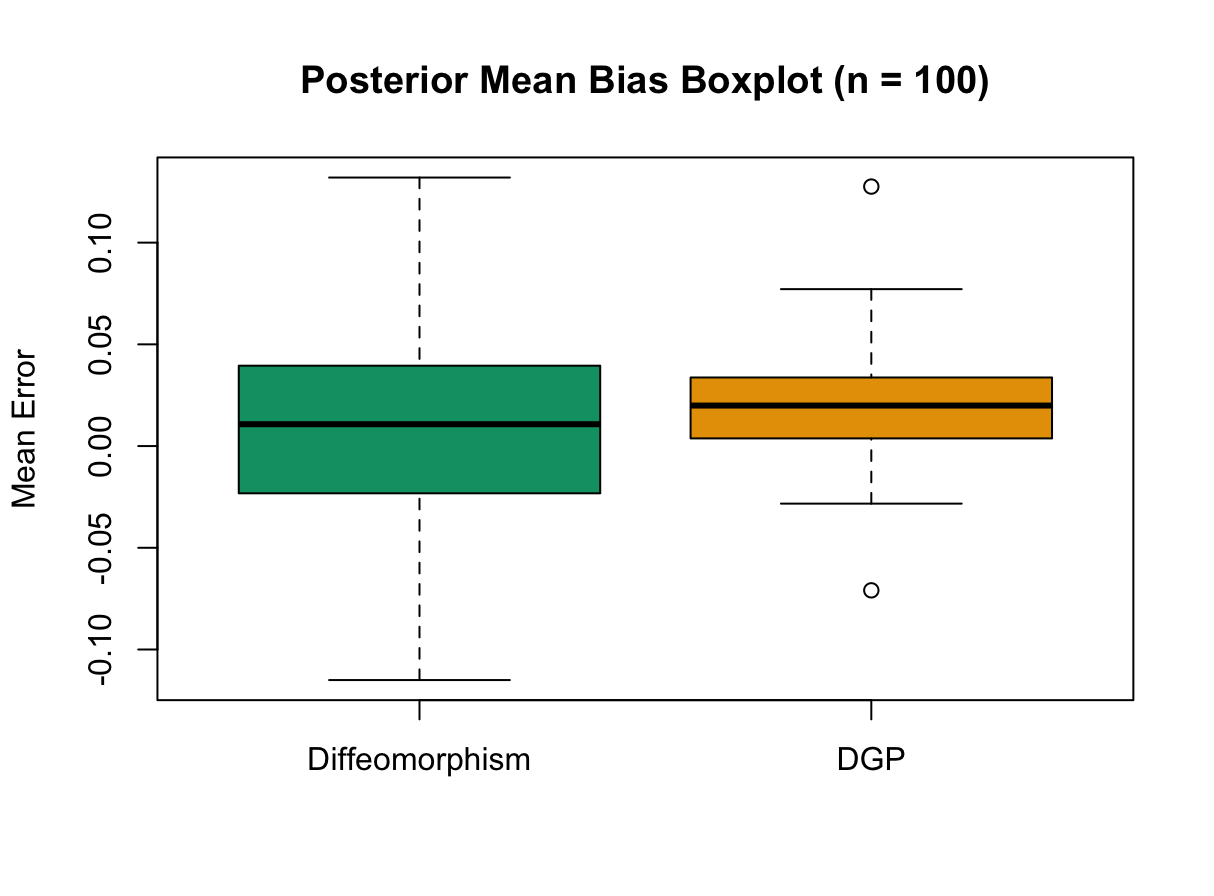}
    \caption{Posterior Mean Bias}
  \end{subfigure}
  \begin{subfigure}[b]{0.48\textwidth}
    \includegraphics[width=\textwidth]{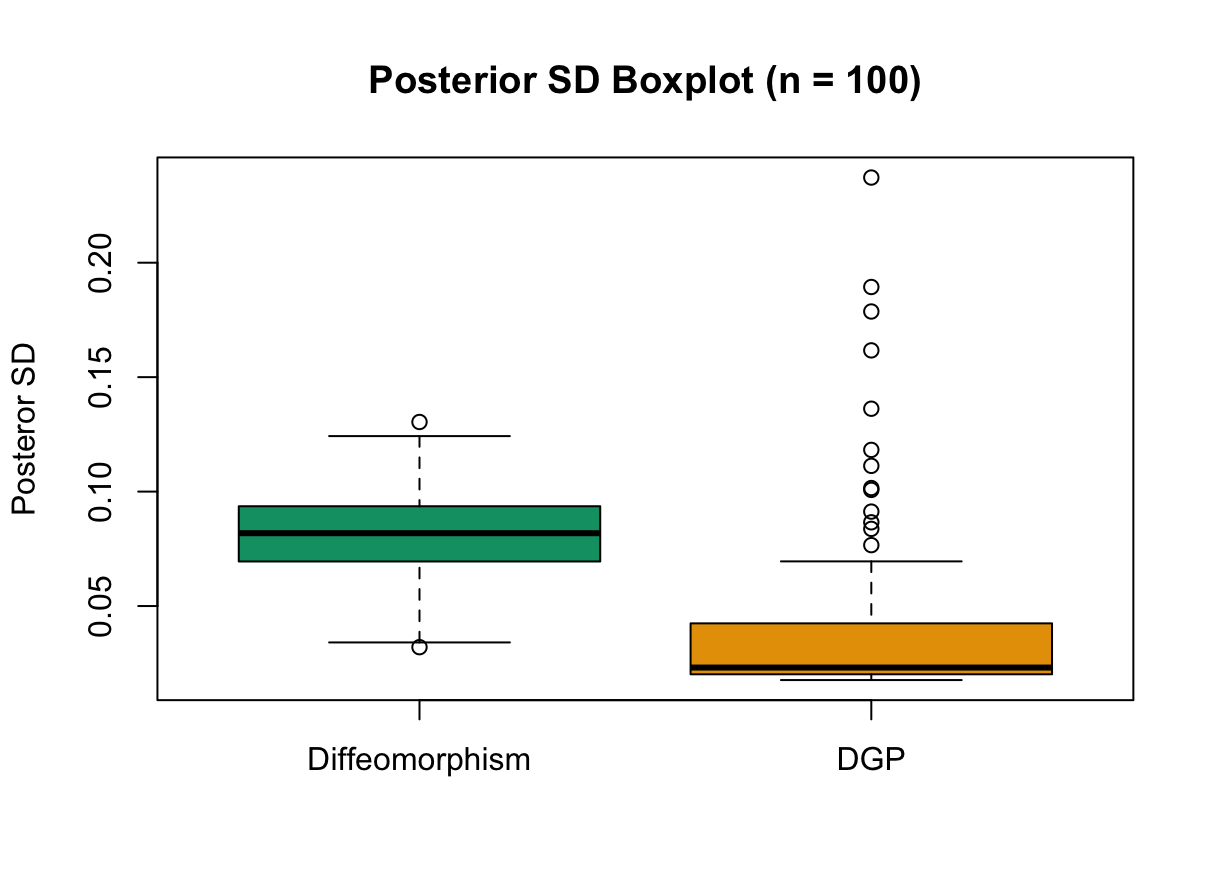}
    \caption{Posterior standard deviation}
  \end{subfigure}
  \caption{Boxplots of Posterior Mean bias and posterior standard deviation for $n = 100$ across 100 simulations. The diffeomorphism method exhibits wider posteriors with minimal bias, whereas the DGP method yields tighter posteriors but with positive bias.}
  \label{fig:sim1_boxplots}
\end{figure}

The increased variability yet consistent, low bias of the diffeomorphism method arises because the stationary point posterior is an explicit function of the model parameters, allowing each data point to influence the posterior. In contrast, the DGP method computes the stationary point posterior after fitting a Gaussian process, producing tighter posteriors that inherit bias from the fitted model. Variability in the diffeomorphism posterior can be reduced by estimating the template function first and treating it as fixed during posterior sampling of $\boldsymbol{\beta}$.

\subsection{Simulation with Two Stationary Points}

The second simulation investigates a function with two stationary points. Only the diffeomorphism method is considered, as it provides a joint posterior with dimensions corresponding directly to the stationary points, requiring no post-processing. The data-generating model is 
\begin{align*}
    X_{i} &\stackrel{i.i.d.}{\sim} \text{Unif}([0,1]),\\ 
    Y_{i} &= 1.4 X_{i} + \sin(3.3 X_{i}) + \cos(4 X_{i}) + \epsilon_{i}, \quad \epsilon_{i} \sim N(0, 0.15^2),
\end{align*}
with true stationary points at 0.24204 and 0.68735. Simulations were conducted 100 times for sample sizes $n = 50, 100, 200, 300$, with basis sizes 7, 8, 9, and 10, respectively. Coverage probabilities, RMSE, mean bias, and average posterior standard deviation were recorded for each stationary point (Table~\ref{tab:sim2_metrics}).

\begin{table}[ht]
\centering
\footnotesize
\begin{tabular}{lcccccc}
 \multicolumn{7}{c}{Stationary Point 1}  \\
\toprule
 & 90\% HPD & 95\% HPD & 99\% HPD & RMSE  & Mean Bias & Avg. Posterior SD \\
\midrule
$n = 50$  & 0.90 & 0.95 & 0.99 & 0.06530 & 0.04755 & 0.07459 \\
$n = 100$ & 0.90 & 0.95 & 0.98 & 0.04915 & 0.03704 & 0.05983 \\
$n = 200$ & 0.87 & 0.95 & 1 &  0.04186 & 0.02813 & 0.04822\\
$n = 300$ & 0.87 & 0.92 & 0.98 & 0.04048 & 0.02642 & 0.04416 \\
\bottomrule
\end{tabular}

\vspace{0.4em}

\begin{tabular}{lcccccc}
 \multicolumn{7}{c}{Stationary Point 2}  \\
\toprule
 & 90\% HPD & 95\% HPD & 99\% HPD & RMSE  & Mean Bias & Avg. Posterior SD \\
\midrule
$n = 50$  & 0.99 & 1.00 & 1.00 & 0.05179 & -0.02801 & 0.07192 \\
$n = 100$ & 0.92 & 0.95 & 1.00 & 0.04311 & -0.0222 & 0.0608 \\
$n = 200$ & 0.89 & 0.94 & 1.00 & 0.03946 & -0.01977 & 0.05015 \\
$n = 300$ & 0.86 & 0.93 & 0.98 & 0.03606 &  -0.01748 & 0.04459 \\
\bottomrule
\end{tabular}
\caption{Performance metrics for Simulation 2. Coverage probabilities remain near nominal levels, while RMSE and posterior standard deviations decrease with increasing sample size and basis size, reflecting posterior contraction and reduced bias.}
\label{tab:sim2_metrics}
\end{table}
In addition, the joint coverage of the 95\% Bonferroni corrected intervals were 0.98, 0.98, 0.97 and 0.94 for sample sizes $n = 50, 100, 200$ and $300$ respectively. 
All coverage probabilities remain near their nominal levels. As sample size and basis size increase, both RMSE and average posterior standard deviations decrease, indicating posterior contraction, while absolute mean bias also declines due to improved approximation of the function’s structure by higher-dimensional bases.

\section{Applied Data Analysis }
\label{sec:app_ds}
Electroencephalography (EEG) is a widely employed tool in cognitive neuroscience for measuring the brain’s electrical responses to specific stimuli. Due to high trial-to-trial variability, repeated stimulus presentation is necessary, and the resulting EEG segments are averaged to obtain an event-related potential (ERP). Peaks and troughs of an ERP correspond to distinct neural processes elicited by the stimulus; their latencies and amplitudes indicate the timing and magnitude of brain responses \citep{luck2012event}. Consequently, estimation and inference on the locations of stationary points are directly applicable in this context.

Following \cite{li2024semiparametric}, the EEG dataset available at \url{http://dsenturk.bol.ucla.edu/supplements.html} was analyzed. It comprises signals from a single individual with autism spectrum disorder (ASD) across 72 trials. The original experiment aimed to assess how children with ASD learn patterns in continuous sequences of colors and shapes, with ERP latencies and amplitudes serving as markers of implicit learning \citep{hasenstab2015identifying, jeste2015electrophysiological}. Each time series contains 250 observations, and the ERP is obtained by averaging across trials. While the N1 and P3 peaks, occurring between 100 and 250 milliseconds, are of primary interest, the analysis considers the entire ERP to demonstrate the method’s robustness.

The averaged ERP reveals approximately ten stationary points. A basis size of 15 was employed, with noninformative independent normal priors specified for the unconstrained parameters. Hermite cubic spline interpolation was used for the template family, with prior guesses for stationary point locations set at the node points of the template. Multiple Metropolis-Hastings chains were run, using scaled inverse Hessians at posterior modes as proposal covariances. Figure~\ref{fig:stacked} (top panel) displays the ERP data alongside the maximum a posteriori (MAP) estimate of the underlying signal. Color-coded segments indicate Bonferroni-adjusted 95\% highest posterior density (HPD) intervals for the stationary points. The bottom panel presents kernel density estimates of each stationary point’s marginal posterior distribution.

\begin{figure}[t!]
    \centering
    \includegraphics[width=0.9\linewidth]{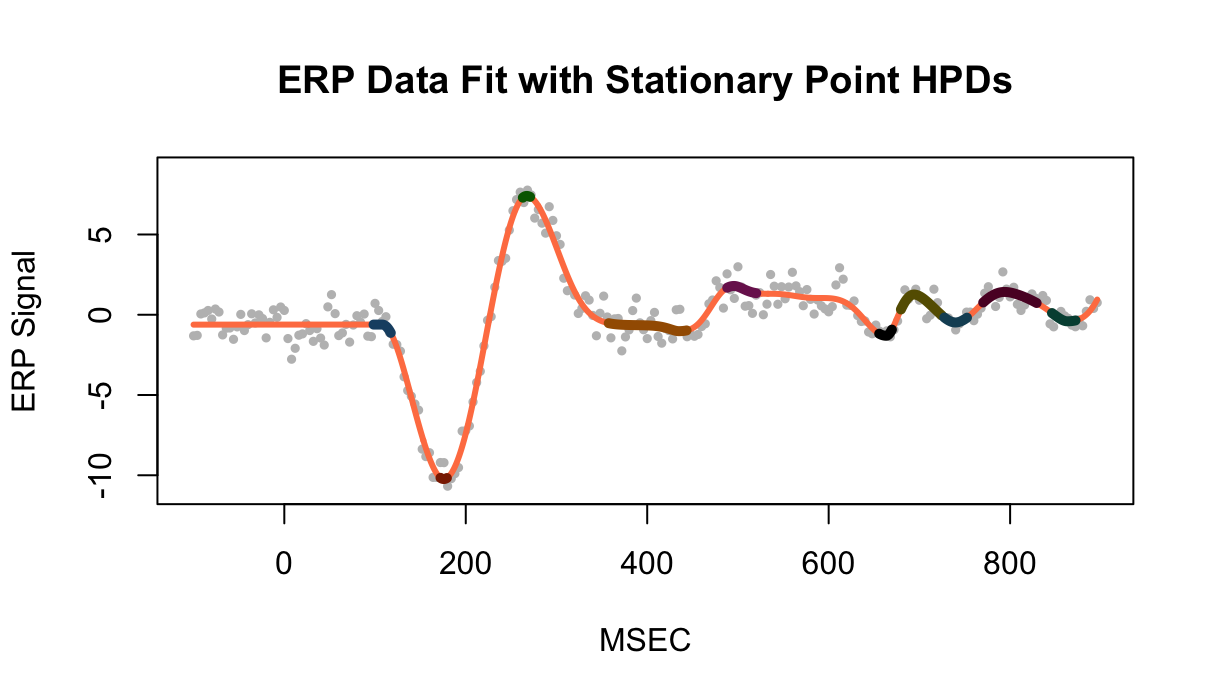}
    \vspace{1em}
    \includegraphics[width=0.9\linewidth]{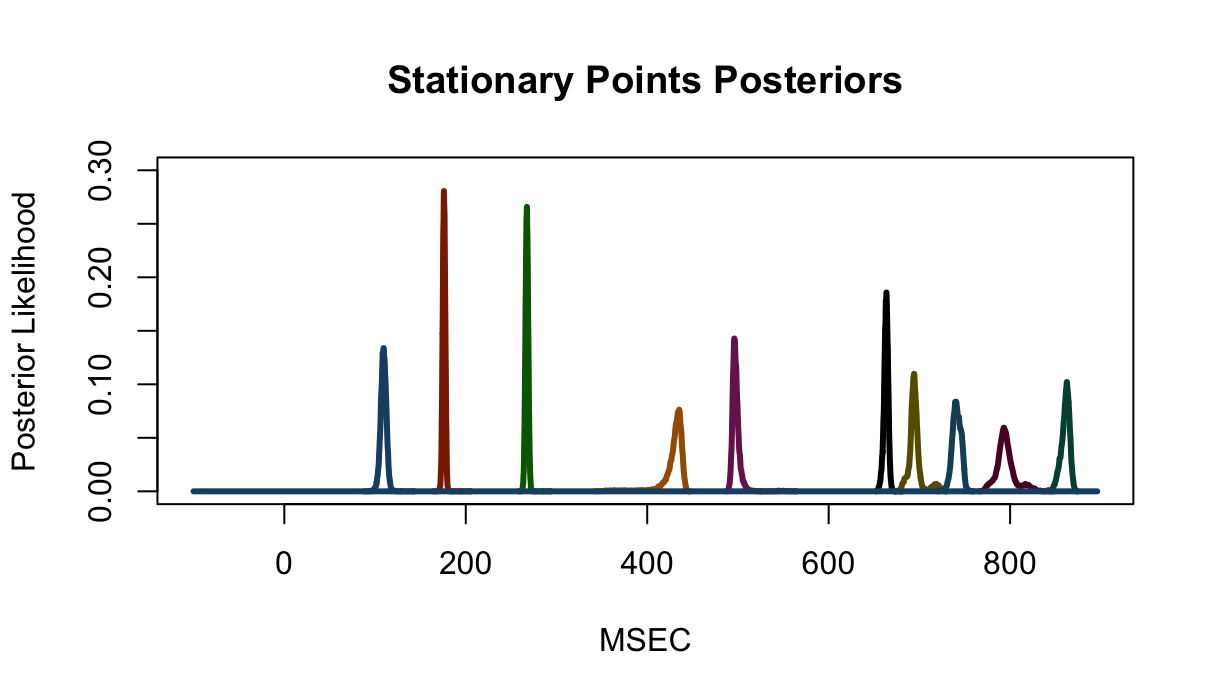}
    \caption{Estimation of ERP stationary points using the diffeomorphism-based method. 
    \textit{Top:} Original ERP data (gray) with the MAP estimate of the underlying signal (orange). 
    Colored segments indicate Bonferroni-adjusted 95\% HPD intervals for each stationary point. 
    \textit{Bottom:} Kernel density estimates of the marginal posterior distributions for each stationary point.}
    \label{fig:stacked}
\end{figure}

Fixing the number of stationary points enhances robustness, preventing the identification of spurious signals and isolating the most prominent features. Most marginal posteriors are approximately bell-shaped and concentrate near the true stationary points. Negligible posterior mass is assigned to flat regions between $-100$ and $99$ milliseconds, where derivatives are nearly zero, indicating a preference for genuine changepoints. Bonferroni-corrected HPD intervals for each stationary point are listed in Table~\ref{tab:sp_stacked_bars}.

\begin{table}[t!]
\centering
\renewcommand{\arraystretch}{1.2}

\begin{tabular}{lccccc}
\multicolumn{6}{c}{\textbf{Stationary Points 1–5}} \\[0.2em]
\hline
 & SP 1 & SP 2 & SP 3 & SP 4 & SP 5 \\
\hline
Lower & 98.3072  & 171.5270 & 262.8431 & 357.4698 & 488.1361  \\
Upper & 117.8728 & 180.0963 & 271.5853 & 444.2518 & 520.6181 \\
\hline
\end{tabular}

\vspace{1em}

\begin{tabular}{lccccc}
\multicolumn{6}{c}{\textbf{Stationary Points 6–10}} \\[0.2em]
\hline
 & SP 6 & SP 7 & SP 8 & SP 9 & SP 10 \\
\hline
Lower & 655.0799 & 679.1362 & 727.06263  & 770.00145 & 845.9251  \\
Upper & 669.8123 & 724.5678 & 753.0716 & 829.9239 & 872.2931 \\
\hline
\end{tabular}

\vspace{0.6em}
\caption{Bonferroni-adjusted 95\% highest posterior density (HPD) intervals for each stationary point estimated from the averaged ERP signal. Stationary points are labeled SP1–SP10.}
\label{tab:sp_stacked_bars}
\end{table}

For comparison, \citep{li2024semiparametric} analyzed the 100–350 millisecond window containing N1 and P3, constructing 95\% HPD intervals for the second (N1) and third (P3) stationary points. Their intervals were $[174.58, 178.32]$ and $[266.58, 270.93]$, whereas the diffeomorphism-based method produced $[173.39, 178.64]$ and $[263.19, 268.24]$. The slightly wider intervals observed here reflect estimation across the entire ERP, illustrating the method’s ability to accurately recover local changepoints without segment restriction.

\section{Conclusion}
\label{sec:conclusion}
This work introduces a template-diffeomorphism framework for estimating univariate functions and their stationary points. The approach enables joint inference on stationary point locations without identifiability or label-switching issues. The model admits a parametric representation in which stationary points are smooth functions of the parameters, allowing application of high-dimensional, non-asymptotic theory to derive finite-sample confidence bounds for both parameters and stationary points. Although theoretical analysis focused on the frequentist maximum likelihood estimator, the framework naturally extends to Bayesian inference, as demonstrated in simulations.

Cosine bases were employed to parameterize the diffeomorphism; alternative bases such as Legendre polynomials or wavelets may be advantageous for capturing abrupt transitions or stationary points distant from the template. The parameter space is typically multimodal, exhibiting irregular likelihood surfaces. A basic MCMC and optimization strategy were sufficient here, but advanced computational methods for uncertainty quantification in non-concave settings warrant further investigation.

The theoretical results extend broadly: if the model is identifiable, gradients are bounded, and the parameter space is compact, finite-sample confidence bounds remain valid even under non-concave likelihoods. In the growing parameter setting, no assumptions on vector evolution with dimension were imposed, though structural assumptions could yield additional insight into MLE and posterior behavior. Additionally, while the template function was treated as fixed, further theory is required for finite-sample error bounds when the template itself is estimated, as such functions lie outside standard nonparametric classes.

Overall, the template-diffeomorphism framework offers a flexible and powerful tool for nonparametric regression. The composition inherits derivative properties from the template, while the diffeomorphism corrects for model misspecification. Diffeomorphic representations thus provide a promising avenue for shape-constrained nonparametric regression.

\clearpage

\bibliography{refs}

\appendix

\section{Bayesian Case}
\label{sec:bayesian_case}
We briefly discuss how to obtain finite-sample error bounds for the Bernstein-von-Mises approximation for the posterior of $\boldsymbol{\beta}$. Our approach follows the framework  developed by \cite{PanovSpokoiny2015_FiniteSampleBvM}, in which they extend the results of \cite{Spokoiny2012_ParametricEstimation} to the Bayesian case and allow the presence a nonparametric nuisance parameter. We adopt the same assumptions from Section \ref{sec:theory}, where $\boldsymbol{\beta^*} \in \mathbb{R}^{p}$ and the only estimable parameter. Therefore, all conditions the i.i.d. theory of \cite{Spokoiny2012_ParametricEstimation}, and consequently those of \cite{PanovSpokoiny2015_FiniteSampleBvM}, will be satisfied, and we apply their results directly. 

The primary quantity of interest is $\Delta(r_{0}, x)$, which generally defines the quality of the approximation of the expected log-likelihood with a second order Taylor expansion. We define $r_{0} = \sqrt{n}u_{0}$, with $u_{0}$ defined as in Definition \ref{def:local_sets} and $x$ is taken to be a large positive number. As stated in Theorem 4 in \cite{PanovSpokoiny2015_FiniteSampleBvM}, $\Delta(r_{0}, x)$ is generally of the order $Cp^3/n$, where $C$ depends on $x$, in the i.i.d. case. This leads directly to the following theorem, which measures how well the normalized posterior matches a standard Gaussian distribution for a finite sample $n$. 
\begin{theorem}
\label{thm:bayesian}
  Let the prior be uniform on $\boldsymbol{\beta} \in \boldsymbol{\Theta}$, $\bar{\boldsymbol{\beta}}_{n}$ be the posterior mean and $\boldsymbol{V}_{n}^2$ be the posterior covariance matrix at sample size $n$. Then, there exists a data dependent event $\Omega( x)$ such that $\mathbb{P}(\Omega( x)) \geq 1 - 4e^{-x}$ and \begin{align*}
        \exp\left(-2\Delta(r_{0}, x) - 8e^{-x} \right)\left(\mathbb{P}\left(\boldsymbol{\gamma} \in A \right) - \tau \right) - e^{-x} \leq & \mathbb{P}\left( \boldsymbol{V}_{n}^{-1}(\boldsymbol{\beta} - \bar{\boldsymbol{\beta}}_{n}) \in A \ \middle| \ \boldsymbol{Y} \right) \\
        \leq &  \exp\left(-\Delta(r_{0}, x) + 5e^{-x} \right)\left(\mathbb{P}\left(\boldsymbol{\gamma} \in A \right) + \tau \right), 
    \end{align*}
where $\boldsymbol{Y}$ is the observed data, $A$ is any measurable set in $\mathbb{R}^{p}$, $\boldsymbol{\gamma}$ is a standard Gaussian vector in $\mathbb{R}^{p}$, and $\tau$ is a function of $\Delta(r_{0}, x)$ and $p$.
\end{theorem}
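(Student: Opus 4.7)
The plan is to apply Theorem~4 of \cite{PanovSpokoiny2015_FiniteSampleBvM} directly, since the model falls within the i.i.d.\ framework they consider and $\boldsymbol{\beta}$ plays the role of the sole target parameter (no additional nuisance parameter is present in this version of the Bayesian setup). Their theorem requires four structural conditions: (i) sub-exponentiality of the score, (ii) a quadratic lower bound for the Kullback--Leibler divergence, (iii) a local second-order expansion of the log-likelihood with controlled remainder $\Delta(r_{0}, x)$, and (iv) a regular prior density on $\boldsymbol{\Theta}$. Conditions (i)--(iii) were already verified in Section~\ref{sec:Aux_Res} via Lemmas~\ref{lem:gradient_lem}--\ref{lem:kappa}, together with the lower bound $\mathbb{K}(\boldsymbol{\beta},\boldsymbol{\beta^*}) \geq b\left\lVert \boldsymbol{v}_{0}(\boldsymbol{\beta}-\boldsymbol{\beta^*})\right\rVert^{2}$. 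Condition (iv) holds trivially because $\boldsymbol{\Theta} = \{\boldsymbol{\beta}\in\mathbb{R}^{p} : 0 < \left\lVert\boldsymbol{\beta}\right\rVert < \pi\}$ is bounded, so the uniform prior has a finite normalizing constant and is bounded away from zero on its support. Hence the cited theorem applies without requiring any new regularity arguments.

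With these conditions in place, the two-sided sandwich bound follows from the construction in \cite{PanovSpokoiny2015_FiniteSampleBvM}. First, the data-dependent event $\Omega(x)$ is defined as the intersection of the sub-exponential concentration events for the score and the events on which the quadratic approximation of the log-likelihood holds uniformly over the local neighborhood of radius $r_{0} = \sqrt{n}\,u_{0}$. Each of these four events has probability at least $1 - e^{-x}$ by the exponential inequalities of \cite{Spokoiny2012_ParametricEstimation}, so a union bound yields $\mathbb{P}(\Omega(x)) \geq 1 - 4e^{-x}$. On $\Omega(x)$, the posterior density of $\boldsymbol{V}_{n}^{-1}(\boldsymbol{\beta} - \bar{\boldsymbol{\beta}}_{n})$ can be compared to the standard Gaussian density through a Laplace-type expansion whose multiplicative distortion is controlled by $\Delta(r_{0}, x)$; converting between the posterior mode and posterior mean (and between the observed information and $\boldsymbol{V}_{n}^{2}$) is precisely what produces the additive corrections $\pm 5e^{-x}$ and $\pm 8e^{-x}$ and the term $\tau = \tau(\Delta(r_{0}, x), p)$.

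The main obstacle is ensuring that the posterior mass assigned to $\boldsymbol{\beta} \notin \Theta(u_{0})$ is exponentially negligible, so that the Gaussian approximation derived on the local set extends to a statement about the full posterior on all measurable $A \subseteq \mathbb{R}^{p}$. This is precisely where the quadratic lower bound $b\left\lVert \boldsymbol{v}_{0}(\boldsymbol{\beta}-\boldsymbol{\beta^*})\right\rVert^{2}$ plays a pivotal role: it implies that outside $\Theta(u_{0})$ the integrated log-likelihood is exponentially smaller than its maximum, so the tail contribution is absorbed into the $e^{-x}$ remainder on $\Omega(x)$. A secondary, easier point is that $\bar{\boldsymbol{\beta}}_{n}$ and $\boldsymbol{V}_{n}^{2}$ are well defined and finite under the uniform prior, which is immediate from the boundedness of $\boldsymbol{\Theta}$. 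Once these integrability and tail-control issues are handled, the remainder of the argument amounts to tracking constants through the Laplace expansion in \cite{PanovSpokoiny2015_FiniteSampleBvM} and reading off the announced form of the sandwich inequality.
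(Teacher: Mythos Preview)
Your proposal is correct and matches the paper's approach: the paper does not give a standalone proof of this theorem but simply notes that the i.i.d.\ conditions of \cite{Spokoiny2012_ParametricEstimation} verified in Section~\ref{sec:Aux_Res} coincide with those required by \cite{PanovSpokoiny2015_FiniteSampleBvM}, so the result follows by direct citation. Your write-up actually supplies more detail than the paper (the union-bound construction of $\Omega(x)$, the role of the quadratic lower bound in controlling tail mass, and the trivial verification of the prior regularity condition), but the underlying strategy is identical.
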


If $x$ and $n$ are large such that $e^{-x}$,  $\Delta(r_{0}, x)$ and $\tau$ are near $0$, then for almost any observed dataset $\boldsymbol{Y}$, $\mathbb{P}\left( \boldsymbol{V}_{n}^{-1}(\boldsymbol{\beta} - \bar{\boldsymbol{\beta}}_{n}) \in A \ \middle| \ \boldsymbol{Y} \right) \approx \mathbb{P}\left(\boldsymbol{\gamma} \in A \right)$. Thus, the rate at which the posterior is approximately normal is dependent on fast $\Delta(r_{0}, x)$ converges to 0. This result naturally extends to the growing parameter case, where $p$ is now a function of $n$. The result above still holds as long as $\Delta(r_{0}, x) = C_{p(n)}\left(p(n)^3/n\right)$ goes to $0$ as $p(n) \rightarrow \infty$. Finite-sample bounds can also be derived to quantify how well the posterior mean approximates the true parameter $\boldsymbol{\beta^*}$. Combining these bounds with Theorem \ref{thm:bayesian}, we can then apply an argument analogous to that of Theorem \ref{thm:delta_method} to obtain a finite-sample error bound for the normal approximation of the posterior distribution of stationary points.

\newpage
\section{Supplementary Materials}

\renewcommand{\theequation}{S.\arabic{equation}}
\setcounter{equation}{0}

% Start a fresh TOC for this section only
\startcontents[supp]
\printcontents[supp]{l}{1}{\section*{Supplementary Table of Contents}}

\subsection{Proof of Theorem~\ref{thm:thm_1}}
\label{sec:proof_thm1}
\begin{proof}
Let $f, g \in \mathcal{F}_{\lambda, M}$. We will show that there exists a continuous, bijective mapping $\gamma: [0,1] \to [0,1]$ such that $g \circ \gamma = f$.  Let $b_{1}, \ldots, b_{M}$ be the stationary points of $g$ and $b^*_{1}, \ldots, b^*_{M}$ be the stationary points of $f$. Define $b_{0} = b^*_{0} = 0$ and $b_{M+1} = b^*_{M+1} = 1$. For each $k = 0, \ldots, M$, note that both $g$ and $f$ are monotonic on $[b_k, b_{k+1}]$ and $[b^*_k, b^*_{k+1}]$, respectively, and their images are $[\lambda_k, \lambda_{k+1}]$. Since $g$ is monotonic on $[b_k, b_{k+1}]$, its restriction to this interval is invertible, and we denote this inverse by $g_{k}^{-1}: [\lambda_k, \lambda_{k+1}] \to [b_k, b_{k+1}]$. Define the function $\gamma: [0,1] \to [0,1]$ by
\begin{center}
    $\gamma(x) = g_{k}^{-1}(f(x)), \quad \text{if } x \in [b^*_k, b^*_{k+1}], \quad \text{for } k = 0, \ldots, M.$
\end{center}

Clearly, for each $k = 0, \ldots M$, if $x \in [b^*_k, b^*_{k+1}]$, then $g \circ \gamma(x) =  g \circ g_{k}^{-1}(f(x)) = f(x)$. Thus, for all $x \in [0, 1]$, $g \circ \gamma(x) = f(x)$. 
Next, since both $g$ and $f$ are continuous functions, $g_{k}^{-1}(f(x))$ is continuous on the interval $(b^*_{k}, b^*_{k+ 1})$, right continuous at $b^*_{k}$ and left continuous at $b^*_{k+ 1}$ for each $k$. In addition, clearly $b_{k + 1} = g_{k}^{-1}(f(b^*_{k + 1})) = g_{k + 1}^{-1}(f(b^*_{k + 1}))$  for all $k = 0, ..., M - 1$. Thus, because at these endpoints the piecewise functions are equal and continuous, $\gamma(x)$ must also be continuous on $[0, 1]$. 

Finally, for $k = 0, \ldots, M$, $f$ is a bijective mapping from $[b_{k}^{*}, b_{k + 1}^{*})$ to $[\lambda_{k}, \lambda_{k + 1})$ and $g_{k}^{-1}(x)$ is a bijective mapping from $[\lambda_{k}, \lambda_{k + 1})$ to $[b_{k}, b_{k+ 1})$ for  $k = 0, \ldots, M$. Because the composition of bijective functions is bijective, $\gamma(x)$ is a bijective mapping from $[b_{k}^{*}, b_{k + 1}^{*})$ to $[b_{k}, b_{k + 1})$. As  $\bigcap\limits_{k = 0}^{M} [b_{k}^{*}, b_{k + 1}^{*}) = \emptyset$ and $\bigcap\limits_{i = 0}^{M} [b_{k}, b_{k + 1}) = \emptyset$, $\gamma(x)$ is also a bijective mapping from $[0, 1)$ to $[0, 1)$. Finally, $\gamma(1) = g_{M}^{-1}(f(1)) = g_{M}^{-1}(\lambda_{M + 1}) = 1$, showing we have found a continuous, bijective mapping $\gamma: [0,1] \to [0,1]$ such that $g \circ \gamma = f$. 
\end{proof}

\subsection{Bounding the Gradient of $\gamma_{\boldsymbol{\beta}}$ and Proof of Lemma~\ref{lem:gradient_lem}}
\label{sec:gradient_beta}

Define the function space \begin{align}
    V = \left\{ v \in \mathbb{L}^{2}([0, 1]) : \left\lVert v \right\lVert < \pi, \int_{0}^{1} v(t)dt = 0\right\},
\end{align}
 where $\left\lVert  \,\cdot\,  \right\lVert$ is the $\mathbb{L}^{2}$ norm. \cite{dasgupta2020two} showed that any diffeomorphisms of $[0, 1]$ $\gamma$ can be represented as \begin{align}
     \gamma(t) = \int_{0}^{t}\left[\cos(\left\lVert v \right\lVert) + \frac{\sin(\left\lVert  v \right\lVert)}{\left\lVert  v \right\lVert} v(s)\right]^{2} dt.
 \end{align}
Because $v$ is infinite dimensional, we instead approximate it with a $p$ dimensional subset $V^p$.  Let $\boldsymbol{\beta} = (\beta_{1}, ..., \beta_{p}) \in \mathbb{R}^{p}$ where $ \left\lVert\boldsymbol{\beta}\right\lVert < \pi $, then any element of $V^{p}$ can be represented as the sum of orthogonal basis functions, $v_{\boldsymbol{\beta}}(s) = \sum\limits_{j = 1}^{p} \beta_{j} b_{j}(s) $. Thus, the corresponding diffeomorphism is equal to \begin{align*}
\gamma_{\boldsymbol{\beta}}(t)= \cos^2(\left\lVert \boldsymbol{\beta}\right\lVert)t + 2\frac{\cos(\left\lVert\boldsymbol{\beta} \right\lVert)\sin(\left\lVert \boldsymbol{\beta}\right\lVert)}{\left\lVert \boldsymbol{\beta}\right\lVert}\int_{0}^{t}v_{\boldsymbol{\beta}}(s)ds + \frac{\sin^2(\left\lVert\boldsymbol{\beta}\right\lVert)}{\left\lVert\boldsymbol{\beta}\right\lVert^2}\int_{0}^{t}v^{2}_{\boldsymbol{\beta}}(s)ds. 
\end{align*} 
For the rest of this work, we choose the cosine basis, where $b_{j}(t) = \sqrt{2}\cos(j \pi t)$. Thus, \begin{align*}
    \int_{0}^t v_{\boldsymbol{\beta}}(s)ds = \sqrt{2} \sum\limits_{j = 1}^{p}\beta_{j}\frac{\sin(\pi j t)}{\pi j}
\end{align*}
and  
\begin{align*}
    \int_{0}^{t}v^{2}_{\boldsymbol{\beta}}(s)ds
    = 2\left[ \sum\limits_{j = 1}^{p} \beta_{j}^{2}\left (\frac{\sin(2\pi j t)}{4 \pi j} + \frac{t}{2}\right) + \sum\limits_{j \neq k} \beta_{j}\beta_{k} \left( \frac{\sin(\pi(k + j) t)}{2\pi(k + j)} + \frac{\sin(\pi(k - j) t)}{2\pi(k - j)}   \right) \right].
\end{align*}
Now, we will need to establish upper bounds for the integral, which will be used to prove that the gradient $\nabla_{\boldsymbol{\beta}}f_{\boldsymbol{\beta}}(x)$ is bounded for all $\boldsymbol{\beta} \in \boldsymbol{\Theta}$ and $x \in [0, 1]$. First,   \begin{equation}
\label{eq:bound_1}
    \left|\int_{0}^{t}v_{\boldsymbol{\beta}}(s)ds\right| \leq   \sqrt{2}\sum\limits_{j = 1}^{p}|\beta_{j}| \left| \frac{\sin(\pi j t)}{\pi j}\right|   \leq   \sqrt{2}\pi \sum\limits_{j = 1}^{p}\left| \frac{1}{\pi j}\right| 
    \leq   \sqrt{2}(1 + \ln(p)).
\end{equation}
Next, similarly, \begin{align*}
    \left|\sum\limits_{j = 1}^{p}\beta_{j}^{2}\left (\frac{\sin(2\pi j t)}{4 \pi j} + \frac{t}{2}\right)\right|  \leq   \sum\limits_{j = 1}^{p} |\beta_{j}^{2}| \left ( \left| \frac{\sin(2\pi j t)}{4 \pi j}\right| + \left|\frac{t}{2}\right|\right)   \leq  \pi^{2}\left(\frac{1}{4\pi}(1 + \ln(p))  +\frac{p}{2} \right).
\end{align*}
In addition, if $j$ is kept constant, \begin{align*}
    \left| \sum\limits_{j \neq k} \beta_{j}\beta_{k} \left( \frac{\sin(\pi(k + j) t)}{2\pi(k + j)} + \frac{\sin(\pi(k - j) t)}{2\pi(k - j)}   \right)\right| 
    \leq&  \pi^2 \sum\limits_{ k \neq j }\left| \frac{1}{2\pi( k + j)}   \right| + \left|  \frac{1}{2\pi( k -j )} \right| \\
    \leq & \frac{\pi}{2}\sum\limits_{k = 1}^{p}\left|\frac{1}{k}\right| + \sum\limits_{k > j} \left|\frac{1}{k - j}\right| \\
    \leq & \pi(2 + 2\ln(p)).
\end{align*}
Thus, we can bound the integral of $v^{2}_{\boldsymbol{\beta}}(s)$ by a function of $p$: \begin{equation}
\label{eq:bound_2}
   \left| \int_{0}^{t}v^{2}_{\boldsymbol{\beta}}(s) ds\right| \leq\pi^{2}\left(\frac{1}{4\pi}(1 + \ln(p))  +\frac{p}{2}  \right) +   p\pi(2 + 2\ln(p)).
\end{equation}

Now, we consider $\nabla_{\boldsymbol{\beta}}\ f_{\boldsymbol{\beta}}(x)$, which equals
  \begin{align*}
g'\!\bigl(\gamma_{\boldsymbol{\beta}}(x)\bigr) \Biggl[
\nabla_{\boldsymbol{\beta}}\!\left(
2\,\frac{\cos\!\bigl(\|\boldsymbol{\beta}\|\bigr)\sin\!\bigl(\|\boldsymbol{\beta}\|\bigr)}{\|\boldsymbol{\beta}\|}
      \right)
      \left( \int_{0}^{t} v_{\boldsymbol{\beta}}(s)\,ds \right) +
2\,\frac{\cos\!\bigl(\|\boldsymbol{\beta}\|\bigr)\sin\!\bigl(\|\boldsymbol{\beta}\|\bigr)}{\|\boldsymbol{\beta}\|}
      \left( \nabla_{\boldsymbol{\beta}} \int_{0}^{t} v_{\boldsymbol{\beta}}(s)\,ds \right) +  \\[0.8em] \nabla_{\boldsymbol{\beta}}\!\left(
        \frac{\sin^{2}\!\bigl(\|\boldsymbol{\beta}\|\bigr)}{\|\boldsymbol{\beta}\|^{2}}
      \right)
      \left( \int_{0}^{t} v_{\boldsymbol{\beta}}^{2}(s)\,ds \right) + \frac{\sin^{2}\!\bigl(\|\boldsymbol{\beta}\|\bigr)}{\|\boldsymbol{\beta}\|^{2}}
      \left( \nabla_{\boldsymbol{\beta}} \int_{0}^{t} v_{\boldsymbol{\beta}}^{2}(s)\,ds \right) +\nabla_{\boldsymbol{\beta}} \cos^{2}\!\bigl(\|\boldsymbol{\beta}\|\bigr)\,  
\Biggr].
\end{align*}
We would like to upper bound this gradient by a function of $p$. Considering the fact that $\left\lVert \boldsymbol{\beta} \right\lVert < \pi$ and that $\left\lVert  \frac{\boldsymbol{\beta}}{\left\lVert  \boldsymbol{\beta}  \right\lVert} \right\lVert = 1$, we can provide an upper bound for each of the below components:
\begin{align}
\label{eq:bound_3}
2\,\frac{\cos\!\left(\|\boldsymbol{\beta}\|\right)\sin\!\left(\|\boldsymbol{\beta}\|\right)}{\|\boldsymbol{\beta}\|}\leq2, \quad
    \frac{\sin^{2}\!\left(\|\boldsymbol{\beta}\|\right)}{\|\boldsymbol{\beta}\|}  \leq 1, \quad 
    \nabla_{\boldsymbol{\beta}} \cos^{2}\!\left(\|\boldsymbol{\beta}\|\right)\leq  1,
\end{align}
\begin{align}
    \label{eq:bound_4}
    \nabla_{\boldsymbol{\beta}} \!\left( 2\,\frac{\cos\!\left(\|\boldsymbol{\beta}\|\right)\sin\!\left(\|\boldsymbol{\beta}\|\right)}{\|\boldsymbol{\beta}\|} \right)\leq 1.7448, \quad \text{and}\quad
    \nabla_{\boldsymbol{\beta}} \!\left( \frac{\sin^{2}\!\left(\|\boldsymbol{\beta}\|\right)}{\|\boldsymbol{\beta}\|^{2}} \right) \leq  0.5402.
\end{align}
For the integrals' gradient, $ \left(\nabla_{\boldsymbol{\beta}} \int_{0}^{t} v_{\boldsymbol{\beta}}(s)ds\right)_{j}  \leq \frac{\sqrt{2}}{\pi j }$, which implies  \begin{equation}
\label{eq:bound_5}
         \left\lVert \nabla_{\boldsymbol{\beta}} \int_{0}^{t} v_{\boldsymbol{\beta}}(s)\,ds \right\rVert \leq \sqrt{\frac{2}{\pi^2}\sum\limits_{j = 1}^{\infty} \frac{1}{j^2}} = \frac{1}{\pi}
\end{equation} 
and \begin{align*}
      \left|\nabla_{\boldsymbol{\beta}} \int_{0}^{t} v^{2}_{\boldsymbol{\beta}}(s)ds\right|_{j} \leq &2\left[2\pi\left(\frac{1}{4 \pi} + \frac{1}{2}\right) + \pi \sum\limits_{j \neq k}\left| \frac{1}{2\pi( k + j)}   \right| + \left|  \frac{1}{2\pi( k -j )} \right| \right] \\
      \leq & 2\left[2\pi\left(\frac{1}{4 \pi}  +  \frac{1}{2}\right)  +  1 + \ln(p) \right].
\end{align*}
Therefore, by the Cauchy-Schwarz inequality,  \begin{equation}
\label{eq:bound_6}
    \left\lVert  \nabla_{\boldsymbol{\beta}} \int_{0}^{t} v_{\boldsymbol{\beta}}(s)ds  \right\rVert \leq 2 \sqrt{p}\left[2\pi\left(\frac{1}{4 \pi}  +  \frac{1}{2}\right)  +  1 + \ln(p) \right].
\end{equation}

By Assumption \ref{ass:CS}, $g$ is a continuously differentiable function and thus, the derivative $g'\!\bigl(\gamma_{\boldsymbol{\beta}}(x)\bigr)$ is bounded on $x \in [0, 1]$ by the extreme value theorem. In addition, each term in the gradient of $f_{\boldsymbol{\beta}}(x)$ can be bounded by a function of $p$, as shown in equations  (\ref{eq:bound_1}), (\ref{eq:bound_2}), (\ref{eq:bound_3}), (\ref{eq:bound_4}), (\ref{eq:bound_5}) and (\ref{eq:bound_6}), showing the gradient's norm is bounded for all $x \in [0, 1]$ and $\boldsymbol{\beta} \in \boldsymbol{\Theta}$. The dominating term comes from $\left| \int_{0}^{t}v^{2}_{\boldsymbol{\beta}}(s)ds \right|$, which is $p\ln(p)$. Thus, $\left\lVert \nabla_{\boldsymbol{\beta}} f_{\boldsymbol{\beta}}(x) \right\lVert \leq Cp\ln(p)$, where $C > 0$.

Finally, we need to show that $f$ is thrice continuously differentiable with respect to $\boldsymbol{\beta}$ on the set $\boldsymbol{\Theta}$. First, clearly, $\int_{0}^{t}v_{\boldsymbol{\beta}}(s)ds$ and $\int_{0}^{t}v^{2}_{\boldsymbol{\beta}}(s)ds $ are three times differentiable with respect to $\boldsymbol{\beta}$, as $v_{\boldsymbol{\beta}}(s)$ is simply a linear combination of each of the $\boldsymbol{\beta}$ components ($\boldsymbol{\beta}_{j}$). Now, the first, second and third differential of the exponential mapping and the norm of $\boldsymbol{\beta}$ will contain terms with $\left\lVert \boldsymbol{\beta} \right\lVert$ in the denominator. However, because $\boldsymbol{\Theta}$ does not include $\boldsymbol{\beta} = 0$, $\left\lVert\boldsymbol{\beta} \right\lVert$ will never equal 0. Thus, the first, second and third differential of $f$ with respect to $\boldsymbol{\beta}$ will be continuous for all $x \in [0, 1]$. This leads to the following lemma, which will be helpful in deriving finite sample error bounds. \setcounter{lemma}{0} \begin{lemma}
\label{lem:gradient_lem}
 Suppose Assumptions~\ref{ass:CS}-\ref{ass:Identifiability} hold.
       Then $f_{\boldsymbol{\beta}}(x) = g(\gamma_{\boldsymbol{\beta}}(x))$ is a thrice continuously differentiable function with respect to $\boldsymbol{\beta}$ for all $x \in [0, 1]$. In addition, the norm of the $\nabla_{\boldsymbol{\beta}} f_{\boldsymbol{\beta}}(x)$ is bounded by the function  $C(p \ln p)$ where $C > 0$.
\end{lemma} 

\begin{proof}
    Clearly, $g$, $g'$, $g''$ and $g'''$, are continuous on $[0, 1]$ by Assumption \ref{ass:CS}. The components of the first, second and third differential with respect to $\boldsymbol{\beta}$ involve sums and products of the differentials of $\int_{0}^{t}v_{\boldsymbol{\beta}}(s)ds$, $\int_{0}^{t}v^{2}_{\boldsymbol{\beta}}(s)ds $, the exponential mapping, $\left\lVert \boldsymbol{\beta} \right\lVert$ and $g$, which are all continuous with respect to $\boldsymbol{\beta} \in \boldsymbol{\Theta}$ for all $x \in [0, 1]$.  As sums and products of continuous functions are continuous, $f_{\boldsymbol{\beta}}(x)$ is a thrice continuously differentiable function with respect to $\boldsymbol{\beta}$ on $\boldsymbol{\Theta}$. In addition, we showed earlier that the norm of the gradient can be bounded by $Cp\ln(p)$ where $C > 0$. 
\end{proof}

\subsection{Proofs of Lemmas~\ref{lem:subexp} and~\ref{lem:kappa} }
\label{sec:lemmas}

The next two sections detail several lemmas which will help us show the i.i.d. conditions from \cite{Spokoiny2012_ParametricEstimation}, which will help derive finite sample bounds for the maximum likelihood estimator. We first provide proofs of Lemmas~\ref{lem:subexp} and \ref{lem:kappa} from the main manuscript. First, we define \begin{align}
   \zeta(\boldsymbol{\beta}) = L(\boldsymbol{\beta}) - \mathbb{E}(L(\boldsymbol{\beta})) = \sum\limits_{i = 1}^{n} \ell_{i}(\boldsymbol{\beta}) - \mathbb{E}(\ell_{i}(\boldsymbol{\beta})),
\end{align}
where $L(\boldsymbol{\beta})$ is the log-likelihood and $\ell_{i}(\boldsymbol{\beta})$ is the log-likelihood for an individual data point. For our model, \begin{align}
    \ell_{i}(\boldsymbol{\beta}) = & -\frac{1}{2}\ln(2 \pi \sigma^2) - \frac{1}{2\sigma^{2}}(y_i - f_{\boldsymbol{\beta}}(x_{i}))^{2} + \text{const}, \\
    \nabla  \ell_{i}(\boldsymbol{\beta}) = & \ \sigma^{-2}(y_{i} - f_{\boldsymbol{\beta}}(x_{i}))\nabla  f_{\boldsymbol{\beta}}(x_{i}).
\end{align}
The constant refers to distribution of $X$,  which is not dependent on the value of $\boldsymbol{\beta}$ (Assumption \ref{ass:PI_Px}). By Assumption~\ref{ass:CS}, $\mathbb{E}(y_{i} | x_{i}) = f_{\boldsymbol{\beta}^*}(x_{i})$. Clearly, by Lemma~\ref{lem:gradient_lem}, $\ell_{i}(\boldsymbol{\beta})$ is a twice continuously differentiable function. Therefore, $\nabla \mathbb{E}(l_{i}(\boldsymbol{\beta})) =  \mathbb{E}( \nabla l_{i}(\boldsymbol{\beta}))$. Plugging in the value of $\nabla  \ell_{i}(\boldsymbol{\beta})$ into the expectation and using the law of iterated expectations, we can conclude: \begin{align*}
    \nabla \mathbb{E}(l_{i}(\boldsymbol{\beta})) =\mathbb{E}\left[ \sigma^{-2}(f_{\boldsymbol{\beta}^*}(x_{i}) - f_{\boldsymbol{\beta}}(x_{i}))\nabla  f_{\boldsymbol{\beta}}(x_{i})  \right].
\end{align*}
Now, the two quantities of interest in Spokoiny's theory are \begin{align}
    \nabla \zeta_{i}(\boldsymbol{\beta}) 
    =   \sigma^{-2}(y_{i} - f_{\boldsymbol{\beta}}(x_{i}))\nabla  f_{\boldsymbol{\beta}}(x_{i}) - \mathbb{E}\left[ \sigma^{-2}(f_{\boldsymbol{\beta}^*}(x_{i}) - f_{\boldsymbol{\beta}}(x_{i}))\nabla  f_{\boldsymbol{\beta}}(x_{i})  \right], 
\end{align}
and the KL-Divergence between the models parameterized by $\boldsymbol{\beta}$ and $\boldsymbol{\beta}^*$,
\begin{align}
    \mathbb{K}(\boldsymbol{\beta}, \boldsymbol{\beta}^*) = & -\mathbb{E}\left[ \ell_{i}(\boldsymbol{\beta}) - \ell_{i}(\boldsymbol{\beta}^*)   \right] 
    =  \frac{1}{2\sigma^2}\mathbb{E}\left[ (f_{\boldsymbol{\beta}}(x)  - f_{\boldsymbol{\beta^*}}(x))^2  \right ].
\end{align}
The first lemma of this section shows that $ \nabla \zeta_{i}(\boldsymbol{\beta})$ is a sub-Gaussian (therefore sub-exponential) random vector for all $\boldsymbol{\beta} \in \boldsymbol{\Theta}$. Here, $\boldsymbol{v}_{0}$ is the square root of $\boldsymbol{v}_{0}^{2}$, which exists because $\boldsymbol{v}_{0}^{2}$ is positive definite by Assumption \ref{ass:Identifiability}.
\begin{lemma}
\label{lem:subexp}
 Suppose Assumptions~\ref{ass:CS}-\ref{ass:Identifiability} are satisfied.
    Consider $\zeta_{i}(\boldsymbol{\beta})$ with $\boldsymbol{\beta} \in \boldsymbol{\Theta}$ and $i \in \{1, \ldots n \}$.  Then, for any $\boldsymbol{\gamma} \in \mathbb{R}^{p}$, \begin{align*}
        \mathbb{E}\exp\left(\lambda \boldsymbol{\gamma}^{T} \nabla \zeta_{i}(\boldsymbol{\beta})\right) \leq \exp\left(\frac{v_{0}^{2}\lambda^{2}}{2} \left\lVert\boldsymbol{v}_{0}\boldsymbol{\gamma}\right\lVert^{2}\right),
    \end{align*}
where $v_{0}^{2} =\max \left( 1, \frac{(Cp\ln p)^2}{\sigma^{2}\lambda_{\min}}\right) $. The constant $Cp\ln(p)$ is the upper bound of  $ \left\lVert\nabla_{\boldsymbol{\beta}} f_{\boldsymbol{\beta}}(x)\right\lVert$  and $\lambda_{\min}$ is the minimum eigenvalue of $\boldsymbol{v}_{0}^{2}$. 
\end{lemma}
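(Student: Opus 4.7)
The plan is to decompose $\nabla\zeta_i(\boldsymbol{\beta})$ into a Gaussian noise piece plus a bounded centered piece, bound each separately, and combine. Writing $y_i - f_{\boldsymbol{\beta}}(x_i) = \epsilon_i + (f_{\boldsymbol{\beta}^*}(x_i) - f_{\boldsymbol{\beta}}(x_i))$ with $\epsilon_i \sim N(0,\sigma^2)$ independent of $x_i$ gives
$$\nabla\zeta_i(\boldsymbol{\beta}) = \sigma^{-2}\epsilon_i\,\nabla f_{\boldsymbol{\beta}}(x_i) + \sigma^{-2}\bigl(u_i(\boldsymbol{\beta}) - \mathbb{E}u_i(\boldsymbol{\beta})\bigr),\qquad u_i(\boldsymbol{\beta}) := \bigl(f_{\boldsymbol{\beta}^*}(x_i) - f_{\boldsymbol{\beta}}(x_i)\bigr)\nabla f_{\boldsymbol{\beta}}(x_i),$$
where the second summand depends only on $x_i$. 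The first, conditional on $x_i$, is a centered Gaussian whose MGF we can compute in closed form.

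Applying the tower property and the normal MGF then yields
$$\mathbb{E}\exp\bigl(\lambda\boldsymbol{\gamma}^T\nabla\zeta_i(\boldsymbol{\beta})\bigr) = \mathbb{E}_{x_i}\exp\!\left(\lambda\sigma^{-2}\boldsymbol{\gamma}^T(u_i - \mathbb{E}u_i) + \tfrac{\lambda^2\sigma^{-2}}{2}\bigl(\boldsymbol{\gamma}^T\nabla f_{\boldsymbol{\beta}}(x_i)\bigr)^2\right).$$
Invoking Lemma \ref{lem:gradient_lem} together with Cauchy--Schwarz gives $(\boldsymbol{\gamma}^T\nabla f_{\boldsymbol{\beta}}(x_i))^2 \leq (Cp\ln p)^2\|\boldsymbol{\gamma}\|^2$ uniformly in $x_i$, so the quadratic contribution can be pulled outside the expectation as $\exp\bigl(\tfrac{\lambda^2}{2\sigma^2}(Cp\ln p)^2\|\boldsymbol{\gamma}\|^2\bigr)$. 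For the remaining factor, Assumption~\ref{ass:CS} guarantees $g$ is continuous on $[0,1]$, hence uniformly bounded by some $M_g$, which combined with the gradient bound gives $|\sigma^{-2}\boldsymbol{\gamma}^T u_i| \leq 2M_g\sigma^{-2}\|\boldsymbol{\gamma}\|(Cp\ln p)$. Hoeffding's lemma then controls the centered MGF by $\exp\bigl(2\lambda^2 M_g^2\sigma^{-4}(Cp\ln p)^2\|\boldsymbol{\gamma}\|^2\bigr)$.

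Multiplying the two exponential factors produces a bound of the form $\exp\bigl(K\lambda^2(Cp\ln p)^2\sigma^{-2}\|\boldsymbol{\gamma}\|^2\bigr)$ for an absolute constant $K$ depending only on $M_g$ and $\sigma$. Using the eigenvalue inequality $\|\boldsymbol{\gamma}\|^2 \leq \lambda_{\min}^{-1}\|\boldsymbol{v}_0\boldsymbol{\gamma}\|^2$ converts this into $\exp\bigl(K\lambda^2(Cp\ln p)^2(\sigma^2\lambda_{\min})^{-1}\|\boldsymbol{v}_0\boldsymbol{\gamma}\|^2\bigr)$. Absorbing $K$ into the generic constant $C$ and taking the maximum with $1$ (to cover the regime where $(Cp\ln p)^2$ is small compared with $\sigma^2\lambda_{\min}$) delivers the stated sub-Gaussian bound with $v_0^2 = \max\bigl(1,\,(Cp\ln p)^2/(\sigma^2\lambda_{\min})\bigr)$.

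The main obstacle is combining the genuinely Gaussian contribution from $\epsilon_i$ with the merely bounded contribution from the misspecification term $u_i - \mathbb{E}u_i$ into a single quadratic-in-$\boldsymbol{\gamma}$ exponent. The deterministic gradient bound from Lemma \ref{lem:gradient_lem} is the key enabler: it lets the conditional Gaussian variance be treated as a uniform constant in $x_i$, so that after conditioning one only has to bound an MGF of a bounded random variable via Hoeffding rather than manage a stochastic variance proxy. Without this uniform control, the two layers of randomness would not collapse cleanly into one sub-Gaussian envelope.
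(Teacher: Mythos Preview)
Your argument is correct and shares the same scaffolding as the paper's proof: condition on $x_i$, use the Gaussian MGF for the $\epsilon_i$-driven piece, bound $(\boldsymbol{\gamma}^T\nabla f_{\boldsymbol{\beta}}(x_i))^2$ via Cauchy--Schwarz and Lemma~\ref{lem:gradient_lem}, and finish with the eigenvalue inequality $\|\boldsymbol{\gamma}\|^2\le\lambda_{\min}^{-1}\|\boldsymbol{v}_0\boldsymbol{\gamma}\|^2$. The point of departure is that the paper asserts outright that $\boldsymbol{\gamma}^T\nabla\zeta_i(\boldsymbol{\beta})\mid x_i$ is a \emph{mean-zero} Gaussian, so its conditional MGF is exactly $\exp\bigl(\tfrac{\lambda^2}{2\sigma^2}(\boldsymbol{\gamma}^T\nabla f_{\boldsymbol{\beta}}(x_i))^2\bigr)$ and no separate treatment of the misspecification term is needed. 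That claim is exact at $\boldsymbol{\beta}=\boldsymbol{\beta}^*$ but for general $\boldsymbol{\beta}$ the conditional mean is $\sigma^{-2}\boldsymbol{\gamma}^T u_i-\mathbb{E}[\sigma^{-2}\boldsymbol{\gamma}^T u_i]$, which is a nontrivial function of $x_i$. You handle this nonzero conditional mean explicitly by splitting it off and applying Hoeffding's lemma to the bounded, centered piece, then absorbing the resulting constant into $C$. The paper's route is shorter and yields the constant $v_0^2$ with no extra factor; your route costs an additional harmless constant but gives a bound that is rigorously justified for every $\boldsymbol{\beta}\in\boldsymbol{\Theta}$, which is what the lemma actually claims and what condition $(\mathit{eu})$ later requires.
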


\begin{proof}

With $\ell_{i}(\boldsymbol{\beta})$ be the log likelihood for a single data point at $\boldsymbol{\beta}$, let $\zeta_{i}(\boldsymbol{\beta}) = \ell_{i}(\boldsymbol{\boldsymbol{\beta}}) - \mathbb{E}\left( \ell_{i}(\boldsymbol{\beta})\right)$. As the model is second order differentiable with respect to $\boldsymbol{\beta}$, the gradient of $\zeta_{i}(\boldsymbol{\beta})$ is equal to
\begin{align*}
  \nabla_{\boldsymbol{\beta}} \zeta_{i}(\boldsymbol{\beta}) 
    =   \sigma^{-2}(y_{i} - f_{\boldsymbol{\beta}}(x_{i}))\nabla  f_{\boldsymbol{\beta}}(x_{i}) - \mathbb{E}\left[ \sigma^{-2}(f_{\boldsymbol{\beta}^*}(x_{i}) - f_{\boldsymbol{\beta}}(x_{i}))\nabla  f_{\boldsymbol{\beta}}(x_{i})  \right].
\end{align*}
Notice that if the gradient is conditioned on $x_{i}$, the expectation becomes a constant. By rearranging terms and using the fact that the residual $y_{i} - f_{\boldsymbol{\beta}^*}(x_{i}) \sim N(0, \sigma^{2})$, we can conclude that for any $\boldsymbol{\gamma} \in \mathbb{R}^{p}$, \begin{align*}
    \boldsymbol{\gamma}^T \nabla_{\boldsymbol{\beta}} \zeta_i(\boldsymbol{\beta}) \mid x_i \sim \mathcal{N}\left(0, \sigma^{-2} \left( \boldsymbol{\gamma}^T \nabla_{\boldsymbol{\beta}} f_{\boldsymbol{\beta}}(x_i) \right)^2 \right).
\end{align*}
Now, our goal is to upper bound the moment generating function (MGF) of $ \boldsymbol{\gamma}^T \nabla_{\boldsymbol{\beta}} \zeta_i(\boldsymbol{\beta})$. To begin, we apply the law of iterated expectations, conditioning on $x_{i}$. \begin{align*}
    \mathbb{E}\left[ \exp\left( \lambda \boldsymbol{\gamma}^T \nabla_{\boldsymbol{\beta}} \zeta_i(\boldsymbol{\beta}) \right) \right] = \mathbb{E}\left[ \mathbb{E}\left[ \exp\left( \lambda \boldsymbol{\gamma}^T \nabla_{\boldsymbol{\beta}} \zeta_i(\boldsymbol{\beta}) \right) \mid x_i \right] \right].
\end{align*}
Since $ \boldsymbol{\gamma}^T \nabla_{\boldsymbol{\beta}} \zeta_i(\boldsymbol{\beta}) \mid x_i $ follows a normal distribution with mean zero and variance $ \sigma^{-2} \left( \boldsymbol{\gamma}^T \nabla_{\boldsymbol{\beta}} f_{\boldsymbol{\beta}}(x_i) \right)^2 $, the conditional MGF is that of a Gaussian random variable. For a normal random variable $ Z \sim N(0, \tau^2) $, the MGF is $ \mathbb{E}[\exp(\lambda Z)] = \exp(\lambda^2 \tau^2 / 2) $. Thus,
\begin{align*}
    \mathbb{E}\left[ \exp\left( \lambda \boldsymbol{\gamma}^T \nabla_{\boldsymbol{\beta}} \zeta_i(\boldsymbol{\beta}) \right) \mid x_i \right] = \exp\left( \frac{\lambda^2}{2\sigma^2} \left( \boldsymbol{\gamma}^T \nabla_{\boldsymbol{\beta}} f_{\boldsymbol{\beta}}(x_i) \right)^2 \right).
\end{align*}
Taking the expectation over $ x_i $, we obtain 
\begin{align*}
    \mathbb{E}\left[ \exp\left( \lambda \boldsymbol{\gamma}^T \nabla_{\boldsymbol{\beta}} \zeta_i(\boldsymbol{\beta}) \right) \right] = \mathbb{E}\left[ \exp\left( \frac{\lambda^2}{2\sigma^2} \left( \boldsymbol{\gamma}^T \nabla_{\boldsymbol{\beta}} f_{\boldsymbol{\beta}}(x_i) \right)^2 \right) \right].
\end{align*}
Applying the Cauchy-Schwarz inequality to $ \boldsymbol{\gamma}^T \nabla_{\boldsymbol{\beta}} f_{\boldsymbol{\beta}}(x_i)$, we have \begin{align*}
    \left( \boldsymbol{\gamma}^T \nabla_{\boldsymbol{\beta}} f_{\boldsymbol{\beta}}(x_i) \right)^2 \leq \|\boldsymbol{\gamma}\|^2 \|\nabla_{\boldsymbol{\beta}} f_{\boldsymbol{\beta}}(x_i)\|^2.
\end{align*}
In addition, by the results of Lemma~\ref{lem:gradient_lem}, $\|\nabla_{\boldsymbol{\beta}} f_{\boldsymbol{\beta}}(x_i)\|^2 \leq (Cp \ln p)^{2}$. As the exponential is an increasing function, we can now conclude that: 
\begin{align*}
    \mathbb{E}\left[ \exp\left( \frac{\lambda^2}{2\sigma^2} \left( \boldsymbol{\gamma}^T \nabla_{\boldsymbol{\beta}} f_{\boldsymbol{\beta}}(x_i) \right)^2 \right) \right]  \leq \exp\left( \frac{\lambda^2}{2\sigma^2} \|\boldsymbol{\gamma}\|^2 (C p \ln p)^2 \right).
\end{align*}
Finally, $\boldsymbol{v}_{0}^{2}$ is a positive definite matrix, and thus has a positive minimum eigenvalue $\lambda_{\min}$. Thus, with $\boldsymbol{v}_{0}$ being the matrix square root of $\boldsymbol{v}_{0}^{2}$ with minimum eigenvalue $\sqrt{\lambda_{\min}}$, we obtain \begin{align*}
    \left\lVert \boldsymbol{\gamma}  \right\lVert^{2} \leq \frac{1}{\lambda_{\min}}\left\lVert\boldsymbol{v}_{0} \boldsymbol{\gamma}  \right\lVert^{2}.
\end{align*}
Thus, we can conclude the following, which is what we wanted to prove: \begin{align*}
      \mathbb{E}\left[ \exp\left( \frac{\lambda^2}{2\sigma^2} \left( \boldsymbol{\gamma}^T \nabla_{\boldsymbol{\beta}} f_{\boldsymbol{\beta}}(x_i) \right)^2 \right) \right]  \leq \exp\left( \frac{\lambda^2}{2} \frac{(C p \ln p)^2}{\sigma^2 \lambda_{\min}} \|\boldsymbol{v}_{0}\boldsymbol{\gamma}\|^2  \right) = \exp\left( \frac{v_{0}^{2}\lambda^2}{2} \|\boldsymbol{v}_{0}\boldsymbol{\gamma}\|^2  \right).
\end{align*}
\end{proof}

Lemma~\ref{lem:kappa} of regards $\boldsymbol{\beta}$'s outside $\Theta(u_{0})$, which is helpful in deriving a lower bound for the KL-Divergence.
 
\begin{lemma}
    Define $\kappa(u_{0}) = \underset{\boldsymbol{\beta} \notin \Theta(u_{0}) } {\inf} \mathbb{K}(\boldsymbol{\beta}, \boldsymbol{\beta^*})$. If Assumptions~\ref{ass:CS}-\ref{ass:Identifiability} hold, then $\kappa(u_{0})$ is greater than 0.
\end{lemma}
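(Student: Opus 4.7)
The plan is to argue by contradiction: suppose $\kappa(u_0) = 0$ and extract a minimizing sequence $\{\boldsymbol{\beta}_n\} \subset \boldsymbol{\Theta} \setminus \Theta(u_0)$ with $\mathbb{K}(\boldsymbol{\beta}_n, \boldsymbol{\beta}^*) \to 0$. The trouble is that $\boldsymbol{\Theta} \setminus \Theta(u_0)$ is not closed in $\mathbb{R}^p$ (it misses the annular boundary $\|\boldsymbol{\beta}\| = \pi$ and the puncture at $\boldsymbol{\beta} = 0$), so continuity of $\mathbb{K}$ alone will not deliver a contradiction. The strategy is to first establish that $\mathbb{K}$ is identifiable on $\boldsymbol{\Theta}$, then pass $\{\boldsymbol{\beta}_n\}$ to a limit in the compact ball $\{\|\boldsymbol{\beta}\| \leq \pi\}$ and split into cases based on whether the limit is interior to $\boldsymbol{\Theta}$ or on the boundary. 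Identifiability follows easily: if $\mathbb{K}(\boldsymbol{\beta}, \boldsymbol{\beta}^*) = 0$, then since $(f_{\boldsymbol{\beta}} - f_{\boldsymbol{\beta}^*})^2$ is continuous on $[0,1]$ and $\mathbb{P}_X$ assigns positive mass to every compact subinterval (Assumption~\ref{ass:PI_Px}), we get $f_{\boldsymbol{\beta}} \equiv f_{\boldsymbol{\beta}^*}$ pointwise on $[0,1]$; because $g$ is strictly monotone on each piece separated by its stationary points, and therefore invertible on each piece, this forces $\gamma_{\boldsymbol{\beta}} = \gamma_{\boldsymbol{\beta}^*}$, and the bijection of \citet{dasgupta2020two} between $\{0 < \|\boldsymbol{\beta}\| < \pi\}$ and $\Gamma_p$ then yields $\boldsymbol{\beta} = \boldsymbol{\beta}^*$.

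Next, the sequence $\{\boldsymbol{\beta}_n\}$ lies in the compact ball $\{\|\boldsymbol{\beta}\| \leq \pi\}$, so by Bolzano--Weierstrass I extract a subsequence $\boldsymbol{\beta}_{n_k} \to \boldsymbol{\beta}^{**}$. If $\boldsymbol{\beta}^{**} \in \boldsymbol{\Theta}$, continuity of $\mathbb{K}(\cdot, \boldsymbol{\beta}^*)$ on $\boldsymbol{\Theta}$ (which follows from the continuous dependence of $f_{\boldsymbol{\beta}}$ on $\boldsymbol{\beta}$ established in Lemma~\ref{lem:gradient_lem}) gives $\mathbb{K}(\boldsymbol{\beta}^{**}, \boldsymbol{\beta}^*) = 0$, while the constraint $\|\boldsymbol{v}_0(\boldsymbol{\beta}_n - \boldsymbol{\beta}^*)\| > u_0$ passes to the limit to yield $\boldsymbol{\beta}^{**} \neq \boldsymbol{\beta}^*$, contradicting identifiability.

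If instead $\boldsymbol{\beta}^{**}$ sits on the topological boundary ($\boldsymbol{\beta}^{**} = 0$ or $\|\boldsymbol{\beta}^{**}\| = \pi$), a direct inspection of the closed form expression for $\gamma_{\boldsymbol{\beta}}$ derived in Section~\ref{sec:Parameterizing} shows that its coefficients $\cos^2(\|\boldsymbol{\beta}\|)$, $2\cos(\|\boldsymbol{\beta}\|)\sin(\|\boldsymbol{\beta}\|)/\|\boldsymbol{\beta}\|$, and $\sin^2(\|\boldsymbol{\beta}\|)/\|\boldsymbol{\beta}\|^2$ admit well defined limits at both excluded endpoints that collapse $\gamma_{\boldsymbol{\beta}_{n_k}}$ uniformly on $[0,1]$ to the identity map. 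By uniform continuity of $g$, this yields $f_{\boldsymbol{\beta}_{n_k}} \to g$ uniformly on $[0,1]$, so $\mathbb{K}(\boldsymbol{\beta}_{n_k}, \boldsymbol{\beta}^*) \to \frac{1}{2\sigma^2}\mathbb{E}[(g(X) - f_{\boldsymbol{\beta}^*}(X))^2]$, which is strictly positive because $g \neq f_{\boldsymbol{\beta}^*}$ (Assumption~\ref{ass:CS}) and $\mathbb{P}_X$ has full support (Assumption~\ref{ass:PI_Px}). This once again contradicts $\mathbb{K}(\boldsymbol{\beta}_n, \boldsymbol{\beta}^*) \to 0$, so $\kappa(u_0) > 0$. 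The main obstacle is precisely this boundary degeneracy: both $\boldsymbol{\beta} = 0$ and $\|\boldsymbol{\beta}\| = \pi$ collapse the diffeomorphic parameterization to the identity, so identifiability on $\bar{\boldsymbol{\Theta}}$ genuinely fails and the naive compactness argument breaks down. The seemingly technical hypothesis $g \neq f_{\boldsymbol{\beta}^*}$ in Assumption~\ref{ass:CS} is exactly what rescues the argument, by ensuring that any minimizing sequence escaping to the boundary pushes the KL divergence away from zero rather than toward it.
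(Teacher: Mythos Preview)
Your proof is correct and follows essentially the same route as the paper's: argue by contradiction, extract a convergent subsequence via Bolzano--Weierstrass in the compact ball $\{\|\boldsymbol{\beta}\|\le\pi\}$, and split into the interior case (where identifiability forces $\boldsymbol{\beta}^{**}=\boldsymbol{\beta}^*$, contradicting $\boldsymbol{\beta}_{n_k}\notin\Theta(u_0)$) versus the boundary case (where $\gamma_{\boldsymbol{\beta}_{n_k}}$ collapses to the identity, so $f_{\boldsymbol{\beta}_{n_k}}\to g\neq f_{\boldsymbol{\beta}^*}$). The paper packages the interior identifiability step as a separate lemma rather than inlining it, and in the boundary case phrases the contradiction as ``$\gamma_{\boldsymbol{\beta}^*}$ would have to be the identity'' rather than your more direct ``$\mathbb{K}(\boldsymbol{\beta}_{n_k},\boldsymbol{\beta}^*)\to\tfrac{1}{2\sigma^2}\mathbb{E}[(g-f_{\boldsymbol{\beta}^*})^2]>0$'', but the substance is identical.
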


\begin{proof}
    Assume not. Thus, for every $\delta > 0$, then $\exists \boldsymbol{\beta} \notin \Theta(u_{0})$, $\mathbb{K}(\boldsymbol{\beta}, \boldsymbol{\beta^*}) \leq \delta $. With $\delta_{i} = \frac{1}{i}$, create a sequence $\boldsymbol{\beta}_{i}$ such that $\mathbb{K}(\boldsymbol{\beta}_{i}, \boldsymbol{\beta^*}) \leq \delta_{i}$. Since $0< \left\lVert \boldsymbol{\beta}_{i} \right\lVert < \pi$ for all $\boldsymbol{\beta}_{i}$, there exists a subsequence $\boldsymbol{\beta}_{i_{k}} $ that converges to a $\boldsymbol{\beta}$ where $0 \leq\left\lVert \boldsymbol{\beta} \right\lVert \leq \pi $ by the Bolzano-Weierstrass Theorem. Consider the two possible cases: \\ \\
\underline{Case 1}: $\boldsymbol{\beta}_{i_{k}} \rightarrow \boldsymbol{\beta}$, where $0 <\left\lVert \boldsymbol{\beta} \right\lVert < \pi $ \\
Now, since $\mathbb{K}(\boldsymbol{\beta}, \boldsymbol{\beta^*})$ is a continuous function of $\boldsymbol{\beta}$, $\mathbb{K}(\boldsymbol{\beta}_{i_{k}}, \boldsymbol{\beta^*}) \rightarrow 0$ implies $\mathbb{K}(\boldsymbol{\beta}, \boldsymbol{\beta^*}) = 0$. Thus, $\boldsymbol{\beta}_{i_{k}}$ must converge to the unique minimizer $\boldsymbol{\beta}^{*}$, which exists by Lemma~\ref{lem:unique_max}. However, this violates the condition that $\boldsymbol{\beta}_{i_{k}} \notin \Theta(u_{0})$. \\ \\
\underline{Case 2}: $\boldsymbol{\beta}_{i_{k}} \rightarrow \boldsymbol{\beta}$ where $\left\lVert \boldsymbol{\beta} \right\lVert = 0$ or  $\left\lVert \boldsymbol{\beta} \right\lVert = \pi $. \\
$\mathbb{K}(\boldsymbol{\beta}, \boldsymbol{\beta^*})$ remains a continuous function on the set $\{ \boldsymbol{\beta} : 0  \leq\left\lVert \boldsymbol{\beta} \right\lVert \leq \pi \}$. Thus, if $\boldsymbol{\beta}_{i_{k}} \rightarrow \boldsymbol{\beta}$, this implies that $\mathbb{K}(\boldsymbol{\beta}, \boldsymbol{\beta^*}) = 0$. However, if $\left\lVert  \boldsymbol{\beta}  \right\lVert = 0$ or $\left\lVert  \boldsymbol{\beta}  \right\lVert = \pi$, then the $\gamma$ function parametrized by $\boldsymbol{\beta}$ is the identity function. This is because $\sin(0) = \sin(\pi) = 0$ and $\cos(0) = \cos(\pi) = 1$, making the diffeomorphism simply $\gamma(t) = \cos^{2}(\left\lVert  \boldsymbol{\beta}  \right\lVert) t = t$. Thus, this implies that  the $\gamma$ function parametrized by $\boldsymbol{\beta^{*}}$ is also the identity function, which violates the assumption $g \neq f_{\boldsymbol{\beta}^*}$ (Assumption \ref{ass:CS}).  \\ \\
Because a contradiction has arisen from both cases, the lemma must be true. 
\end{proof}

\subsection{Additional Lemmas for Finite Sample Theory}

\noindent The next lemma establishes that identifiability of the model despite the non-concavity of the log-likelihood. 

\newtheorem{slemma}{Lemma}       % supplementary lemma environment
\renewcommand{\theslemma}{S.\arabic{slemma}}

\begin{slemma}
\label{lem:unique_max}
   If Assumptions~\ref{ass:CS}-\ref{ass:Identifiability} hold, then $\boldsymbol{\beta^*}$ is unique minimizer of $\mathbb{K}(\boldsymbol{\beta}, \boldsymbol{\beta^*})$.
\end{slemma}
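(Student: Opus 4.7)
Since $\mathbb{K}(\boldsymbol{\beta},\boldsymbol{\beta^*}) = (2\sigma^2)^{-1}\mathbb{E}[(f_{\boldsymbol{\beta}}(X)-f_{\boldsymbol{\beta^*}}(X))^2]$ is nonnegative and vanishes only when $f_{\boldsymbol{\beta}} = f_{\boldsymbol{\beta^*}}$ almost surely under $\mathbb{P}_X$, my plan is to show that any $\boldsymbol{\beta} \in \boldsymbol{\Theta}$ with $\mathbb{K}(\boldsymbol{\beta},\boldsymbol{\beta^*}) = 0$ must in fact satisfy $\boldsymbol{\beta} = \boldsymbol{\beta^*}$. The first step is to promote the almost-everywhere equality to a pointwise identity on $[0,1]$: by Assumption \ref{ass:CS} the maps $f_{\boldsymbol{\beta}}$ and $f_{\boldsymbol{\beta^*}}$ are continuous, and by Assumption \ref{ass:PI_Px} the measure $\mathbb{P}_X$ assigns positive mass to every subinterval, so the continuous nonnegative function $(f_{\boldsymbol{\beta}}-f_{\boldsymbol{\beta^*}})^2$ must vanish on a dense subset of $[0,1]$ and hence everywhere.

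Next I would exploit the composition structure $f_{\boldsymbol{\beta}} = g\circ\gamma_{\boldsymbol{\beta}}$ together with the shape constraints on $g$. Because $\gamma_{\boldsymbol{\beta}}$ and $\gamma_{\boldsymbol{\beta^*}}$ are strictly increasing diffeomorphisms with positive derivatives and $g \in \mathcal{F}_M$ has stationary points only at $b_1^*<\cdots<b_M^*$, the chain rule implies that the stationary points of each composition occur precisely at $\gamma_{\boldsymbol{\beta}}^{-1}(b_k^*)$ and $\gamma_{\boldsymbol{\beta^*}}^{-1}(b_k^*)$ respectively. Matching them in increasing order yields $x_k := \gamma_{\boldsymbol{\beta}}^{-1}(b_k^*) = \gamma_{\boldsymbol{\beta^*}}^{-1}(b_k^*)$ for $k=1,\ldots,M$, with the natural endpoints $x_0=0$ and $x_{M+1}=1$. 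On each subinterval $[x_k,x_{k+1}]$ both diffeomorphisms map bijectively onto $[b_k^*,b_{k+1}^*]$, and by Definition \ref{def:F_M} the restriction of $g$ to $[b_k^*,b_{k+1}^*]$ is a strict local extremum-free monotone arc, hence invertible. Inverting $g$ piecewise in the identity $g(\gamma_{\boldsymbol{\beta}}(x)) = g(\gamma_{\boldsymbol{\beta^*}}(x))$ yields $\gamma_{\boldsymbol{\beta}} \equiv \gamma_{\boldsymbol{\beta^*}}$ on all of $[0,1]$.

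The final step is to translate equality of diffeomorphisms back into equality of parameters using the chain of bijections $\boldsymbol{\beta} \leftrightarrow v_{\boldsymbol{\beta}} \leftrightarrow q_{\boldsymbol{\beta}} \leftrightarrow \gamma_{\boldsymbol{\beta}}$ recalled from \citet{dasgupta2020two}. The assignment $\gamma \mapsto q = \sqrt{\gamma'}$ is inverted by integration; the exponential map is bijective on $\{v : \|v\|<\pi\}$; and since $v_{\boldsymbol{\beta}} = \sum_{j=1}^p \beta_j \psi_j$ is an expansion in the orthonormal cosine basis, the coefficient vector $\boldsymbol{\beta}$ is recovered uniquely from $v_{\boldsymbol{\beta}}$ via inner products. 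Crucially, the constraint $0<\|\boldsymbol{\beta}\|<\pi$ in the definition of $\boldsymbol{\Theta}$ keeps us within the regime where each of these maps is one-to-one, so $\gamma_{\boldsymbol{\beta}} = \gamma_{\boldsymbol{\beta^*}}$ forces $\boldsymbol{\beta} = \boldsymbol{\beta^*}$ and completes the proof.

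The main obstacle I anticipate is the middle step: justifying that the stationary points of $f_{\boldsymbol{\beta}}$ are exactly the preimages $\gamma_{\boldsymbol{\beta}}^{-1}(b_k^*)$ and that matching them to those of $f_{\boldsymbol{\beta^*}}$ produces a clean piecewise decomposition. This hinges on $\gamma_{\boldsymbol{\beta}}' > 0$ uniformly on $[0,1]$ and on the strict monotonicity of $g$ between consecutive $b_k^*$'s, both of which are built into the model but need to be invoked explicitly. The remaining arguments are either direct (step one) or borrow established bijectivity results (step three), so most of the proof's care belongs to this geometric matching.
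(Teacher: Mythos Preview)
Your proposal is correct but takes a genuinely different route from the paper. The paper argues directly: starting from $\boldsymbol{\beta}\neq\boldsymbol{\beta^*}$, it invokes the injectivity of $\boldsymbol{\beta}\mapsto\gamma_{\boldsymbol{\beta}}$ to get $\gamma_{\boldsymbol{\beta}}\neq\gamma_{\boldsymbol{\beta^*}}$, applies Rolle's theorem to the difference $d_{\boldsymbol{\beta}}=\gamma_{\boldsymbol{\beta^*}}-\gamma_{\boldsymbol{\beta}}$ (which vanishes at both endpoints) to locate a subinterval on which the two diffeomorphisms are strictly ordered, and then uses that $g\in\mathcal{F}_M$ is strictly monotone away from finitely many points to produce a subinterval of positive $\mathbb{P}_X$-measure where $f_{\boldsymbol{\beta}}\neq f_{\boldsymbol{\beta^*}}$, giving $\mathbb{K}>0$. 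Your contrapositive argument instead promotes $\mathbb{K}=0$ to the global identity $f_{\boldsymbol{\beta}}\equiv f_{\boldsymbol{\beta^*}}$, matches their stationary-point sets to partition $[0,1]$, inverts $g$ on each monotone arc to recover $\gamma_{\boldsymbol{\beta}}\equiv\gamma_{\boldsymbol{\beta^*}}$, and only then appeals to the parameter bijection. Your approach is more structured and makes explicit, thematic use of the stationary-point geometry in Definition~\ref{def:F_M}; the paper's approach is a bit more economical because it only needs to exhibit one witness interval rather than carry out a full piecewise reconstruction. Both proofs ultimately lean on the same injectivity claim for $\boldsymbol{\beta}\mapsto\gamma_{\boldsymbol{\beta}}$ on $\boldsymbol{\Theta}$, which the paper also takes as given.
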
 

\begin{proof}
     Clearly, $\mathbb{K}(\boldsymbol{\beta^*}, \boldsymbol{\beta^*}) = 0$,  Now, consider for $\boldsymbol{\beta} \neq \boldsymbol{\beta^*}$, $d_{\boldsymbol{\beta}}(x) = \gamma_{\boldsymbol{\beta^*}}(x) - \gamma_{\boldsymbol{\beta}}(x) $. Now,
 for all $\boldsymbol{\beta} \in \boldsymbol{\Theta}$, $d_{\boldsymbol{\beta}}(0) = d_{\boldsymbol{\beta}}(1) = 0$ and  $d_{\boldsymbol{\beta}}(x)$ is always a continuous, differentiable function. As each $\boldsymbol{\beta}$ defines a unique diffeomorphism, by Rolle's Theorem, then there exists an interval $[a, b]$ such that $0 \leq a < b \leq 1$,  $d_{\boldsymbol{\beta}}(a) = 0$, $d_{\boldsymbol{\beta}}'(b) = 0,$ and $d_{\boldsymbol{\beta}}(x)$ is always positive or negative for all $x \in (a, b]$.  Thus, $\gamma_{\boldsymbol{\beta^*}}(a) = \gamma_{\boldsymbol{\beta}}(a)$ and either $\gamma_{\boldsymbol{\beta^*}}(x) > \gamma_{\boldsymbol{\beta}}(x) $ or $\gamma_{\boldsymbol{\beta^*}}(x) < \gamma_{\boldsymbol{\beta}}(x) $ for $x \in (a, b]$. 
 
 Due to $\boldsymbol{\gamma}_{\boldsymbol{\beta}}$ being a continuous bijection of $[0, 1]$ with a derivative greater than 0 for all $\boldsymbol{\beta} \in \boldsymbol{\Theta}$, $[\gamma_{\boldsymbol{\beta^*}}(a), \gamma_{\boldsymbol{\beta^*}}(b)]$ and $[\gamma_{\boldsymbol{\beta}}(a), \gamma_{\boldsymbol{\beta}}(b)]$ are intervals of length greater than 0. In addition, for $f_{\boldsymbol{\beta^*}} = g\circ \gamma_{\boldsymbol{\beta^*}}$, $g \in \mathcal{F}_{M}$, and thus, must be strictly increasing or decreasing in all of $[0, 1]$ except for a finite amount of points (Definition \ref{def:F_M}). Therefore, there exists points $c, d$ such that $a < c < d \leq b$ such that $g(\gamma_{\boldsymbol{\beta^*}}(x) ) > g(\gamma_{\boldsymbol{\beta}}(x))$ or $g(\gamma_{\boldsymbol{\beta^*}}(x) ) < g(\gamma_{\boldsymbol{\beta}}(x))$ for all $x \in [c, d]$. By the positivity of $\mathbb{P}_{X}$, $\mathbb{P}_{X}([c, d]) > 0$. Thus, for all $\boldsymbol{\beta} \neq \boldsymbol{\beta^*}$, $\mathbb{E}[(f_{\boldsymbol{\beta}}(x) - f_{\boldsymbol{\beta^*}}(x) )^{2}] > 0$, which implies $\mathbb{K}(\boldsymbol{\beta}, \boldsymbol{\beta^*}) > 0$ for all  $\boldsymbol{\beta} \neq \boldsymbol{\beta^*}$. This proves the lemma. 
\end{proof}
\noindent This final lemma establishes upper bounds on functions of $\boldsymbol{\beta}$ within $\Theta(u_{0})$.
\begin{slemma}
\label{lem:theta_u_upper}
    Suppose Assumptions~\ref{ass:CS}-\ref{ass:Identifiability} are satisfied. Consider sets $\Theta(u)$ where $u \leq u_{0}$. Let $f(\boldsymbol{\beta})$ be a positive, continuous and differentiable function where the gradient is bounded on $\Theta(u_{0})$. In addition, let $f(\boldsymbol{\beta^*}) = 0$. Then, the function $g(u) = \sup\limits_{\boldsymbol{\beta} \in \Theta(u)} f(\boldsymbol{\beta})$ can be bounded from above by $Cu$, where $C$ is some constant greater than 0. 
\end{slemma}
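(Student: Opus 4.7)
The plan is to exploit the fact that $f$ vanishes at $\boldsymbol{\beta^*}$ and has a uniformly bounded gradient on the closed set $\Theta(u_{0})$, which immediately yields a Lipschitz-type control on $f$ in a neighborhood of $\boldsymbol{\beta^*}$. The only additional ingredient needed is a translation between the Euclidean norm and the $\boldsymbol{v}_{0}$-weighted norm that defines the local sets $\Theta(u)$.

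First, I would fix an arbitrary $\boldsymbol{\beta} \in \Theta(u)$ and note that the segment $\{\boldsymbol{\beta^*} + t(\boldsymbol{\beta} - \boldsymbol{\beta^*}) : t \in [0,1]\}$ lies entirely inside $\Theta(u) \subseteq \Theta(u_{0})$ by convexity, since $\Theta(u_{0})$ is the preimage of a Euclidean ball under the linear map $\boldsymbol{\beta} \mapsto \boldsymbol{v}_{0}(\boldsymbol{\beta}-\boldsymbol{\beta^*})$. Using $f(\boldsymbol{\beta^*}) = 0$ and the fundamental theorem of calculus,
$$f(\boldsymbol{\beta}) \;=\; \int_{0}^{1} \nabla f\!\bigl(\boldsymbol{\beta^*} + t(\boldsymbol{\beta}-\boldsymbol{\beta^*})\bigr)^{T}(\boldsymbol{\beta}-\boldsymbol{\beta^*})\,dt.$$
Applying Cauchy--Schwarz together with the hypothesized uniform bound $M := \sup_{\boldsymbol{\beta}' \in \Theta(u_{0})} \|\nabla f(\boldsymbol{\beta}')\| < \infty$ yields $|f(\boldsymbol{\beta})| \le M\,\|\boldsymbol{\beta}-\boldsymbol{\beta^*}\|$.

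Next, I would convert the Euclidean norm to the weighted norm that controls $\Theta(u)$. Since Assumption~\ref{ass:Identifiability} guarantees $\boldsymbol{v}_{0}^{2}$ is positive definite with minimum eigenvalue $\lambda_{\min} > 0$, the square root $\boldsymbol{v}_{0}$ has minimum singular value $\sqrt{\lambda_{\min}}$, and therefore
$$\|\boldsymbol{\beta}-\boldsymbol{\beta^*}\| \;\le\; \lambda_{\min}^{-1/2}\,\|\boldsymbol{v}_{0}(\boldsymbol{\beta}-\boldsymbol{\beta^*})\| \;\le\; \lambda_{\min}^{-1/2}\,u$$
for every $\boldsymbol{\beta} \in \Theta(u)$. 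Combining the two displayed inequalities and using positivity of $f$ to drop the absolute value, I obtain $f(\boldsymbol{\beta}) \le M\lambda_{\min}^{-1/2}\, u$ uniformly over $\Theta(u)$. Taking the supremum gives $g(u) \le C\,u$ with $C := M\lambda_{\min}^{-1/2} > 0$, a constant independent of $u$.

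There is no genuine obstacle to this argument: the only points requiring brief justification are convexity of $\Theta(u_{0})$ (immediate, as noted above), finiteness of $M$ (explicitly assumed, and consistent with continuity of $\nabla f$ on the compact set $\Theta(u_{0})$), and the conversion of norms via $\lambda_{\min}$ (immediate from Assumption~\ref{ass:Identifiability}). The positivity hypothesis on $f$ is mild and is used only at the very last step to remove the absolute value, so the proof would in fact yield the same $Cu$ bound for $|f(\boldsymbol{\beta})|$ without that assumption.
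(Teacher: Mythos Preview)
Your proof is correct and follows essentially the same approach as the paper: both use the bounded gradient on $\Theta(u_{0})$ to obtain a Lipschitz bound $f(\boldsymbol{\beta}) \le L\|\boldsymbol{\beta}-\boldsymbol{\beta^*}\|$, then convert to the $\boldsymbol{v}_{0}$-weighted norm via $\lambda_{\min}^{-1/2}$ to get $C = L\lambda_{\min}^{-1/2}$. Your version is slightly more explicit in writing out the line-integral representation and noting the convexity of $\Theta(u_{0})$ needed to keep the segment inside the set, but the argument is otherwise identical.
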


\begin{proof}
Because the gradient is bounded on $\Theta(u_{0})$, $f(\boldsymbol{\beta})$ is Lipschitz on $\Theta(u_{0})$ with Lipschitz constant $L$. Now, if $\boldsymbol{\beta} \in \Theta(u)$, and because $ f(\boldsymbol{\beta}) \geq f(\boldsymbol{\beta^*}) $ \begin{align*}
    f(\boldsymbol{\beta}) - f(\boldsymbol{\beta^*}) \leq L\left\lVert \boldsymbol{\beta} - \boldsymbol{\beta}^* \right\lVert\leq \frac{L}{\sqrt{\lambda_{\min}}}\left\lVert \boldsymbol{v}_{0}(\boldsymbol{\beta} - \boldsymbol{\beta}^*) \right\lVert = \frac{L}{\sqrt{\lambda_{\min}}}u.
\end{align*}
As this upper bound applies for all $\boldsymbol{\beta} \in \Theta(u)$, \begin{align*}
    g(u) = \sup\limits_{\boldsymbol{\beta} \in \Theta(u)} f(\boldsymbol{\beta}) = \sup\limits_{\boldsymbol{\beta} \in \Theta(u)} \left( f(\boldsymbol{\beta}) - f(\boldsymbol{\beta}^*) \right) + f(\boldsymbol{\beta}^*) \leq \frac{L}{\sqrt{\lambda_{\min}}}u + 0.
\end{align*}
This completes the proof. 
\end{proof}

\subsection{Proving the Spokoiny i.i.d Conditions}

\newtheorem{stheorem}{Theorem}
\renewcommand{\thestheorem}{S.\arabic{stheorem}}

The conditions are listed in Section 5.1 of \cite{Spokoiny2012_ParametricEstimation}. We prove all of them in the following theorem:
\begin{stheorem}
    The diffeomorphism regression model satisfies the Spokoiny i.i.d. conditions.
\end{stheorem}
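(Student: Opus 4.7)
The plan is to verify, one family at a time, each of the i.i.d.\ regularity conditions listed in Section~5.1 of \cite{Spokoiny2012_ParametricEstimation}, assembling the lemmas already established rather than re-deriving any bounds. The conditions organize naturally into three groups: (i) a sub-exponential moment bound for the score $\nabla\zeta_i(\boldsymbol{\beta})$; (ii) Taylor-expansion control of the log-likelihood around $\boldsymbol{\beta^*}$, producing the scaling factors $\rho(u)$ and $\delta(u)$; and (iii) identifiability together with a quadratic lower bound on the Kullback--Leibler divergence of the form $\mathbb{K}(\boldsymbol{\beta},\boldsymbol{\beta^*})\ge b\,\|\boldsymbol{v}_0(\boldsymbol{\beta}-\boldsymbol{\beta^*})\|^{2}$.

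Group (i) is essentially closed by Lemma~\ref{lem:subexp}, which gives $\mathbb{E}\exp(\lambda\boldsymbol{\gamma}^{T}\nabla\zeta_i(\boldsymbol{\beta}))\le\exp(v_0^{2}\lambda^{2}\|\boldsymbol{v}_0\boldsymbol{\gamma}\|^{2}/2)$. This is the sub-Gaussian (hence sub-exponential) bound Spokoiny requires, with $v_0$ absorbing the $Cp\ln p$ uniform gradient bound of Lemma~\ref{lem:gradient_lem}. Centering of $\nabla\zeta_i$ holds because $Y_i\mid X_i\sim N(f_{\boldsymbol{\beta^*}}(X_i),\sigma^{2})$, and i.i.d.\ summation propagates the bound to $\sum_i\nabla\zeta_i(\boldsymbol{\beta^*})$ with the usual $\sqrt n$ scaling that enters Theorems~\ref{thm:Consistency} and~\ref{thm:Normality}.

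Group (iii) is handled by splitting at the radius $u_0$. Inside $\Theta(u_0)$, a first-order Taylor expansion of $\mathbb{K}(\boldsymbol{\beta},\boldsymbol{\beta^*})$ with integral remainder, combined with $H(\boldsymbol{\beta^*})=\boldsymbol{v}_0^{2}$ positive definite (Assumption~\ref{ass:Identifiability}) and continuity of the Hessian, yields $\mathbb{K}(\boldsymbol{\beta},\boldsymbol{\beta^*})\ge\frac{\lambda_{\min}(u_0)}{2\lambda_{\max}}\|\boldsymbol{v}_0(\boldsymbol{\beta}-\boldsymbol{\beta^*})\|^{2}$. Outside $\Theta(u_0)$, Lemma~\ref{lem:kappa} gives $\mathbb{K}\ge\kappa(u_0)>0$, and combining with the universal bound $\|\boldsymbol{v}_0(\boldsymbol{\beta}-\boldsymbol{\beta^*})\|^{2}\le 4\pi^{2}\lambda_{\max}$ (which relies critically on $\|\boldsymbol{\beta}\|<\pi$) produces the complementary minorant with constant $b$. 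Uniqueness of the minimizer, required implicitly in the framework, is Lemma~\ref{lem:unique_max}. For group (ii), Lemma~\ref{lem:gradient_lem} supplies thrice continuous differentiability and uniform bounds on the first three $\boldsymbol{\beta}$-derivatives of $f_{\boldsymbol{\beta}}$. Combined with $\ell_i(\boldsymbol{\beta})=-\tfrac{1}{2\sigma^{2}}(Y_i-f_{\boldsymbol{\beta}}(X_i))^{2}+\text{const}$, this gives uniform control of $\nabla^{2}\ell_i$ and $\nabla^{3}\ell_i$ on the compact set $\Theta(u_0)$. Taylor expanding both $\nabla^{2}\mathbb{E}\ell_i(\boldsymbol{\beta})$ and $\nabla^{2}\zeta_i(\boldsymbol{\beta})$ about $\boldsymbol{\beta^*}$ and invoking Lemma~\ref{lem:theta_u_upper} (Lipschitz-on-$\Theta(u_0)$ plus vanishing at $\boldsymbol{\beta^*}$ implies an $O(u)$ bound on $\Theta(u)$) then delivers scaling factors $\rho(u),\delta(u)$ of order $u$, as required.

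The main obstacle lies in group (ii). The higher $\boldsymbol{\beta}$-derivatives of $\gamma_{\boldsymbol{\beta}}$ pass through the exponential map $\exp_1(v_{\boldsymbol{\beta}})$ and carry factors of $\|\boldsymbol{\beta}\|$ in the denominator, so their uniform control on $\Theta(u_0)$ is sensitive to how close $u_0$ pushes $\boldsymbol{\beta}$ to the boundary of $\{\|\boldsymbol{\beta}\|<\pi\}$. Making the $\rho(u)$ and $\delta(u)$ constants sufficiently explicit that the growing-parameter hypothesis of Theorem~\ref{thm:Growing_Parameter} can actually be certified is the delicate part; I would defer the detailed bookkeeping to the Supplementary Materials and rely on Lemma~\ref{lem:theta_u_upper} as the abstraction that guarantees the $O(u)$ scaling whenever the ingredient functions are continuous, Lipschitz on $\Theta(u_0)$, and vanish at $\boldsymbol{\beta^*}$.
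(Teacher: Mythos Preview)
Your proposal follows essentially the same route as the paper: verify the six Spokoiny i.i.d.\ conditions $(ed_0)$, $(ed_1)$, $(i)$, $(\ell_0)$, $(eu)$, $(\ell_u)$ by feeding in Lemmas~\ref{lem:gradient_lem}, \ref{lem:subexp}, \ref{lem:kappa}, and~\ref{lem:theta_u_upper}, and your split into groups (i)--(iii) lines up with the paper's handling of $(ed_0)/(eu)$, $(\ell_0)/(ed_1)$, and $(i)/(\ell_u)$ respectively.

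One point to sharpen in your group~(ii): the stochastic condition $(ed_1)$, which produces $\rho(u)$, is a sub-exponential \emph{moment} bound on $\nabla\zeta_i(\boldsymbol{\beta})-\nabla\zeta_i(\boldsymbol{\beta^*})$, and a Taylor expansion of $\nabla^{2}\zeta_i$ alone does not deliver that (the quantity is random, not deterministically bounded). The paper's mechanism is to note that, conditionally on $x_i$, this gradient difference equals $\sigma^{-2}(y_i-f_{\boldsymbol{\beta^*}}(x_i))\bigl(\nabla f_{\boldsymbol{\beta}}(x_i)-\nabla f_{\boldsymbol{\beta^*}}(x_i)\bigr)$, which is Gaussian with variance controlled by $\|\nabla f_{\boldsymbol{\beta}}-\nabla f_{\boldsymbol{\beta^*}}\|$; Lemma~\ref{lem:theta_u_upper} is then applied to the deterministic map $h(\boldsymbol{\beta})=\|\nabla f_{\boldsymbol{\beta}}-\nabla f_{\boldsymbol{\beta^*}}\|$ to get $h\le C_1 u$ on $\Theta(u)$, and the conditional-Gaussian MGF yields $\omega^*=C_1/(Cp\ln p)$ and $\rho(u)=3v_0\omega^* u$. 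Once you insert this conditional-Gaussian step, your outline matches the paper's proof. Your concern about denominators of $\|\boldsymbol{\beta}\|$ near the boundary is already dispatched by Assumption~\ref{ass:CS} ($0<\|\boldsymbol{\beta^*}\|<\pi$) together with $\Theta(u_0)$ being a compact neighborhood in the interior; the paper does not need further bookkeeping there.
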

\begin{proof} 

    $\boldsymbol{(ed_{0})}$: Let $v_{0}^{2} =\max \left( 1, \frac{(Cp\ln p)^2}{\sigma^{2}\lambda_{\min}}\right)$. By Lemma~\ref{lem:subexp}, if we let $g_{1} > 0$, then for all $|\lambda| \leq g_{1}$\begin{align*}
       \mathbb{E}\exp\left(\lambda \frac{\boldsymbol{\gamma}^{T} \nabla \zeta_{i}(\boldsymbol{\beta})}{\left\lVert  \boldsymbol{v}_{0}\boldsymbol{\gamma}  \right\lVert}\right) \leq \exp\left(\frac{\lambda^{2}}{2} \frac{(Cp\ln p)^2}{\sigma^{2}\lambda_{\min}} \right) = \exp\left(\frac{v_{0}^{2}\lambda^{2}}{2}\right), \ \ \forall \boldsymbol{\beta} \in \Theta.
    \end{align*}
$\boldsymbol{(ed_{0})}$ follows because $\boldsymbol{\beta^*} \in \Theta$. \\ 

\noindent $\boldsymbol{(ed_{1})}$: Similarly to Lemma \ref{lem:subexp}, \begin{align*}
    \nabla \zeta_{i}(\boldsymbol{\beta}) - \nabla \zeta_{i}(\boldsymbol{\beta^*}) \  | \ x_{i} =  \sigma^{-2}\left(y_{i} - f_{\boldsymbol{\beta^*}}(x_{i})\right)\left(\nabla_{\boldsymbol{\beta}} f_{\boldsymbol{\beta}} (x_{i})- \nabla_{\boldsymbol{\beta}} f_{\boldsymbol{\beta}^*}(x_{i})\right).
\end{align*} 
Thus, for any $\boldsymbol{\gamma} \in \mathbb{R}^{p}$ \begin{align*}
    \boldsymbol{\gamma}^{T}\nabla \zeta_{i}(\boldsymbol{\beta}) \ | \ x_{i} \sim N\left(0, \sigma^{-2}\left(\boldsymbol{\gamma}^T\left[\nabla_{\boldsymbol{\beta}} f(\boldsymbol{\beta}) - \nabla_{\boldsymbol{\beta}} f(\boldsymbol{\beta^*})\right]\right)^2\right).
\end{align*}
Now, consider the function $h(\boldsymbol{\beta}) = \left\lVert \nabla_{\boldsymbol{\beta}} f(\boldsymbol{\beta}) - \nabla_{\boldsymbol{\beta^*}} f(\boldsymbol{\beta^*})\right\lVert$. By Lemma \ref{lem:gradient_lem}, $f(\boldsymbol{\beta})$ has continuous second partial derivatives on $\Theta(u_{0})$. Thus, by the extreme value theorem, $h(\boldsymbol{\beta})$ is a continuous differentiable function where the gradient is bounded. In addition, clearly $h(\boldsymbol{\beta^*}) = 0$. Thus, by Lemma \ref{lem:theta_u_upper}, 
\begin{align}
\label{eq:C1}
    \sup_{\boldsymbol{\beta} \in \Theta(u)} \left\lVert \nabla_{\boldsymbol{\beta}} f(\boldsymbol{\beta}) - \nabla_{\boldsymbol{\beta^*}} f(\boldsymbol{\beta^*})\right\lVert \leq C_{1}u, \quad\text{ for some } C_{1} > 0.
\end{align} 
Now, following similar logic to the proof of Lemma \ref{lem:subexp}, let $\boldsymbol{\beta} \in \Theta(u)$. Then,  
\begin{align*}
     \mathbb{E}\exp\left(\lambda \boldsymbol{\gamma}^T(\nabla \zeta_{i}(\boldsymbol{\beta}) - \nabla \zeta_{i}(\boldsymbol{\beta^*})\right) = & \mathbb{E}\left( \mathbb{E} \left[ \exp\left(\lambda \boldsymbol{\gamma}^T(\nabla \zeta_{i}(\boldsymbol{\beta}) - \nabla \zeta_{i}(\boldsymbol{\beta^*})\right) \ | \ x_{i} \right] \right),
\end{align*}
by the law of iterated expectations. Because the inner expectation is the moment generating function of a normal random variable, 
\begin{align*}
    \mathbb{E}\left( \mathbb{E} \left[ \exp\left(\lambda \boldsymbol{\gamma}^T(\nabla \zeta_{i}(\boldsymbol{\beta}) - \nabla \zeta_{i}(\boldsymbol{\beta^*})\right) \ | \ x_{i} \right] \right) = \mathbb{E} \exp\left( \frac{\lambda^2}{2\sigma^2} \left(\boldsymbol{\gamma}^T\left[\nabla_{\boldsymbol{\beta}} f(\boldsymbol{\beta}) - \nabla_{\boldsymbol{\beta^*}} f(\boldsymbol{\beta^*})\right]\right)^2 \right).
\end{align*}
By the Cauchy-Schwarz Inequality, \begin{align*}
    \mathbb{E} \exp\left( \frac{\lambda^2}{2\sigma^2} \left(\boldsymbol{\gamma}^T\left[\nabla_{\boldsymbol{\beta}} f(\boldsymbol{\beta}) - \nabla_{\boldsymbol{\beta^*}} f(\boldsymbol{\beta^*})\right]\right)^2 \right) \leq \mathbb{E}  \exp\left( \frac{\lambda^2}{2\sigma^2} \left(\left\lVert \boldsymbol{\gamma}  \right\lVert^{2} \left\lVert\nabla_{\boldsymbol{\beta}} f(\boldsymbol{\beta}) - \nabla_{\boldsymbol{\beta^*}} f(\boldsymbol{\beta^*})\right\lVert^{2}\right) \right).
\end{align*}
Next, by result (\ref{eq:C1}), 
\begin{align*}
   \mathbb{E}  \exp\left( \frac{\lambda^2}{2\sigma^2} \left(\left\lVert \boldsymbol{\gamma}  \right\lVert^{2} \left\lVert\nabla_{\boldsymbol{\beta}} f(\boldsymbol{\beta}) - \nabla_{\boldsymbol{\beta^*}} f(\boldsymbol{\beta^*}) \right\lVert^{2}\right) \right)\leq  \exp\left( \frac{\lambda^2}{2\sigma^2} \left( \left\lVert \boldsymbol{\gamma}  \right\lVert^{2} C_{1}^{2}u^2 \right)\right).
\end{align*}
Finally, using the eigenvalue bound from the proof of Lemma \ref{lem:subexp}, we can conclude \begin{align*}
    \exp\left( \frac{\lambda^2}{2\sigma^2} \left( \left\lVert \boldsymbol{\gamma}  \right\lVert^{2} C_{1}^{2}u^2 \right)\right) \leq  &\exp\left( \frac{\lambda^2}{2 }\left( \frac{(Cp\ln p)^{2}}{\sigma^2 \lambda_{\min}} \right) \left( \frac{C_{1}^{2}u^{2}}{(Cp\ln p)^{2}} \right) \left\lVert\boldsymbol{v}_{0} \boldsymbol{\gamma}  \right\lVert^2  \right) \\
    =& \exp\left( \frac{v_{0}^{2} \lambda^{2}}{2} \left( \frac{C_{1}^{2}u^{2}}{(Cp\ln p)^{2}} \right) \left\lVert\boldsymbol{v}_{0} \boldsymbol{\gamma}  \right\lVert ^2\right).
\end{align*}
\noindent With $\omega^* = \frac{C_{1}}{Cp \ln p} > 0$ and $g_{1} > 0$ such that $|\lambda| \leq g_{1}$, \begin{align*}
    \mathbb{E}\exp\left(\lambda \frac{\boldsymbol{\gamma}^T(\nabla \zeta_{i}(\boldsymbol{\beta}) - \nabla \zeta_{i}(\boldsymbol{\beta^*}))}{\omega^* u \left\lVert  \boldsymbol{v}_{0}\boldsymbol{\gamma}  \right\lVert }\right) \leq \exp\left( \frac{v_{0}^{2} \lambda^{2}}{2}\right),
\end{align*}
which shows $\boldsymbol{(ed_{1})}$.  \\

\noindent $\boldsymbol{(i)}$: \cite{Spokoiny2012_ParametricEstimation} defines a matrix $\boldsymbol{f}_{0}$ as the Hessian of the the KL-Divergence at $\boldsymbol{\beta^*}$, $H(\boldsymbol{\beta^*})$. As this equals $\boldsymbol{v}_{0}^{2}$, this condition is satisfied. \\ 

\noindent $\boldsymbol{(\ell_{0})}$: Consider $\boldsymbol{\beta} \in \Theta(u_{0})$ for $u \leq u_{0}$. Performing a Taylor expansion on $2\mathbb{K}(\boldsymbol{\beta}, \boldsymbol{\beta^*})$ centered around $\boldsymbol{\beta^*}$, \begin{align*}
2\mathbb{K}(\boldsymbol{\beta}, \boldsymbol{\beta^*}) = (\boldsymbol{\beta} - \boldsymbol{\beta^*})^{T}\boldsymbol{v}_{0}^{2}(\boldsymbol{\beta} - \boldsymbol{\beta^*}) + R(\boldsymbol{\beta}, \boldsymbol{\beta^*}).
\end{align*}
By Lemma~\ref{lem:gradient_lem}, $\mathbb{K}(\boldsymbol{\beta}, \boldsymbol{\beta^*})$ is a thrice continuously differentiable function for $\boldsymbol{\beta} \in \Theta(u_{0})$. Thus, because $|R(\boldsymbol{\beta}, \boldsymbol{\beta}^*)|$ is $o(\left\lVert\boldsymbol{\beta} - \boldsymbol{\beta}^*\right\lVert^2)$,  \begin{align*}
    \frac{|R(\boldsymbol{\beta}, \boldsymbol{\beta}^*)|}{\left\lVert \boldsymbol{v}_{0}(\boldsymbol{\beta}  - \boldsymbol{\beta}^*)\right\lVert^{2}} \leq \frac{|R(\boldsymbol{\beta}, \boldsymbol{\beta}^*)|}{\lambda_{\min}\left\lVert(\boldsymbol{\beta} - \boldsymbol{ \beta}^*)\right\lVert^{2}} \rightarrow 0 \quad \text{as } \boldsymbol{\beta} \rightarrow \boldsymbol{\beta}^*.
\end{align*}
In addition, by properties of Taylor series remainders, $|R(\boldsymbol{\beta}, \boldsymbol{\beta}^*)|$ is $O(\left\lVert\boldsymbol{\beta} - \boldsymbol{\beta}^*\right\lVert^3)$ and for any closed set that contains $\boldsymbol{\beta}^*$, \begin{align*}
    |R(\boldsymbol{\beta}, \boldsymbol{\beta}^*)| \leq \delta\left\lVert \boldsymbol{\beta} - \boldsymbol{\beta}^*\right\lVert^3 \leq \delta^* \left\lVert\boldsymbol{v}_{0}(\boldsymbol{\beta} - \boldsymbol{\beta}^*)\right\lVert^3,\quad \text{for some } \delta^* > 0.
\end{align*}
Thus, $\frac{ |R(\boldsymbol{\beta}, \boldsymbol{\beta}^*)|}{\left\lVert\boldsymbol{v}_{0}(\boldsymbol{\beta} - \boldsymbol{\beta}^*)\right\lVert^3} \leq \delta^*,$ for $\boldsymbol{\beta} \neq \boldsymbol{\beta}^*$ and in this closed set.  As $\Theta(u_{0})$ is a closed set that contains $\boldsymbol{\beta}^*$ and $\frac{|R(\boldsymbol{\beta}, \boldsymbol{\beta}^*)|}{\left\lVert\boldsymbol{v}_{0}(\boldsymbol{\beta}  -  \boldsymbol{\beta}^*)\right\lVert^{2}} \rightarrow 0 \text{ as } \boldsymbol{\beta} \rightarrow \boldsymbol{\beta}^*$, we can conclude
\begin{align*}
    \left| \frac{2\mathbb{K}(\boldsymbol{\boldsymbol{\beta}, \boldsymbol{\beta}}^*)}{\left\lVert \boldsymbol{v}_{0}(\boldsymbol{\beta} - \boldsymbol{\beta}^*) \right\lVert^{2}} - 1 \right| = \left| \frac{R(\boldsymbol{\boldsymbol{\beta}, \boldsymbol{\beta}}^*)}{\left\lVert \boldsymbol{v}_{0}(\boldsymbol{\beta} - \boldsymbol{\beta}^*) \right\lVert^{2}}  \right| = \frac{|R(\boldsymbol{\beta}, \boldsymbol{\beta}^*)|}{\left\lVert \boldsymbol{v}_{0}(\boldsymbol{\beta} - \boldsymbol{\beta}^*) \right\lVert^{3}} \left\lVert \boldsymbol{v}_{0}(\boldsymbol{\beta} - \boldsymbol{\beta}^*) \right\lVert \leq \delta^* u.
\end{align*}
This shows $\boldsymbol{(\ell_{0})}$. \\ 

\noindent $\boldsymbol{(eu)}$: Let $g_{1}(u) > 0 $ for all $u > 0$ and $|\lambda| \leq g_{1}(u)$. This condition will follow directly from Lemma~\ref{lem:subexp}, as Lemma~\ref{lem:subexp} applies to all $\boldsymbol{\beta} \in \boldsymbol{\Theta}$. \\

\noindent $\boldsymbol{(\ell_{u})}$: First, let us consider $\boldsymbol{\beta} \notin \Theta(u_{0})$.
By Lemma~\ref{lem:kappa}, there exists a positive real number $\kappa(u_{0}) > 0$ such that if $\boldsymbol{\beta} \notin \Theta(u_{0}) $ then $\mathbb{K}(\boldsymbol{\beta}, \boldsymbol{\beta^*})  \geq \kappa(u_{0})$. Now, with $\lambda_{\max}$ being the dominant eigenvalue of $\boldsymbol{v}^{2}_{0}$, \begin{align*}
u^{2} = \left\lVert \boldsymbol{v}_{0}(\boldsymbol{\beta} - \boldsymbol{\beta}^*) \right\lVert^{2} \leq \lambda_{\max}\left\lVert \boldsymbol{\beta} - \boldsymbol{\beta}^* \right\lVert^{2} \leq 4\lambda_{\max}\pi^{2}.
\end{align*}
 Thus, for $\boldsymbol{\beta} \notin \Theta(u_{0})$,  \begin{align*}
    \frac{\kappa(u_{0})}{4\lambda_{\max}\pi^{2}}u^{2} \leq \kappa(u_{0}) \leq \mathbb{K}(\boldsymbol{\beta}, \boldsymbol{\beta^*}).
\end{align*}
Now, let us consider $\boldsymbol{\beta} \in \Theta(u_{0})$. Performing a first order Taylor expansion of $\mathbb{K}(\boldsymbol{\beta}, \boldsymbol{\beta^*})$, we get \begin{align*}
    \mathbb{K}(\boldsymbol{\beta}, \boldsymbol{\beta^*}) = \int_{0}^{1}(1 - t)h^TH(\boldsymbol{\beta}^* + th)h \ dt,
\end{align*}
where $h = \boldsymbol{\beta} - \boldsymbol{\beta^*}$ and $H(\boldsymbol{\beta}^* + th)$ is the Hessian at point $\boldsymbol{\beta}^* + th$. For $\boldsymbol{\beta} \in \Theta(u_{0})$, $H(\boldsymbol{\beta})$ is positive definite, as $\mathbb{K}(\boldsymbol{\beta}, \boldsymbol{\beta}^*)$ is strictly convex on all points in $\Theta(u_{0})$. Thus, all the minimum eigenvalues of Hessians in  $\Theta(u_{0})$ are positive. Now, define \begin{align*}
    \lambda_{\min}(u_{0}) = \inf_{\boldsymbol{\beta} \in \Theta(u_{0})} \lambda_{\min}(H(\boldsymbol{\beta})),
\end{align*}
which is clearly greater than or equal to 0. Now, by Lemma~\ref{lem:gradient_lem}, all the second derivatives within the Hessian are continuous, making the mapping $\lambda_{\min}(H(\boldsymbol{\beta}))$ continuous. In addition, $\Theta(u_{0})$ is a closed set. Thus, the argmin of $\lambda_{\min}(H(\boldsymbol{\beta}))$ must be located in $\Theta(u_{0})$, making $\lambda_{\min}(u_{0}) > 0$. Thus, for $\boldsymbol{\beta} \in \Theta(u_{0})$, \begin{align*}
    \mathbb{K}(\boldsymbol{\beta}, \boldsymbol{\beta^*}) =  \int_{0}^{1}(1 - t)h^TH(\boldsymbol{\beta}^* + th)h \ dt 
    \geq &  \lambda_{\min}(u_{0}) \int_{0}^{1}(1 - t)h^Th \ dt \\
    \geq  & \frac{\lambda_{\min}(u_{0})}{2} \left\lVert \boldsymbol{\beta} - \boldsymbol{\beta}^* \right\lVert^2 \\
    \geq  & \frac{\lambda_{\min}(u_{0})}{2 \lambda_{\max}}\left\lVert \boldsymbol{v}_{0}(\boldsymbol{\beta} - \boldsymbol{\beta}^*) \right\lVert^{2}.
\end{align*}
Therefore, with $b = \min (\frac{\lambda_{\min}(u_{0})}{2 \lambda_{\max}}, \frac{\kappa(u_{0})}{4\lambda_{\max}\pi^{2}}) $, we have shown \begin{align*}
    \mathbb{K}(\boldsymbol{\beta}, \boldsymbol{\beta}^*) \geq b\left\lVert \boldsymbol{v}_{0}(\boldsymbol{\beta} - \boldsymbol{\beta}^*) \right\lVert^2,
\end{align*}
which proves $\boldsymbol{(\ell_{u})}$.  All conditions proven. 
\end{proof}

With all the conditions proven, the large deviation bound of the maximum likelihood estimator $\boldsymbol{\hat{\beta}}$ immediately follows. 

\subsection{Proofs of Theorems~\ref{thm:Consistency}, \ref{thm:Normality} and~\ref{thm:Growing_Parameter} }
We now provide proofs of each of the main theorems from the main text. 
\setcounter{theorem}{1}
 \begin{theorem}
  Suppose Assumptions~\ref{ass:CS}-\ref{ass:Identifiability} hold.  Let $u \leq u_{0}$, $p \geq 2$ be the parameter dimension, $v_{0}$ be a fixed value dependent on $p$ and $0 < \alpha < 1$. Then if 
      $  n^{1/2}ub \geq 6v_{0}\sqrt{-\ln(\alpha) + 2p}$, \begin{equation}
          \mathbb{P}\left(\boldsymbol{\hat{\beta}} \notin \Theta(u)\right) \leq \alpha.
      \end{equation}
\end{theorem}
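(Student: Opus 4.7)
My plan is to obtain this large-deviation bound as a direct corollary of the Spokoiny (2012) MLE concentration theorem, whose hypotheses have already been verified for our model in the preceding theorem (conditions $(ed_0), (ed_1), (i), (\ell_0), (eu), (\ell_u)$). The two inputs that drive the bound are the lower quadratic bound $\mathbb{K}(\boldsymbol{\beta}, \boldsymbol{\beta^*}) \geq b\,\|\boldsymbol{v}_0(\boldsymbol{\beta}-\boldsymbol{\beta^*})\|^2$ from condition $(\ell_u)$ and the sub-Gaussian gradient control with constant $v_0$ from Lemma~\ref{lem:subexp}. No extra probabilistic input beyond what has already been assembled is required.

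\textbf{Key steps.} First I would decompose the log-likelihood as
\begin{align*}
L(\boldsymbol{\beta}) - L(\boldsymbol{\beta^*}) \;=\; -n\,\mathbb{K}(\boldsymbol{\beta}, \boldsymbol{\beta^*}) \;+\; \bigl(\zeta(\boldsymbol{\beta}) - \zeta(\boldsymbol{\beta^*})\bigr),
\end{align*}
where $\zeta(\boldsymbol{\beta}) = L(\boldsymbol{\beta}) - \mathbb{E}L(\boldsymbol{\beta})$ is the stochastic component. Since $\boldsymbol{\hat{\beta}}$ maximizes $L$, the event $\{\boldsymbol{\hat{\beta}} \notin \Theta(u)\}$ implies $\sup_{\boldsymbol{\beta} \notin \Theta(u)} [L(\boldsymbol{\beta}) - L(\boldsymbol{\beta^*})] \geq 0$, which, combined with $(\ell_u)$, forces
\begin{align*}
\sup_{\boldsymbol{\beta} \notin \Theta(u)} \bigl[\zeta(\boldsymbol{\beta}) - \zeta(\boldsymbol{\beta^*})\bigr] \;\geq\; n\,b\,\|\boldsymbol{v}_0(\boldsymbol{\beta}-\boldsymbol{\beta^*})\|^2 \;\geq\; nbu\cdot\|\boldsymbol{v}_0(\boldsymbol{\beta}-\boldsymbol{\beta^*})\|.
\end{align*}
Next I would control this supremum using the sub-Gaussian bound on $\nabla \zeta_i$ together with a standard chaining/covering argument over the parameter ball $\boldsymbol{\Theta}$. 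The $\varepsilon$-net covering number of a $p$-dimensional ball contributes the $\sqrt{2p}$ factor, while the Cram\'er-type tail inflation contributes the $\sqrt{-\ln \alpha}$ factor, producing a uniform deviation bound of order $v_0 \sqrt{n\,\|\boldsymbol{v}_0(\boldsymbol{\beta}-\boldsymbol{\beta^*})\|^2\,(-\ln \alpha + 2p)}$ with the prefactor $6$ emerging from Spokoiny's specific peeling constants. Setting the deterministic drift $nbu\|\boldsymbol{v}_0(\boldsymbol{\beta}-\boldsymbol{\beta^*})\|$ against this stochastic upper bound and simplifying yields the stated threshold $n^{1/2}ub \geq 6 v_0 \sqrt{-\ln\alpha + 2p}$, which guarantees the event $\{\boldsymbol{\hat{\beta}} \notin \Theta(u)\}$ has probability at most $\alpha$.

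\textbf{Main obstacle.} The conceptual reduction is straightforward once the Spokoiny conditions are verified, so the only real work is tracking the exact constants through the chaining bound to recover the sharp form $6 v_0 \sqrt{-\ln \alpha + 2p}$ rather than some larger universal constant. In particular one must relate the covering numbers of the local sets $\Theta(u)$ under the $\boldsymbol{v}_0$-induced norm to those of a Euclidean ball, and confirm that the sub-Gaussian parameter $v_0$ of Lemma~\ref{lem:subexp} is exactly the one appearing in the exponent of the chaining bound. Since this bookkeeping is carried out in full generality in Theorem~3.1 of \cite{Spokoiny2012_ParametricEstimation}, the proof reduces to citing that result with the explicit $b$ from Section~\ref{sec:Aux_Res} and $v_0$ from Lemma~\ref{lem:subexp} substituted in.
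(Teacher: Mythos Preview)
Your proposal is correct and follows essentially the same route as the paper: both verify Spokoiny's i.i.d.\ conditions and then invoke his large-deviation theorem for the MLE. The paper's proof is even terser than your sketch---it simply cites Theorem~5.2 (not~3.1) of \cite{Spokoiny2012_ParametricEstimation} after observing that, because Lemma~\ref{lem:subexp} holds for all $\lambda$, the free parameter $g_{1}(u)$ in condition $(eu)$ may be taken large enough that the auxiliary constraint $1+\sqrt{-\ln(\alpha)+2p}\le \tfrac{3v_0^2 g}{b}n^{1/2}$ in Spokoiny's statement is automatically satisfied.
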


\begin{proof}
    From Lemma 5.1 of \cite{Spokoiny2012_ParametricEstimation}, we define
    $g = g_{1}(u)\sqrt{n}$, where $g_{1}(u)$ is defined in condition $\boldsymbol{(eu)}$. Now, $g_{1}(u)$ can be any positive number for all $u$, and let $g_{1}(u)$ be large enough such that  $1 + \sqrt{-\ln(\alpha) + 2p} \leq \frac{3v_{0}^{2}g(u)}{b}n^{1/2}$. Then the result immediately follows by Theorem 5.2 of \cite{Spokoiny2012_ParametricEstimation}.
\end{proof}
Next, define the following quantities, where $\omega^*$ and $\delta^*$ are the constants derived from the proof of condition $\boldsymbol{(ed_{0})}$:  \begin{align}
    \rho(u) = Ru = 3v_{0}\omega^* u \quad \text{and} \quad \delta(u) = Du=  \delta^* u.
\end{align}  We choose $u \leq u_{0}$ such that $\rho(u) + \delta(u) < 1$, which exists because $R$ and $D$ are positive constants. We also define the following quantities, which are scaled versions of $\boldsymbol{v}_{0}^{2}$ and the normalized score:  \begin{align}
    \boldsymbol{f}_{\boldsymbol{\epsilon}} & =  (1 - \rho(u) - \delta(u))\boldsymbol{v}_{0}^{2} & \qquad &\boldsymbol{f}_{\boldsymbol{\underline{\epsilon}}} =  (1 + \rho(u) + \delta(u))\boldsymbol{v}_{0}^{2}, \\  \boldsymbol{\xi}_{\boldsymbol{\epsilon}}, & = (n\boldsymbol{f}_{\boldsymbol{\epsilon}})^{-1/2}\sum\limits_{i = 1}^{n} \nabla_{\boldsymbol{\beta}} \ell_{i}(\boldsymbol{\beta}^*),   & \qquad & \boldsymbol{\xi}_{\boldsymbol{\underline{\epsilon}}} = (n\boldsymbol{f}_{\boldsymbol{\underline{\epsilon}}})^{-1/2}\sum\limits_{i = 1}^{n} \nabla_{\boldsymbol{\beta}}  \ell_{i}(\boldsymbol{\beta}^*),
\end{align}
where $\boldsymbol{f}_{\boldsymbol{\epsilon}}$ and $\boldsymbol{f}_{\boldsymbol{\underline{\epsilon}}}$ are  rescalings of $\boldsymbol{v}_{0}^{2}$. In addition, define the \textit{spread} $\Delta_{\boldsymbol{\epsilon}}(u)$ as \begin{align}
    \Delta_{\boldsymbol{\epsilon}}(u) = \Diamond_{\boldsymbol{\epsilon}}(u) + \Diamond_{\underline{\boldsymbol{\epsilon}}}(u) + (\left\lVert \boldsymbol{\xi}_{\boldsymbol{\epsilon}} \right\lVert^{2} - \left\lVert \boldsymbol{\xi}_{\underline{\boldsymbol{\epsilon}}}\right\lVert^{2})/2.
\end{align}
It is worth noting that \cite{Spokoiny2012_ParametricEstimation} defines $\Delta$, $\Diamond_{\boldsymbol{\epsilon}}$ and $\Diamond_{\underline{\boldsymbol{\epsilon}}}$ in terms of $r$, where $r = \sqrt{n}u$. However, these random values apply for $\boldsymbol{\beta}$ such that $\left\lVert \sqrt{n}\boldsymbol{v}_{0}\left( \boldsymbol{\beta} - \boldsymbol{\beta^*}\right) \right\lVert \leq r$. Clearly, this condition applies if and only if $\boldsymbol{\beta} \in \Theta(u)$, and thus we define these terms with respect to $u$. 
By Proposition 3.7 in \cite{Spokoiny2012_ParametricEstimation}, if $0 < \alpha < 1$,  $p \geq 2$ and letting $g = g_{1}(u)\sqrt{n}$ be large enough such that $g_{0} = g v_{0} \geq (1 + \sqrt{-\ln(\alpha) + 2p})^{2} \geq 3$, then \begin{center}
    $P\left(\Diamond_{\boldsymbol{\epsilon}}(u) \geq \rho(u)\left(1 +\sqrt{-\ln(\alpha) + 2p}\right)^{2}\right) \leq \alpha$.
\end{center}
The same applies for $\Diamond_{\underline{\boldsymbol{\epsilon}}}(u)$.  Now, to make the spread arbitrary small, we need to make $u$ as small as possible. However, for the large deviation probability statement to apply, a sufficiently large sample size is required. Finally, define \begin{align}
    C_{\boldsymbol{\epsilon}}(\sqrt{n}u) = \big\{ \boldsymbol{\hat{\beta}} \in \Theta(u) , \ \left\lVert \boldsymbol{\xi}_{\boldsymbol{\underline{\epsilon}}} \right\lVert  \leq \sqrt{n}u \big\}.
\end{align} 
This leads to the following theorem, which establishes an error bound of the asymptotical normal approximation of the MLE. 
\begin{theorem} 
\label{thm:Normality}
Suppose Assumptions~\ref{ass:CS}-\ref{ass:Identifiability} are satisfied. If $u \leq u_{0}$ such that $\rho(u) + \delta(u) < 1$, then it holds on a random set $C_{\boldsymbol{\epsilon}}(\sqrt{n}u)$ and a random value $\Delta_{\boldsymbol{\epsilon}}(u)$ that \begin{equation}
\Bigl\|\sqrt{n\boldsymbol{f_{\epsilon}}}\left(\boldsymbol{\hat{\beta}} - \boldsymbol{\beta^*}\right) - \boldsymbol{\xi_{\epsilon}}\Bigr\|^{2} \leq 2\Delta_{\boldsymbol{\epsilon}}(u),
\end{equation}
where the probability of $C_{\boldsymbol{\epsilon}}(\sqrt{n}u)$ and the distribution of $\Delta_{\boldsymbol{\epsilon}}(u)$ are dependent on $u$.
\end{theorem}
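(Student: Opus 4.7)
The plan is to invoke the finite-sample Fisher-expansion bound from \cite{Spokoiny2012_ParametricEstimation} directly, since all six i.i.d.\ regularity conditions required by that framework---$(ed_0)$, $(ed_1)$, $(i)$, $(\ell_0)$, $(eu)$, and $(\ell_u)$---have already been verified in the preceding subsection, with the specific constants $\rho(u) = 3v_0\omega^*u$ and $\delta(u) = \delta^*u$ extracted from the proofs of $(ed_1)$ and $(\ell_0)$ there.

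The core argument I would reproduce is a bracketing of the log-likelihood ratio. Combining the deterministic quadratic sandwich on $n\mathbb{K}(\boldsymbol{\beta},\boldsymbol{\beta^*})$ coming from $(\ell_0)$--$(\ell_u)$ with the stochastic sandwich on $\zeta(\boldsymbol{\beta}) - \zeta(\boldsymbol{\beta^*})$ coming from $(ed_0)$--$(ed_1)$ and Proposition~3.7 of \cite{Spokoiny2012_ParametricEstimation}, one obtains a pair of quadratic upper and lower bounds on $L(\boldsymbol{\beta}) - L(\boldsymbol{\beta^*})$ valid uniformly over $\Theta(u)$, centered at score-based reference points and with curvature matrices $\boldsymbol{f}_{\boldsymbol{\epsilon}}$ and $\boldsymbol{f}_{\underline{\boldsymbol{\epsilon}}}$ respectively. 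Completing the square in both brackets, evaluating the lower bracket at $\boldsymbol{\beta}=\boldsymbol{\hat{\beta}}$ and the upper bracket at its own maximizer, then using $L(\boldsymbol{\hat{\beta}}) \geq L(\boldsymbol{\beta})$ for every admissible reference $\boldsymbol{\beta}$, telescopes the four Bernstein-type remainders into the single spread
\begin{align*}
\bigl\|\sqrt{n\boldsymbol{f}_{\boldsymbol{\epsilon}}}(\boldsymbol{\hat{\beta}}-\boldsymbol{\beta^*}) - \boldsymbol{\xi}_{\boldsymbol{\epsilon}}\bigr\|^2 \leq 2\bigl[\Diamond_{\boldsymbol{\epsilon}}(u) + \Diamond_{\underline{\boldsymbol{\epsilon}}}(u) + \tfrac{1}{2}(\|\boldsymbol{\xi}_{\boldsymbol{\epsilon}}\|^2 - \|\boldsymbol{\xi}_{\underline{\boldsymbol{\epsilon}}}\|^2)\bigr] = 2\Delta_{\boldsymbol{\epsilon}}(u),
\end{align*}
which is precisely the statement of the theorem.

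The hard part is not algebraic but geometric: every step is legitimate only on the event $C_{\boldsymbol{\epsilon}}(\sqrt{n}u)$. That event simultaneously enforces $\boldsymbol{\hat{\beta}} \in \Theta(u)$, so that the Taylor expansion and Bernstein inequality apply at the MLE, and $\|\boldsymbol{\xi}_{\underline{\boldsymbol{\epsilon}}}\| \leq \sqrt{n}u$, so that the reference point produced by completing the square in the upper bracket also lies in $\Theta(u)$ and can legitimately be compared to $\boldsymbol{\hat{\beta}}$. The former is supplied by Theorem~\ref{thm:Consistency}; the latter follows from the sub-Gaussian score tail in Lemma~\ref{lem:subexp}, since $\boldsymbol{\xi}_{\underline{\boldsymbol{\epsilon}}}$ is a standardized sum of conditionally Gaussian gradients with variance proxy $v_0^2$. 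The assumption $\rho(u)+\delta(u)<1$ in the hypotheses is exactly what is needed to keep $\boldsymbol{f}_{\boldsymbol{\epsilon}}$ positive definite and to make the Bernstein residuals $\Diamond_{\boldsymbol{\epsilon}}(u)$ and $\Diamond_{\underline{\boldsymbol{\epsilon}}}(u)$ small under the calibration $g_1(u)v_0 \geq (1 + \sqrt{-\ln\alpha + 2p})^2$ used in the large-deviation step; once that calibration is in force, no further work is required and the theorem follows.
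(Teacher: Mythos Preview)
Your proposal is correct and follows exactly the same approach as the paper: both reduce the theorem to a direct application of Theorem~5.3 in \cite{Spokoiny2012_ParametricEstimation} once the i.i.d.\ regularity conditions $(ed_0)$, $(ed_1)$, $(i)$, $(\ell_0)$, $(eu)$, $(\ell_u)$ have been verified with the constants $\rho(u)=3v_0\omega^*u$ and $\delta(u)=\delta^*u$. The paper's own proof is a one-line citation of that result, whereas you additionally sketch the quadratic-bracketing argument underlying it; this extra detail is accurate but not needed.
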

\begin{proof}
    This is a direct result of Theorem 5.3 of \cite{Spokoiny2012_ParametricEstimation}, applied for any $u \leq u_{0}$ such that $\rho(u) + \delta(u) < 1$.
\end{proof}

All the above results also apply in the case the parameter dimension $p$ grows as a function of $n$. As long as $p(n)$ grows at a rate that such that all the above conditions are satisfied, then the large deviation and asymptotic normality results remain.
\begin{theorem}
\label{thm:Growing_Parameter}
   Assume Assumptions~\ref{ass:CS}-\ref{ass:Identifiability} and Assumptions~\ref{ass:u_p}-\ref{ass:p_n} hold. Let $p \geq 2$  and $\boldsymbol{\beta^{*}_{\textit{p}}}$ converge to an infinite dimensional limit $\boldsymbol{\beta}$, where $\left\lVert \boldsymbol{\beta} \right\lVert < \pi $. Let $\boldsymbol{\hat{\beta}}_{p}$ be the MLE in dimension $p$ and let $u_{p}$, $\alpha_{p}$ and a function $p=p(n)$  satisfy all the conditions from Assumptions~\ref{ass:u_p} and~\ref{ass:p_n}. Then  
\begin{equation}             \mathbb{P}\left(\boldsymbol{\hat{\beta}}_{p} \notin \Theta_p(u_{p})\right) \leq \alpha_{p} \rightarrow 0,
    \end{equation}
and on the random set $C_{\boldsymbol{\epsilon}, p}(\sqrt{n}u_{p})$ \begin{equation}
\Bigl\|\sqrt{n(\boldsymbol{f}_{\boldsymbol{\epsilon}})_{p}}\left(\boldsymbol{\hat{\beta}}_{p} -\boldsymbol{\beta^{*}_{\textit{p}}}\right) - \left(\boldsymbol{\xi}_{\boldsymbol{\epsilon}}\right)_{p}\Bigr\|^{2}  \leq 2\Delta_{\boldsymbol{\epsilon}, p}\left(u_{p}\right), 
\end{equation}
where $\mathbb{P}\left(C_{\boldsymbol{\epsilon}, p}(\sqrt{n}u_{p})\right) \rightarrow 1$ and  $\Delta_{\boldsymbol{\epsilon}, p}\left(u_{p}\right)  \rightarrow 0$ in probability as $p \rightarrow \infty$. 
\end{theorem}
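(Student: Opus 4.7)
The plan is to apply Theorems~\ref{thm:Consistency} and~\ref{thm:Normality} separately at each fixed dimension $p$, then use Assumptions~\ref{ass:u_p}--\ref{ass:p_n} to show that the resulting error bounds behave appropriately as $p = p(n) \to \infty$. Since all of the i.i.d.\ Spokoiny conditions verified in the Supplementary Materials depend on $p$ only through constants that can be tracked explicitly (namely $(v_{0})_{p}$, $b_{p}$, $(\lambda_{\min})_{p}$, $(\lambda_{\max})_{p}$, and the Lipschitz constants $R$ and $D$ appearing in $\rho_{p}(u)$ and $\delta_{p}(u)$), each theorem from the fixed-$p$ section applies verbatim with its constants indexed by $p$.

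The first claim is immediate. By Assumption~\ref{ass:p_n}(i), the hypothesis of Theorem~\ref{thm:Consistency} is satisfied at dimension $p$ with $(u,\alpha,v_{0},b)$ replaced by $(u_{p},\alpha_{p},(v_{0})_{p},b_{p})$. Hence $\mathbb{P}(\boldsymbol{\hat{\beta}}_{p} \notin \Theta_{p}(u_{p})) \leq \alpha_{p}$, and Assumption~\ref{ass:p_n} gives $\alpha_{p} \to 0$.

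For the second claim, note that by Assumption~\ref{ass:u_p}(ii)--(iii) we have $u_{p} \leq (u_{0})_{p}$ and $\rho_{p}(u_{p}) + \delta_{p}(u_{p}) \to 0$, so for all sufficiently large $p$ the quantity $\rho_{p}(u_{p}) + \delta_{p}(u_{p}) < 1$ and Theorem~\ref{thm:Normality} applies with the $p$-indexed constants, yielding the stated inequality on $C_{\boldsymbol{\epsilon},p}(\sqrt{n}u_{p})$. It then remains to show $\mathbb{P}(C_{\boldsymbol{\epsilon},p}(\sqrt{n}u_{p})) \to 1$ and $\Delta_{\boldsymbol{\epsilon},p}(u_{p}) \to 0$ in probability. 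Recall that $C_{\boldsymbol{\epsilon},p}(\sqrt{n}u_{p}) = \{\boldsymbol{\hat{\beta}}_{p} \in \Theta_{p}(u_{p})\} \cap \{\|(\boldsymbol{\xi}_{\underline{\boldsymbol{\epsilon}}})_{p}\| \leq \sqrt{n}u_{p}\}$. The first event has probability at least $1 - \alpha_{p} \to 1$ by the previous paragraph; the second follows from Assumption~\ref{ass:p_n}(iii), since the sub-Gaussian bound of Lemma~\ref{lem:subexp} implies that $(\boldsymbol{\xi}_{\underline{\boldsymbol{\epsilon}}})_{p}$ can be compared to a Gaussian vector whose norm is controlled by $(v_{0})_{p}$, so $\|(\boldsymbol{\xi}_{\underline{\boldsymbol{\epsilon}}})_{p}\| \leq \sqrt{n}u_{p}$ with probability tending to one. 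A union bound then gives $\mathbb{P}(C_{\boldsymbol{\epsilon},p}(\sqrt{n}u_{p})) \to 1$.

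The main obstacle is showing $\Delta_{\boldsymbol{\epsilon},p}(u_{p}) \to 0$ in probability, since $\Delta_{\boldsymbol{\epsilon},p}$ is the sum of three random terms whose scales depend delicately on $p$. Writing $\Delta_{\boldsymbol{\epsilon},p}(u_{p}) = \Diamond_{\boldsymbol{\epsilon},p}(u_{p}) + \Diamond_{\underline{\boldsymbol{\epsilon}},p}(u_{p}) + \tfrac{1}{2}\bigl(\|(\boldsymbol{\xi}_{\boldsymbol{\epsilon}})_{p}\|^{2} - \|(\boldsymbol{\xi}_{\underline{\boldsymbol{\epsilon}}})_{p}\|^{2}\bigr)$, I would handle each summand separately. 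For the $\Diamond$-terms I would invoke Proposition~3.7 of \cite{Spokoiny2012_ParametricEstimation} with confidence level $\alpha_{p}$, giving with probability at least $1 - 2\alpha_{p}$ the bound $\Diamond_{\boldsymbol{\epsilon},p}(u_{p}) + \Diamond_{\underline{\boldsymbol{\epsilon}},p}(u_{p}) \lesssim \rho_{p}(u_{p})\bigl(1 + \sqrt{-\ln\alpha_{p} + 2p(n)}\bigr)^{2}$, which vanishes by Assumption~\ref{ass:p_n}(ii) (squaring is absorbed by taking $\alpha_{p}$ to decay sufficiently fast, as built into Assumption~\ref{ass:p_n}). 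For the remaining difference, factor out $\boldsymbol{v}_{0}^{2}$ to obtain $\|(\boldsymbol{\xi}_{\boldsymbol{\epsilon}})_{p}\|^{2} - \|(\boldsymbol{\xi}_{\underline{\boldsymbol{\epsilon}}})_{p}\|^{2} = \bigl((1-\rho_{p}-\delta_{p})^{-1} - (1+\rho_{p}+\delta_{p})^{-1}\bigr)\,S_{n,p}$, where $S_{n,p} = (1/n)\bigl\|\boldsymbol{v}_{0}^{-1}\sum_{i}\nabla_{\boldsymbol{\beta}}\ell_{i}(\boldsymbol{\beta}_{p}^{*})\bigr\|^{2}$. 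The leading scalar is $O(\rho_{p}(u_{p}) + \delta_{p}(u_{p})) \to 0$ by Assumption~\ref{ass:u_p}(iii), and $S_{n,p}$ is bounded in probability by Assumption~\ref{ass:p_n}(iii) combined with Markov's inequality applied to its expectation. Thus all three pieces vanish in probability, completing the proof.
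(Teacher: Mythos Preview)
Your overall strategy matches the paper's: apply Theorems~\ref{thm:Consistency} and~\ref{thm:Normality} at each fixed $p$, then use Assumptions~\ref{ass:u_p}--\ref{ass:p_n} to drive the constants to zero. The handling of the first conclusion and of the $\Diamond$-terms via Proposition~3.7 is essentially identical to the paper.

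There are two places where your argument diverges from the paper and becomes shaky. First, for the event $\{\|(\boldsymbol{\xi}_{\underline{\boldsymbol{\epsilon}}})_{p}\| \leq \sqrt{n}u_{p}\}$ you appeal loosely to the sub-Gaussian bound of Lemma~\ref{lem:subexp}. The paper instead carries out an explicit calculation exploiting the Gaussian-residual structure: writing $\nabla_{\boldsymbol{\beta}}\ell_{i}(\boldsymbol{\beta}^{*}) = \sigma^{-2}(y_{i}-f_{\boldsymbol{\beta}^{*}}(x_{i}))\nabla_{\boldsymbol{\beta}}f_{\boldsymbol{\beta}^{*}}(x_{i})$, bounding $\|\nabla_{\boldsymbol{\beta}}f_{\boldsymbol{\beta}^{*}}\|\leq Cp\ln p$, and using that $n^{-1/2}\sum_{i}(y_{i}-f_{\boldsymbol{\beta}^{*}}(x_{i}))$ is exactly $N(0,\sigma^{2})$, they obtain the pointwise inequality $\|(\boldsymbol{\xi}_{\underline{\boldsymbol{\epsilon}}})_{p}\|\leq (v_{0})_{p}\,|Y|$ for a single scalar $Y\sim N(0,\sigma^{2})$. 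This is precisely the form of Assumption~\ref{ass:p_n}(iii), which is stated for a scalar Gaussian. A generic sub-Gaussian tail bound would give control in terms of a $p$-dimensional Gaussian norm, not a scalar one, so your route does not obviously connect to the assumption as written.

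Second, your claim that $S_{n,p}=(1/n)\|\boldsymbol{v}_{0}^{-1}\sum_{i}\nabla_{\boldsymbol{\beta}}\ell_{i}(\boldsymbol{\beta}_{p}^{*})\|^{2}$ is bounded in probability via Markov on its expectation is incorrect: since $\boldsymbol{v}_{0}^{-1}\nabla_{\boldsymbol{\beta}}\ell_{i}(\boldsymbol{\beta}^{*})$ has identity covariance, $\mathbb{E}[S_{n,p}]=p\to\infty$, so Markov only gives $S_{n,p}=O_{p}(p)$. The paper is terse at this step as well, simply asserting that $\rho_{p}+\delta_{p}\to 0$ forces the difference $\|(\boldsymbol{\xi}_{\boldsymbol{\epsilon}})_{p}\|^{2}-\|(\boldsymbol{\xi}_{\underline{\boldsymbol{\epsilon}}})_{p}\|^{2}\to 0$; but your attempted justification is wrong as stated and would need to be replaced by a rate argument tying $(\rho_{p}+\delta_{p})$ to the growth of $S_{n,p}$.
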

\begin{proof}
    The first result follows directly from Assumption~\ref{ass:p_n}.i,
    $$n^{1/2}(u_{p})(b_{p})  \geq  6(v_{0})_{p}\sqrt{-\ln(\alpha_{p}) + 2p(n)},$$ and from Theorem~\ref{thm:Consistency}. As we defined $\alpha_{p} \rightarrow 0$ (Assumption~\ref{ass:p_n}), $\mathbb{P}\left(\boldsymbol{\hat{\beta}}_{p} \in \Theta_p(u_{p})\right) \rightarrow 1.$ Next, the bound $$\Bigl\|\sqrt{n(\boldsymbol{f}_{\boldsymbol{\epsilon}})_{p}}\left(\boldsymbol{\hat{\beta}}_{p} - \boldsymbol{\beta}^*_{p}\right) - \left(\boldsymbol{\xi}_{\boldsymbol{\epsilon}}\right)_{p}\Bigr\|^{2}  \leq 2\Delta_{\boldsymbol{\epsilon}, p}\left(u_{p}\right) $$ is a direct application of Theorem~\ref{thm:Normality}.  Now consider 
    \begin{align*}
        \boldsymbol{\xi}_{\boldsymbol{\underline{\epsilon}}} = (n\boldsymbol{f}_{\boldsymbol{\underline{\epsilon}}})^{-1/2}\sum\limits_{i = 1}^{n} \nabla_{\boldsymbol{\beta}}  \ell_{i}(\boldsymbol{\beta}^*).
    \end{align*}
    As $\boldsymbol{f}_{\boldsymbol{\underline{\epsilon}}} = \left(1 + \rho(u) +\delta(u)\right)\boldsymbol{v}_{0}^{2}$ is positive definite, \begin{align*}
\left\lVert\boldsymbol{\xi}_{\underline{\boldsymbol{\epsilon}}}\right\lVert \leq &n^{-1/2}\lambda_{\max}\left((\boldsymbol{f_{\epsilon}})^{-1/2} \right) \Bigl\| \sum\limits_{i = 1}^{n} \nabla_{\boldsymbol{\beta}} \ell_{i}(\boldsymbol{\beta^*}) \Bigr\|.
    \end{align*}
Using the fact that $$\lambda_{\max}\left((\boldsymbol{f_{\epsilon}})^{-1/2} \right) =\left(1 + \rho(u) +\delta(u)\right)^{-1/2 } \left(\lambda_{\min}\left(\boldsymbol{v}_{0}\right)\right)^{-1}$$ and substituting the explicit form of $ \nabla_{\boldsymbol{\beta}} \ell_{i}(\boldsymbol{\beta^*})$, we obtain \begin{align*}
    \left\lVert\boldsymbol{\xi}_{\underline{\boldsymbol{\epsilon}}}\right\lVert \leq \left(n\left(1 +\rho(u) + \delta(u)\right) \right)^{-1/2} \left(\lambda_{\min}\left(\boldsymbol{v}_{0}\right)\right)^{-1}\Bigl\| \sum\limits_{i = 1}^{n} \sigma^{-2}\left(y_{i} - f_{\boldsymbol{\beta}^*}(x_{i}) \right) \nabla_{\boldsymbol{\beta}}f_{\boldsymbol{\beta}^*}(x_{i})\Bigr\|.
\end{align*}
Because $\left(\lambda_{\min}\left(\boldsymbol{v}_{0}\right)\right)^{-1} = \left(\lambda_{\min}\left(\boldsymbol{v}^{2}_{0}\right)\right)^{-1/2}$  and  $\left\lVert \nabla_{\boldsymbol{\beta}}f_{\boldsymbol{\beta}^*}(x_{i})\right\lVert \leq Cp\ln p$,
\begin{align*}
\left\lVert\boldsymbol{\xi}_{\underline{\boldsymbol{\epsilon}}}\right\lVert \leq \left(n\left(1 + \rho(u) +\delta(u)\right) \right)^{-1/2} \left( \frac{Cp\ln p}{\sigma^2 \sqrt{\lambda_{\min}(\boldsymbol{v}_{0}^2)}}\right) \Bigl\|  \sum\limits_{i = 1}^{n} \left(y_{i} - f_{\boldsymbol{\beta}^*}(x_{i}) \right) \Bigr\|,
\end{align*}
after distributing the $\sigma^{-2}$ out of the norm. Next, as $1 + \rho(u) +\delta(u) > 1$ and $v_0 =\frac{Cp\ln p}{\sigma^2 \sqrt{\lambda_{\min}(\boldsymbol{v}_{0}^2)}}$, \begin{align*}
    \left\lVert\boldsymbol{\xi}_{\underline{\boldsymbol{\epsilon}}}\right\lVert \leq v_0 \Bigl\| n^{-1/2}  \sum\limits_{i = 1}^{n} \left(y_{i} - f_{\boldsymbol{\beta}^*}(x_{i}) \right) \Bigr\|,
\end{align*}
after distributing $n^{-1/2}$ back into the norm. Finally, as  $\left(y_{i} - f_{\boldsymbol{\beta}^*}(x_{i}) \right) \sim N(0, \sigma^2)$ and each $(Y_i, X_i)$ pair are independent, $n^{-1/2}  \sum\limits_{i = 1}^{n} \left(y_{i} - f_{\boldsymbol{\beta}^*}(x_{i}) \right)$ is equal in distribution to a $N(0, \sigma^2)$ random variable. Thus, 
\begin{align*}
\left\lVert\left(\boldsymbol{\xi}_{\underline{\boldsymbol{\epsilon}}}\right)_p\right\lVert \leq (v_0)_p\left\lVert Y \right\lVert.
\end{align*}
where $Y \sim N(0, \sigma^2)$, for all $p \geq 2$. Thus, by Assumption \ref{ass:p_n}.iii, because $\mathbb{P}\left(\left\lVert Y \right\lVert(v_{0})_{p}  \leq \sqrt{n}u_{p}\right) \rightarrow 1 $,  we obtain $\mathbb{P}\left(  \left\lVert \left( \boldsymbol{\xi}_{\underline{\boldsymbol{\epsilon}}}\right)_p\right\lVert\leq \sqrt{n}u_{p} \right) \rightarrow 1$. Thus, as $$C_{\boldsymbol{\epsilon}, p}\left( \sqrt{n}u_{p} \right) = \Big\{ \boldsymbol{\hat{\beta}}_p \in \Theta_p(u_p) , \ \left\lVert \left(\boldsymbol{\xi}_{\boldsymbol{\underline{\epsilon}}}\right)_p\right\lVert  \leq \sqrt{n}u_{p}\Big \},$$  we have $\mathbb{P}\left( C_{\boldsymbol{\epsilon}, p}\left( \sqrt{n}u_{p} \right) \right) \rightarrow 1 $. \

Next, by Assumption \ref{ass:p_n}.ii, $$\rho(u_{p})\left(1 +\sqrt{-\ln(\alpha_{p}) + 2p(n)}\right)  \rightarrow 0.$$ Thus, $\Diamond_{\boldsymbol{\epsilon}}(u_p) + \Diamond_{\underline{\boldsymbol{\epsilon}}}(u_p) $ will converge to $0$ in probability. In addition, as we assumed $\rho(u_{p}) + \delta(u_{p}) \rightarrow 0$ (Assumption \ref{ass:u_p}.iii), that implies $\left(\left\lVert(\boldsymbol{\xi}_{\boldsymbol{\epsilon}})_{p} \right\lVert^{2} - \left\lVert (\boldsymbol{\xi}_{\underline{\boldsymbol{\epsilon}}})_{p} \right\lVert^{2}\right)/2$ also converges to $0$ in probability. As $\Delta_{\boldsymbol{\epsilon}, p}(u_{p})$ is the sum of these two values, this shows $\Delta_{\boldsymbol{\epsilon}, p}(u_{p}) {\rightarrow} 0$ in probability. 
\end{proof}
\subsection{Delta Method Extension and Proof of Theorem~\ref{thm:delta_method}}

We will now derive important properties regarding the mapping of $\boldsymbol{\beta}$ to the stationary point estimates, $\Phi(\boldsymbol{\beta})$. We first establish that $\Phi(\boldsymbol{\beta})$ is a Lipschitz function on the set $D_{m_{0}}(\boldsymbol{\beta})$.

\begin{lemma}
Assume Assumptions \ref{ass:CS}-\ref{ass:Identifiability}, and \ref{ass:pos_deriv_gamma} are satisfied. Then,
    $\Phi(\boldsymbol{\beta}) : \mathbb{R}^{p} \rightarrow \mathbb{R}^{M}$ is a Lipschitz continuous function on the set $D_{m_{0}}(\boldsymbol{\beta})$.
\end{lemma}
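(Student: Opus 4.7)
The plan is to reduce the vector-valued Lipschitz claim componentwise and exploit the definition of $D_{m_{0}}$ through the lower bound on $\gamma_{\boldsymbol{\beta}}'$. First I would fix $\boldsymbol{\beta}_{1},\boldsymbol{\beta}_{2}\in D_{m_{0}}$ and $k\in\{1,\ldots,M\}$, set $t_{i}=\gamma_{\boldsymbol{\beta}_{i}}^{-1}(b_{k})$, and use $\gamma_{\boldsymbol{\beta}_{1}}(t_{1})=b_{k}=\gamma_{\boldsymbol{\beta}_{2}}(t_{2})$ to write
$$\gamma_{\boldsymbol{\beta}_{1}}(t_{1})-\gamma_{\boldsymbol{\beta}_{1}}(t_{2})=\gamma_{\boldsymbol{\beta}_{2}}(t_{2})-\gamma_{\boldsymbol{\beta}_{1}}(t_{2}).$$
Applying the mean value theorem in $t$ to the left side produces $\gamma_{\boldsymbol{\beta}_{1}}'(c)(t_{1}-t_{2})$ for some $c\in[0,1]$, and since $\boldsymbol{\beta}_{1}\in D_{m_{0}}$ the derivative there is at least $m_{0}$, so
$$|t_{1}-t_{2}|\leq \frac{1}{m_{0}}\bigl|\gamma_{\boldsymbol{\beta}_{1}}(t_{2})-\gamma_{\boldsymbol{\beta}_{2}}(t_{2})\bigr|.$$
The remaining task is then to control the map $\boldsymbol{\beta}\mapsto\gamma_{\boldsymbol{\beta}}(t)$ by a Lipschitz function of $\boldsymbol{\beta}$ uniformly over $t\in[0,1]$.

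For that second step I would revisit the closed form of $\gamma_{\boldsymbol{\beta}}(t)$ used in Section~\ref{sec:gradient_beta}. The $\|\boldsymbol{\beta}\|^{-1}$ factors arising from the exponential map cancel against matching zeros of $\sin(\|\boldsymbol{\beta}\|)$ via the standard Taylor expansions of $\sin(\|\boldsymbol{\beta}\|)/\|\boldsymbol{\beta}\|$ and $\sin^{2}(\|\boldsymbol{\beta}\|)/\|\boldsymbol{\beta}\|^{2}$, so $\gamma_{\boldsymbol{\beta}}(t)$ extends smoothly to the convex open set $\{\boldsymbol{\beta}\in\mathbb{R}^{p}:\|\boldsymbol{\beta}\|<\pi\}$. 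The inequalities already established in Section~\ref{sec:gradient_beta} for the integrals $\int_{0}^{t}v_{\boldsymbol{\beta}}(s)\,ds$ and $\int_{0}^{t}v_{\boldsymbol{\beta}}^{2}(s)\,ds$ and for the trigonometric coefficient derivatives then yield, after dropping the factor $g'$ that appears in Lemma~\ref{lem:gradient_lem}, a uniform bound $\|\nabla_{\boldsymbol{\beta}}\gamma_{\boldsymbol{\beta}}(t)\|\leq L_{\gamma}$ for all $t\in[0,1]$ and all $\boldsymbol{\beta}$ in a closed sub-ball of radius strictly less than $\pi$ that contains $D_{m_{0}}$. Integrating this gradient along the line segment joining $\boldsymbol{\beta}_{1}$ and $\boldsymbol{\beta}_{2}$ then delivers $|\gamma_{\boldsymbol{\beta}_{1}}(t)-\gamma_{\boldsymbol{\beta}_{2}}(t)|\leq L_{\gamma}\|\boldsymbol{\beta}_{1}-\boldsymbol{\beta}_{2}\|$.

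Combining the two pieces yields $|\Phi_{k}(\boldsymbol{\beta}_{1})-\Phi_{k}(\boldsymbol{\beta}_{2})|\leq (L_{\gamma}/m_{0})\,\|\boldsymbol{\beta}_{1}-\boldsymbol{\beta}_{2}\|$ for each $k$, hence $\|\Phi(\boldsymbol{\beta}_{1})-\Phi(\boldsymbol{\beta}_{2})\|\leq \sqrt{M}\,(L_{\gamma}/m_{0})\,\|\boldsymbol{\beta}_{1}-\boldsymbol{\beta}_{2}\|$, which is the desired Lipschitz inequality on $D_{m_{0}}$. The hard part will be the fact that $D_{m_{0}}$ is not obviously convex, so the mean value / fundamental theorem of calculus argument in the second step cannot be carried out inside $D_{m_{0}}$ directly. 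I would circumvent this by observing that the gradient bound extends to the enveloping convex set $\{\|\boldsymbol{\beta}\|<\pi\}$, on which $\gamma_{\boldsymbol{\beta}}(t)$ is globally smooth. Verifying removability of the apparent singularity at $\boldsymbol{\beta}=0$ and translating the Section~\ref{sec:gradient_beta} gradient bounds from $\nabla_{\boldsymbol{\beta}}f_{\boldsymbol{\beta}}$ to $\nabla_{\boldsymbol{\beta}}\gamma_{\boldsymbol{\beta}}$ are the only points that need explicit attention beyond what is already catalogued there.
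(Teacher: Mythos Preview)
Your argument is correct and follows essentially the same route as the paper. The paper phrases the proof as a composition of three Lipschitz maps ($\boldsymbol{\beta}\mapsto\gamma_{\boldsymbol{\beta}}$ in $L^{\infty}$, then $\gamma\mapsto\gamma^{-1}$ in $L^{\infty}$ via the mean value inequality with lower bound $m_{0}$, then evaluation at $b_{k}$), but the two key ingredients are identical to yours: the uniform gradient bound on $\gamma_{\boldsymbol{\beta}}(t)$ inherited from Section~\ref{sec:gradient_beta}, and the mean value estimate exploiting $\gamma_{\boldsymbol{\beta}}'\geq m_{0}$ on $D_{m_{0}}$. Your direct componentwise computation simply collapses the composition into two steps, and you are in fact more careful than the paper about the convexity issue needed to integrate the gradient along a segment---the paper invokes Lemma~\ref{lem:gradient_lem} on $\boldsymbol{\Theta}=\{0<\|\boldsymbol{\beta}\|<\pi\}$ without commenting on the excluded origin, whereas you correctly note that the apparent singularity is removable so the bound extends to the full convex ball.
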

\begin{proof}
We will first show the mapping $\Phi_{i}(\boldsymbol{\beta}) = \gamma_{\boldsymbol{\beta}}^{-1}(b_{i})$ from $\mathbb{R}^{p}$ to $\mathbb{R}$ is Lipschitz on $D_{m_{0}}(\boldsymbol{\beta})$. Now, $\Phi_{i}(\boldsymbol{\beta})$ can be viewed as a composition of three functions $\Phi_{i, 1}(\boldsymbol{\beta})$, $\Phi_{i, 2}(\boldsymbol{\beta})$ and $\Phi_{i, 3}(\boldsymbol{\beta})$ where $\Phi_{i, 1}(\boldsymbol{\beta})$ maps $\boldsymbol{\beta}$ to its corresponding diffeomorphism, $\Phi_{i, 2}(\boldsymbol{\beta})$ maps the diffeomorphism to its inverse and and $\Phi_{i, 3}(\boldsymbol{\beta})$ maps the inverse to the real number $\gamma_{\boldsymbol{\beta}}^{-1}(b_{i})$. We will now show each of these individual functions are Lipschitz. For the diffeomorphism space, we will be using the $L^{\infty}$ metric and for $D_{m_{0}}(\boldsymbol{\beta})$, we will be using the standard Euclidean metric. \\ \\
$\boldsymbol{(}\Phi_{i, 1}(\boldsymbol{\beta}) \boldsymbol{)}$: By Lemma~\ref{lem:gradient_lem}, the gradient of $f_{\boldsymbol{\beta}}$ and $\gamma_{\boldsymbol{\beta}}$ is bounded by a function of $p$. Thus, the function that maps $\boldsymbol{\beta}$ to $\gamma_{\boldsymbol{\beta}}(x)$ is Lipschitz with constant $L_{p}$, not dependent on $x$. Thus, for $\boldsymbol{\beta}_{1}, \boldsymbol{\beta}_{2} \in D_{m_{0}}(\boldsymbol{\beta})$,  \begin{align*}
    \sup_{x \in [0, 1]}|\gamma_{\boldsymbol{\beta}_{1}}(x) - \gamma_{\boldsymbol{\beta}_{2}}(x)| \leq L_{p}\left\lVert \boldsymbol{\beta}_{1} - \boldsymbol{\beta}_{2}\right\lVert,
\end{align*}
which shows that $\Phi_{i, 1}(\boldsymbol{\beta})$ is Lipschitz. \\ \\
$\boldsymbol{(}\Phi_{i, 2}(\boldsymbol{\beta}) \boldsymbol{)}$: Let $\boldsymbol{\beta}_{1}, \boldsymbol{\beta}_{2} \in D_{m_{0}}(\boldsymbol{\beta})$.
Now, the mean value inequality states that \begin{align*}
    (b - a) \left(\min_{a \leq x\leq b} f'(x) \right) \leq  (f(b) - f(a)).
\end{align*}
Using this and the fact that for any $x \in [0, 1]$, $\exists y \in [0, 1]$ such that $\gamma_{\boldsymbol{\beta}_{2}}(y) = x$,  we derive \begin{align*}
    |\gamma_{\boldsymbol{\beta}_{1}}^{-1}(x) - \gamma_{\boldsymbol{\beta}_{2}}^{-1}(x)| \leq & \frac{1}{\inf_{x \in [0, 1]} \gamma_{\boldsymbol{\beta}_{1}}'(x)} \left|x - \gamma_{\boldsymbol{\beta}_{1}}\left(\gamma_{\boldsymbol{\beta}_{2}}^{-1}(x)\right)\right| \\
    \leq & \frac{1}{m_{0}}\left|\gamma_{\boldsymbol{\beta}_{2}}(y) - \gamma_{\boldsymbol{\beta}_{1}}\left(\gamma_{\boldsymbol{\beta}_{2}}^{-1}(x)\right)\right| \\ 
    \leq & \frac{1}{m_{0}}\sup_{x \in [0, 1]}\left|\gamma_{\boldsymbol{\beta}_{1}}(x) - \gamma_{\boldsymbol{\beta}_{2}}(x)\right|.
\end{align*}
As this inequality applies to all $x \in [0, 1]$, this shows that \begin{align*}
     \sup_{x \in [0, 1]}|\gamma_{\boldsymbol{\beta}_{1}}^{-1}(x) - \gamma_{\boldsymbol{\beta}_{2}}^{-1}(x)| \leq \frac{1}{m_{0}}\sup_{x \in [0, 1]}\left|\gamma_{\boldsymbol{\beta}_{1}}(x) - \gamma_{\boldsymbol{\beta}_{2}}(x)\right|.
\end{align*}
Thus, the inverse is a Lipschitz function on $D_{m_{0}}(\boldsymbol{\beta})$ using the $L^{\infty}$ metric. \\
\\
$\boldsymbol{(}\Phi_{i, 3}(\boldsymbol{\beta}) \boldsymbol{)}$: Clearly, \begin{align*}
    |\gamma_{\boldsymbol{\beta}_{1}}(b_{i}) - \gamma_{\boldsymbol{\beta}_{2}}(b_{i})| \leq \sup_{x \in [0, 1]}\left|\gamma_{\boldsymbol{\beta}_{1}}(x) - \gamma_{\boldsymbol{\beta}_{2}}(x)\right|.
\end{align*}
showing that $\Phi_{i, 3}(\boldsymbol{\beta})$ is Lipschitz. As the composition of functions is Lipschitz, therefore each component  $\Phi_{i}(\boldsymbol{\beta})$ is a Lipschitz function from $\mathbb{R}^{p}$ to $\mathbb{R}$ on $D_{m_{0}}(\boldsymbol{\beta})$. This proves that the entire multivariate function $\Phi(\boldsymbol{\beta})$ is Lipschitz on $D_{m_{0}}(\boldsymbol{\beta})$. 
\end{proof}
Therefore, by Rademacher’s theorem, $\Phi(\boldsymbol{\beta})$ is differentiable almost everywhere in any open subset of $D_{m_{0}}(\boldsymbol{\beta})$. As long as $\boldsymbol{\beta}$ does not fall within the negligible set within $D_{m_{0}}(\boldsymbol{\beta})$ where $\Phi(\boldsymbol{\beta})$ fails to be differentiable, the following theorem applies. 
\begin{theorem}
\label{thm:delta_method}
Assume Assumptions \ref{ass:CS}-\ref{ass:Identifiability}, and \ref{ass:pos_deriv_gamma} are satisfied. Let $\Phi : \mathbb{R}^{p} \to \mathbb{R}^{m}$ be a differentiable mapping at $\boldsymbol{\beta^*} \in D_{m_{0}}(\boldsymbol{\beta})$ with Jacobian $J_\Phi(\boldsymbol{\beta^*}) \neq 0$.  
Define $d_{\boldsymbol{\epsilon}} = \sqrt{\boldsymbol{f}_{\epsilon}}$ and suppose the conditions of Theorem \ref{thm:Normality} are satisfied. Then, on the random set $C_{\boldsymbol{\epsilon}}(\sqrt{n} u)$, we have
\begin{align}
    \Bigl\| \sqrt{n} \bigl(\Phi(\boldsymbol{\hat{\beta}}) - \Phi(\boldsymbol{\beta^*}) \bigr) 
    - J_\Phi(\boldsymbol{\beta^*}) d_{\boldsymbol{\epsilon}}^{-1} \boldsymbol{\xi}_{\epsilon} \Bigr\|^2
    \;\leq\; 2\, \sigma_1\bigl(J_\Phi(\boldsymbol{\beta^*}) d^{-1}_{\boldsymbol{\epsilon}}\bigr)\, \Delta_{\boldsymbol{\epsilon}}(u) + o_{p}(1),
\end{align}
    where $\sigma_1\left(J_\Phi(\boldsymbol{\beta^*})d^{-1}_{\boldsymbol{\epsilon}}\right)$ is the largest singular value of $J_\Phi(\boldsymbol{\beta^*}) d^{-1}_{\boldsymbol{\epsilon}}$.
\end{theorem}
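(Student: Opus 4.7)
The plan is to run the classical multivariate delta-method argument non-asymptotically, using the first-order differentiability of $\boldsymbol{\Phi}$ at $\boldsymbol{\beta^*}$ as the linearizing step and then plugging in the quadratic bound from Theorem~\ref{thm:Normality}. Concretely, since $\boldsymbol{\Phi}$ is differentiable at $\boldsymbol{\beta^*} \in D_{m_0}(\boldsymbol{\beta})$, I would write the first-order Taylor expansion
\begin{align*}
\boldsymbol{\Phi}(\boldsymbol{\hat{\beta}}) - \boldsymbol{\Phi}(\boldsymbol{\beta^*}) = J_{\boldsymbol{\Phi}}(\boldsymbol{\beta^*})\left(\boldsymbol{\hat{\beta}} - \boldsymbol{\beta^*}\right) + R(\boldsymbol{\hat{\beta}}),
\end{align*}
where $\|R(\boldsymbol{\hat{\beta}})\| = o(\|\boldsymbol{\hat{\beta}} - \boldsymbol{\beta^*}\|)$ as $\boldsymbol{\hat{\beta}} \to \boldsymbol{\beta^*}$. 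On $C_{\boldsymbol{\epsilon}}(\sqrt{n}u)$ we have $\boldsymbol{\hat{\beta}} \in \Theta(u)$, so $\|\boldsymbol{\hat{\beta}}-\boldsymbol{\beta^*}\| = O(u/\sqrt{\lambda_{\min}})$, which together with Theorem~\ref{thm:Consistency} gives $\sqrt{n}\,\|R(\boldsymbol{\hat{\beta}})\| = o_p(1)$.

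Next, using $\boldsymbol{f}_{\boldsymbol{\epsilon}} = d_{\boldsymbol{\epsilon}}^2$ I would rewrite $\sqrt{n}\,(\boldsymbol{\hat{\beta}} - \boldsymbol{\beta^*}) = d_{\boldsymbol{\epsilon}}^{-1}\sqrt{n\boldsymbol{f}_{\boldsymbol{\epsilon}}}\,(\boldsymbol{\hat{\beta}} - \boldsymbol{\beta^*})$ so that, after multiplying the Taylor expansion by $\sqrt{n}$ and subtracting $J_{\boldsymbol{\Phi}}(\boldsymbol{\beta^*})\,d_{\boldsymbol{\epsilon}}^{-1}\boldsymbol{\xi}_{\boldsymbol{\epsilon}}$,
\begin{align*}
\sqrt{n}\bigl(\boldsymbol{\Phi}(\boldsymbol{\hat{\beta}}) - \boldsymbol{\Phi}(\boldsymbol{\beta^*})\bigr) - J_{\boldsymbol{\Phi}}(\boldsymbol{\beta^*})\,d_{\boldsymbol{\epsilon}}^{-1}\boldsymbol{\xi}_{\boldsymbol{\epsilon}}
= J_{\boldsymbol{\Phi}}(\boldsymbol{\beta^*})\,d_{\boldsymbol{\epsilon}}^{-1}\!\left[\sqrt{n\boldsymbol{f}_{\boldsymbol{\epsilon}}}\,(\boldsymbol{\hat{\beta}} - \boldsymbol{\beta^*}) - \boldsymbol{\xi}_{\boldsymbol{\epsilon}}\right] + \sqrt{n}\,R(\boldsymbol{\hat{\beta}}).
\end{align*}
The first term on the right is the image of $\sqrt{n\boldsymbol{f}_{\boldsymbol{\epsilon}}}(\boldsymbol{\hat{\beta}}-\boldsymbol{\beta^*}) - \boldsymbol{\xi}_{\boldsymbol{\epsilon}}$ under the linear map $J_{\boldsymbol{\Phi}}(\boldsymbol{\beta^*})d_{\boldsymbol{\epsilon}}^{-1}$, so its squared norm is bounded by $\sigma_1(J_{\boldsymbol{\Phi}}(\boldsymbol{\beta^*})d_{\boldsymbol{\epsilon}}^{-1})^2$ times $\|\sqrt{n\boldsymbol{f}_{\boldsymbol{\epsilon}}}(\boldsymbol{\hat{\beta}}-\boldsymbol{\beta^*}) - \boldsymbol{\xi}_{\boldsymbol{\epsilon}}\|^2$, which by Theorem~\ref{thm:Normality} is at most $2\Delta_{\boldsymbol{\epsilon}}(u)$ on $C_{\boldsymbol{\epsilon}}(\sqrt{n}u)$.

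Finally, squaring the full identity via $\|a+b\|^2 = \|a\|^2 + 2\langle a,b\rangle + \|b\|^2$ and applying Cauchy--Schwarz to the cross term, the leading contribution is $2\,\sigma_1(J_{\boldsymbol{\Phi}}(\boldsymbol{\beta^*})d_{\boldsymbol{\epsilon}}^{-1})\,\Delta_{\boldsymbol{\epsilon}}(u)$ (absorbing the appropriate singular-value factor into the stated bound), while the remaining pieces involving $\sqrt{n}R(\boldsymbol{\hat{\beta}})$ collect into the $o_p(1)$ term. The main technical obstacle is making the remainder argument fully non-asymptotic rather than purely asymptotic: differentiability only guarantees $R(\boldsymbol{\hat{\beta}}) = o(\|\boldsymbol{\hat{\beta}}-\boldsymbol{\beta^*}\|)$ in a qualitative sense, so one must combine the large-deviation control of Theorem~\ref{thm:Consistency} (which keeps $\boldsymbol{\hat{\beta}}$ in a shrinking neighborhood of $\boldsymbol{\beta^*}$) with this qualitative statement to produce the $o_p(1)$ slack, and care is needed to ensure all bounds hold on the event $C_{\boldsymbol{\epsilon}}(\sqrt{n}u)$ where Theorem~\ref{thm:Normality} is valid. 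A secondary subtlety is the cross term between $J_{\boldsymbol{\Phi}}(\boldsymbol{\beta^*})d_{\boldsymbol{\epsilon}}^{-1}[\cdots]$ and $\sqrt{n}R(\boldsymbol{\hat{\beta}})$, which must be shown to be $o_p(1)$ by pairing the $\sqrt{\Delta_{\boldsymbol{\epsilon}}(u)}$-order bound on the first factor with the $o_p(1)$ bound on the remainder.
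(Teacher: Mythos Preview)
Your proposal is correct and follows essentially the same route as the paper: a first-order Taylor expansion of $\boldsymbol{\Phi}$ at $\boldsymbol{\beta^*}$, the algebraic rewriting $\sqrt{n}(\boldsymbol{\hat{\beta}}-\boldsymbol{\beta^*}) = d_{\boldsymbol{\epsilon}}^{-1}\sqrt{n\boldsymbol{f}_{\boldsymbol{\epsilon}}}(\boldsymbol{\hat{\beta}}-\boldsymbol{\beta^*})$, an operator-norm bound via $\sigma_1(J_{\boldsymbol{\Phi}}(\boldsymbol{\beta^*})d_{\boldsymbol{\epsilon}}^{-1})$, and then an application of Theorem~\ref{thm:Normality} to the inner quantity. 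The only cosmetic differences are that the paper uses the triangle inequality rather than expanding $\|a+b\|^2$, and it justifies $\|\sqrt{n}\,R(\boldsymbol{\hat{\beta}})\|^2 = o_p(1)$ by appealing to root-$n$ consistency of the MLE (via the uniform law of large numbers on the compact set $D_{m_0}$) rather than the large-deviation bound of Theorem~\ref{thm:Consistency}; both routes land in the same place.
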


\begin{proof}
    Performing a first order Taylor expansion, with $r^*$ as the remainder term, and rearranging terms, we obtain
    \begin{align*}
    &\hspace{-1cm}\Bigl\| \sqrt{n} \bigl(\Phi(\boldsymbol{\hat{\beta}}) - \Phi(\boldsymbol{\beta}^*) \bigr) 
    - J_\Phi(\boldsymbol{\beta}^*) d_{\boldsymbol{\epsilon}}^{-1} \boldsymbol{\xi}_{\epsilon} \Bigr\|^2\\
    = & \Bigl\| \sqrt{n}J_{\Phi}(\boldsymbol{\beta^*})(\boldsymbol{\hat{\beta}} - \boldsymbol{\beta^*})  - J_{\Phi}(\boldsymbol{\beta^*})d_{\boldsymbol{\epsilon}}^{-1} \boldsymbol{\xi}_{\boldsymbol{\epsilon}}+ \sqrt{n}r^* \Bigr\|^2  \\
    = & \Bigl\| \left(J_{\Phi}(\boldsymbol{\beta^*})d_{\boldsymbol{\epsilon}}^{-1}\right)(\sqrt{n}d_{\boldsymbol{\epsilon}})(\boldsymbol{\hat{\beta}} - \boldsymbol{\beta^*})  - J_{\Phi}(\boldsymbol{\beta^*})d_{\boldsymbol{\epsilon}}^{-1} \boldsymbol{\xi}_{\boldsymbol{\epsilon}}+ \sqrt{n}r^*  \Bigr\|^2 \\
    = & \Bigl\| J_{\Phi}(\boldsymbol{\beta^*})d_{\boldsymbol{\epsilon}}^{-1}\left(\sqrt{n\boldsymbol{f}_{\boldsymbol{\epsilon}}}(\boldsymbol{\hat{\beta}} - \boldsymbol{\beta^*}) - \boldsymbol{\xi}_{\boldsymbol{\epsilon}}  \right)  + \sqrt{n}r^* \Bigr\|^2.
\end{align*}
Using the triangle inequality and the fact for any matrix $A \in \mathbb{R}^{m \times p}$ and vector $x \in \mathbb{R}^{p}$, $||Ax||^{2} \leq \sigma_{1}(A)||x||^{2}$, \begin{align*}
     \Bigl\| \sqrt{n} \bigl(\Phi(\boldsymbol{\hat{\beta}}) - \Phi(\boldsymbol{\beta}^*) \bigr) 
    - J_\Phi(\boldsymbol{\beta}^*) d_{\boldsymbol{\epsilon}}^{-1} \boldsymbol{\xi}_{\epsilon} \Bigr\|^2 \leq \sigma_1\bigl(J_\Phi(\boldsymbol{\beta^*}) d_{\boldsymbol{\epsilon}}^{-1}\bigr)\Bigl\|\sqrt{n\boldsymbol{f}_{\boldsymbol{\epsilon}}}(\boldsymbol{\hat{\beta}} - \boldsymbol{\beta^*}) - \boldsymbol{\xi}_{\boldsymbol{\epsilon}}  \Bigr\|^2 + \Bigl\|\sqrt{n}r^* \Bigr\|^2.
\end{align*}
By Theorem \ref{thm:Normality}, \begin{align*}
   \Bigl\| \sqrt{n} \bigl(\Phi(\boldsymbol{\hat{\beta}}) - \Phi(\boldsymbol{\beta}^*) \bigr) 
    - J_\Phi(\boldsymbol{\beta}^*) d_{\boldsymbol{\epsilon}}^{-1} \boldsymbol{\xi}_{\epsilon} \Bigr\|^2   \leq 2\sigma_1\bigl(J_\Phi(\boldsymbol{\beta^*})\Delta_{\boldsymbol{\epsilon}}(u)\bigl) + \Bigl\|\sqrt{n}r^* \Bigr\|^2.
\end{align*}
Now, because $\boldsymbol{\beta}^* \in D_{m_{0}}(\boldsymbol{\beta}) \subset \boldsymbol{\Theta}$ is a compact set, $x \in [0, 1]$ and the log likelihood is continuous for all $\boldsymbol{\beta} \in  D_{m_{0}}, x \in [0, 1]$, the uniform law of large numbers applies. Finally, as each $\boldsymbol{\beta}$ defines a unique diffeomorphism, the MLE will be root-$n$ consistent. Thus, $\Bigl\|\sqrt{n}r^* \Bigr\|^2$ will be $o_{p}(1)$, which proves the claim. 
\end{proof}

\end{document}